    \numberwithin{equation}{section}
\def\eqalign#1{\null\vcenter{\def\\{\cr}\openup\jot\m@th
  \ialign{\strut$\displaystyle{##}$\hfil&$\displaystyle{{}##}$\hfil
      \crcr#1\crcr}}\,}
\def\beq{\begin{equation}}
\def\eeq{\end{equation}}
\newcommand{\be}{\begin{equation}}
\newcommand{\ee}{\end{equation}}
    \def\ds{\displaystyle}
    \def\Re{{\rm Re \,}}
    \def\Im{{\rm Im \,}}
    \def\Ai{{\rm Ai \,}}
    \def\P2n{{\rm P}_{{\rm II}}^{(n)}}
    \def\R{\mathbb{R}}
    \newtheorem{theorem}{Theorem}[section]
    \newtheorem{lemma}[theorem]{Lemma}
    \newtheorem{corollary}[theorem]{Corollary}
    \newtheorem{proposition}[theorem]{Proposition}
    \newtheorem{Definition}[theorem]{Definition}
    \newtheorem{Remark}[theorem]{Remark}
    \newenvironment{remark}{\begin{Remark}\rm}{\end{Remark}}
    \newtheorem{Example}[theorem]{Example}
    \newtheorem{Assumptions}[theorem]{Assumptions}
    \DeclareMathOperator*{\Tr}{Tr}
\date{}                     
\begin{document}
\title{A Riemann-Hilbert approach to the lower tail of the KPZ equation}
\author[1]{Mattia Cafasso}
\author[2]{Tom Claeys}
\affil[1]{\textit{LAREMA, UMR 6093, UNIV Angers, CNRS, SFR Math-STIC, France;} \texttt{cafasso@math.univ-angers.fr}}
\affil[2]{\textit{Institut de Recherche en Math\'ematique et Physique,  UCLouvain, Chemin du Cyclotron 2, B-1348 Louvain-La-Neuve, Belgium;} \texttt{tom.claeys@uclouvain.be}}

\maketitle

\begin{abstract}
Fredholm determinants associated to deformations of the Airy kernel are closely connected to the solution to the Kardar-Parisi-Zhang (KPZ) equation with narrow wedge initial data, and they also appear as largest particle distribution in models of positive-temperature free fermions.

We show that logarithmic derivatives of the Fredholm determinants can be expressed in terms of a $2\times 2$ Riemann-Hilbert problem, and we use this to derive asymptotics for the Fredholm determinants. As an application of our result, we derive precise lower tail asymptotics for the solution of the KPZ equation with narrow wedge initial data, refining recent results by Corwin and Ghosal \cite{CorwinGhosal}.\\

\end{abstract}

\tableofcontents

\newpage

\section{Introduction}

The KPZ equation is a stochastic PDE, introduced by Kardar, Parisi, and Zhang in 1986 \cite{KPZ}, which serves as a universal model for interface growth and which has a variety of physical applications, including liquid cristals, coffee stains, burning fronts, and bacterial colony growth, see e.g.\ \cite{bacterialcol1, burning, crystal, bacterialcol2, coffee} and references in the review article \cite{HHT}.
The equation is given by
\begin{equation}\label{eq:KPZ}
\partial_T\mathcal H(T,X)=\frac{1}{2}\partial_X^2\mathcal H(T,X)+\frac{1}{2}(\partial_X\mathcal H(T,X))^2+\xi(T,X),
\end{equation}
where $\xi(T,X)$ is space-time white noise. While it is not at all obvious what it means mathematically to be a solution of this equation, Bertini and Giacomin \cite{BertiniGiacomin} gave a precise meaning to this via the Cole-Hopf transformation $H(T,X)=\log Z(T,X)$, which relates the KPZ equation to the stochastic heat equation 
\[\partial_T Z(T,X)=\frac{1}{2}\partial_X^2Z(T,X)+Z(T,X)\xi(T,X).\] 
More recently, Hairer \cite{Hairer} extended this notion of a KPZ solution without passing through the Cole-Hopf transformation.
One of the physically relevant solutions of the KPZ equations, which can be defined using the Cole-Hopf transformation, is the one with {\em narrow wedge initial data}, formally given by
\[\mathcal H(0,X)=\log Z(0,X)\ \mbox{with}\ Z(0,X)=\delta_{X=0}.\]
We refer to \cite{Corwin, Corwin2} for more details about this solution and to \cite{Quastel} for a general discussion on Cole-Hopf solutions of KPZ.
It was proved by Amir, Corwin, and Quastel in 2010 \cite{AmirCorwinQuastel} that
the probability distribution of this solution can be characterized in terms of the Fredholm determinant of a deformed Airy kernel. 
Around the same time, Sasamoto and Spohn \cite{SasamotoSpohn} independently obtained the same result using similar methods but without rigorously verifying certain steps, and Dotsenko \cite{Dotsenko} and Calabrese, Le Doussal, and Rosso \cite{CLDR} predicted and confirmed this using non-rigorous methods.
A key element in the proofs of \cite{AmirCorwinQuastel, SasamotoSpohn} was the relation between the Cole-Hopf solutions of the KPZ equation and asymmetric exclusion processes, for which similar connections with Airy kernel determinants had been established \cite{Johansson1, TW-ASEP}.

\medskip

The Airy point process or Airy ensemble \cite{PraehoferSpohn, Soshnikov} is a determinantal point process on the real line which describes the largest particles in a wide class of random matrix ensembles and other interacting particle processes. Correlation functions in this process are expressed as determinants of the Airy kernel
\begin{equation}\label{Airykernel}
	K^\Ai(u, v) = \frac{ \Ai(u) \Ai'(v) - \Ai'(u) \Ai(v) }{ u - v },
\end{equation}
 where $\Ai$ denotes the Airy function.
The Airy point process has almost surely an infinite number of particles and a largest particle. Denoting $\zeta_1\geq\zeta_2\geq\cdots$ for the ordered random points, for any test function $\psi:\mathbb R\to\mathbb R$ which is such that 
the integral operator with kernel $\psi(x)K^\Ai(x,y)$ is trace-class, we have the identity
 \begin{equation}\label{DPPFredholm}
 \mathbb E_\Ai\left[\prod_{j=1}^\infty(1-\psi(\zeta_j))\right]=\det(1-\psi K^\Ai),
 \end{equation}
where $\det$ at the right hand side denotes the Fredholm determinant associated to the integral operator, and $\mathbb E_\Ai$ denotes the expectation with respect to the Airy point process.
In particular, the largest particle distribution is the Tracy-Widom distribution which can be expressed as the Fredholm determinant
$F^{\rm TW}(x):=\det\left(1-{\mathbf 1}_{(x,+\infty)}K^\Ai\right)$ of the Airy kernel operator acting on $L^2(x,+\infty)$.\\

In order to study the probability distribution of the KPZ solution with narrow wedge initial data $\mathcal H(T,X)$, it suffices to study $\mathcal H(T,0)$, since the probability distribution of $\mathcal H(T,X)-\frac{X^2}{2T}$ is independent of $X$ (see Theorem 1.1 in \cite{Corwin}).
For our purposes, it will be convenient to consider the re-scaled random variable
\begin{equation}\label{Upsilon}\Upsilon_T=\frac{\mathcal H(2T,0)+\frac{T}{12}}{T^{1/3}}.\end{equation}
Borodin and Gorin \cite{BorodinGorin} proved that the results from \cite{AmirCorwinQuastel} imply the identity
\begin{equation}\label{eq:BorodinGorin}
Q(s,T):=	\mathbb E_{\mathrm{KPZ}}\left[{\rm e}^{-{\rm e}^{T^{1/3}(\Upsilon_T+s)}}\right]=\mathbb E_\Ai\left[\prod_{j=1}^\infty\frac{1}{1+{\rm e}^{T^{1/3}(\zeta_j+s)}}\right],\end{equation} 
where $\zeta_j$, $j\in\mathbb N$, are the random points in the Airy point process, the expectation in the middle part is with respect to the KPZ equation, and  the expectation at the right is with respect to the Airy point process. 
$Q(s,T)$ can be interpreted as the Laplace transform of the associated solution to the stochastic heat equation.
As a direct consequence of \eqref{eq:BorodinGorin}, we observe that 
$Q(s,T)$ is the probability that there are no particles in the {\em thinned Airy point process} obtained by removing each particle $\zeta_k$, $k\in\mathbb N$, in the Airy point process independently with position-dependent probability $\left(1+{\rm e}^{T^{1/3}(\zeta_k+s)}\right)^{-1}$, i.e.\ the removal probability is close to $0$ for large positive values of $\zeta_j$, and close to $1$ for large negative values of $\zeta_j$.

Since the limiting density of the Airy point process is given by a square root law, in the sense that \cite{Soshnikov}
\begin{equation}
\label{eq:Airydensity}
\mathbb E_\Ai \Big[ \# \big\{j\in\mathbb N:\zeta_j>-r\big\} \Big] \sim\int_{-r}^{0} \frac{|\rho|^{1/2}}{\pi}d\rho=\frac{2}{3\pi}r^{3/2},\qquad\mbox{as $r\to\infty$,}
\end{equation}
one could intuitively expect that $\log Q(s,T)$ can be approximated for large $s$ by
\begin{multline*}\log Q(s,T) = \log \mathbb E_\Ai \left[ {\rm e}^{-\sum_{j=1}^\infty \log\left(1+{\rm e}^{T^{1/3}(\zeta_j+s)}\right)} \right]\\
\approx -\frac{1}{\pi}\int_{-\infty}^0\log\left(1+{\rm e}^{T^{1/3}(\zeta+s)}\right)
	|\zeta|^{1/2}d\zeta=-\frac{2T^{1/3}}{3\pi} \int_{-\infty}^0 \frac{1}{1 + {\rm e}^{-T^{1/3}(\zeta + s)}} |\zeta|^{3/2} d\zeta,	
	\end{multline*}
	where we used integration by parts in the last step. For large $T$, $\frac{1}{1 + {\rm e}^{-T^{1/3}(\zeta + s)}}$ approximates the step function ${\bf 1}_{(-s, +\infty)}(\zeta)$, and this suggests that $Q(s,T)$ can be approximated for large $s,T$ by
\begin{equation}\label{eq:naiveestimate}
\log Q(s,T)\approx -\frac{2T^{1/3}}{3\pi} \int_{-s}^{0}|\zeta|^{3/2} d\zeta =-\frac{4}{15\pi}T^{1/3}s^{5/2}.
\end{equation}
This heuristic picture appears to be accurate when $s$ is large compared to $T$ (see Remark \ref{rem:cases} below), but not in general when $s, T\to \infty$.\\

From \eqref{DPPFredholm}, it follows that $Q(s,T)$ is a Fredholm determinant:
\begin{equation}\label{eq:Fredholmid1}
	Q(s,T) =\det\Big(1-\sigma(T^{1/3}(x+s)) K^\Ai(x,y)\Big),\qquad \sigma(r) :=\frac{1}{1+{\rm e}^{-r}}.
\end{equation}
As $\sigma(T^{1/3}(x+s))$ approximates the step function ${\bf 1}_{(-s,\infty)}(x)$ as $T\to \infty$, this indicates that
the deformed Airy kernel Fredholm determinant in \eqref{eq:Fredholmid1} converges to the Tracy-Widom distribution $F^{\rm TW}(-s)$ as $T\to\infty$, which was indeed proved in \cite{AmirCorwinQuastel, Johansson}. 
There is a second Fredholm determinant representation for $Q(s,T)$, given by
\begin{equation}
\label{eq:Fredholmid2}
Q(s,T)=\det(1-K^\Ai_{T})_{L^2(-s,+\infty)},
\end{equation}
where $K^\Ai_T$ is the integral operator associated to the deformed Airy kernel
\begin{equation}\label{finiteTAirykernel2}
K^\Ai_{T}(u,v)=\int_{-\infty}^{+\infty}  \sigma(T^{1/3}r)\Ai(u+r)\Ai(v+r) dr.
\end{equation}
The equivalence between \eqref{eq:Fredholmid1} and \eqref{eq:Fredholmid2} is shown e.g.\ in \cite[Proof of Proposition 5.1]{AmirCorwinQuastel} for a general class of probability distributions $\sigma$.

The deformed Airy kernel \eqref{finiteTAirykernel2}
appeared also in a model of finite temperature free fermions, which is equivalent to the MNS matrix model. This model exhibits a critical scaling limit when the temperature tends to infinity together with the matrix dimension, and the limit distribution of the largest particle in the model then converges to $\det(1-K^\Ai_T)$ \cite{DLMS, Johansson, LiechtyWang}. {The same kernel appears also in the study of the edge scaling limit of the periodic Schur process \cite{BeteaBouttier}.}
Determinants and Pfaffians of a similar nature as $Q(s,T)$ appear in other models of statistical physics as well, like the stochastic six-vertex model, half-space KPZ, and asymmetric exclusion processes \cite{TW-ASEP, BBCW, BaikRains}.

\medskip

The asymptotic analysis of Fredholm determinants $\det(1-K)$ is rather standard in cases where the operator $K$ has small norm. In our setting, this is the case for $Q(s,T)$ as $s\to -\infty$, which encodes information about the upper tail of the KPZ solution.
As $s\to +\infty$, the norm of the deformed Airy kernel operator is not small, and we need more sophisticated methods to analyze $Q(s,T)$ and the associated lower tail of the KPZ equation asymptotically. 
A method developed by Its, Izergin, Korepin, and Slavnov \cite{IIKS} allows one to express logarithmic derivatives of Fredholm determinants $\det(1-K)$ of integral operators with kernels of {\em integrable form}
\begin{equation}\label{intkernel}
	K(x,y) := \frac{f^t(x) h(y)}{x-y},
\end{equation}
where $f, h$ are $k$-dimensional column vectors such that $f^t(x)h(x)=0$, in terms of a matrix Riemann-Hilbert (RH) problem of size $k\times k$.
If one manages to compute asymptotics for the solution of this RH problem, which can potentially be done using Deift-Zhou steepest descent methods \cite{DZ}, this provides a route towards asymptotics for the associated Fredholm determinants. 
This method has been implemented successfully for the determinant of the Airy kernel multiplied with piecewise constant functions \cite{DIK, BB, CharlierClaeys}, but the effectiveness of the method depends heavily on the situation at hand, especially when $k>2$.
It was observed in several works (see e.g. \cite[Section 2.1]{CorwinGhosal}, \cite[p2]{Tsai}, \cite[p2]{CGKLDT}) that the kernel \eqref{finiteTAirykernel2} is a limiting case of an integrable kernel with $k\to\infty$, which could lead to a representation of logarithmic derivatives in terms of an operator-valued RH problem instead of 
a matrix-valued one. Although there has been recent progress on such RH problems \cite{ItsKozlowski, ItsKozlowski2, Bothner}, extracting explicit asymptotics for \eqref{eq:Fredholmid2} as $s\to \infty$ from an operator-valued RH problem is currently beyond the state of the art of RH techniques.
Surprisingly, it apparently remained unnoticed until now that the kernel $\sigma(T^{1/3}(x+s))K^\Ai(x,y)$ from \eqref{eq:Fredholmid1} has a much more convenient form: it is also of integrable type, but now with $k=2$.
Indeed, if we set for instance
\begin{equation}\label{def:fgAiry}
	f(x) := \begin{pmatrix} -i\sigma(T^{1/3}(x+s))  \Ai'(x) \\ \sigma(T^{1/3}(x+s)) \Ai(x) \end{pmatrix} \quad h(x) := \begin{pmatrix} -i\Ai(x) \\ \Ai'(x) \end{pmatrix},
\end{equation}

then $\sigma(T^{1/3}(x+s))K^\Ai(x,y)$ is equal to \eqref{intkernel}.
Based on this observation, we will develop a $2\times 2$ RH approach for the determinant $Q(s,T)$. The RH problem related to this kernel does not have constant jumps, and for this reason we cannot relate it directly to an isomonodromic deformation of a linear system (Lax pair), like a Painlev\'e equation. This is not so suprising, since it is known \cite{AmirCorwinQuastel} that the deformed Airy kernel Fredholm determinants $Q(s,T)$ are connected to integro-differential generalizations of the Painlev\'e II equation, for which no Lax pair is known to the best of our knowledge.
One can in fact deduce this integro-differential Painlev\'e II equation from the RH problem, as we will show in a subsequent publication.
Perhaps more importantly, our RH problem is suitable for asymptotic analysis as $s\to \infty$, and allows us to obtain detailed asymptotics for $Q(s,T)$, which are uniform in the domain $M^{-1} < T < Ms^{3/2}$, for any $M > 0$ (and valid in particular for any fixed $T>0$). This will also yield uniform asymptotics for the lower tail of the probability distribution of the random variable $\Upsilon_T$, i.e. the rescaled KPZ solution with narrow wedge initial data.

\subsection*{Statement of results}

In order to describe the large $s$ asymptotics of $Q(s,T)$, we define
\begin{equation}\label{def:phi}
\phi(y) := \frac{4}{15\pi^6}(1+\pi^2y)^{5/2}-\frac{4}{15\pi^6}-\frac{2}{3\pi^4}y-\frac{1}{2\pi^2}y^2.
\end{equation}
After identifying $y$ with $-\zeta$, this is the large deviation rate function appearing in \cite{CGKLDT, SasorovMeersonProlhac, Tsai}, and it is straightforward to check that this function describes a cross-over between 
cubic behavior $\frac{y^3}{12}$ as $y\to 0$, and $5/2$ power-law behavior $\frac{4}{15\pi}y^{5/2}$ as $y\to\infty$.

\begin{theorem}\label{theorem:main}
Let $Q(s,T)$ be defined by \eqref{eq:BorodinGorin}. For any $M>0$, the asymptotics 
\begin{equation}\label{eq:Qlowertail}
\log Q(s,T)=-T^2\phi(sT^{-2/3}){-\frac{1}6}\sqrt{1+\frac{\pi^2s}{T^{2/3}}} 
+\mathcal O(\log^2 s)+\mathcal O(T^{1/3})
\end{equation}
hold uniformly in $M^{-1}\leq T \leq Ms^{3/2}$ as $s\to +\infty$.
\end{theorem}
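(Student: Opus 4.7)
The plan is to carry out a Deift--Zhou steepest descent analysis of the $2\times 2$ Riemann--Hilbert (RH) problem associated, via the IIKS method, to the integrable kernel $\sigma(T^{1/3}(x+s))K^\Ai(x,y)$ through the pair $f,h$ in \eqref{def:fgAiry}, and then to integrate the resulting asymptotics in $s$ to recover $\log Q(s,T)$.

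First I would write down the IIKS problem: a $2\times 2$ matrix $Y(z)$, analytic off $\mathbb R$, with jump $Y_+(x)=Y_-(x)\bigl(I-2\pi i f(x)h(x)^t\bigr)$ on $\mathbb R$ and $Y(z)=I+\mathcal O(1/z)$ at infinity, and record the standard differential identity expressing $\partial_s\log Q(s,T)$ as an integral over $\mathbb R$ of an explicit quadratic form in the boundary values of $Y$. Since $f$ and $h$ are built from Airy functions, the next step is to strip off the Airy behavior: multiplying $Y$ by the standard $2\times 2$ Airy parametrix produces an equivalent problem $X(z)$ whose jump on $\mathbb R$ is an explicit matrix containing the Fermi factor $\sigma(T^{1/3}(x+s))$ and its complement, but no Airy functions.

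Next I would rescale by the natural variable $u=-x/T^{2/3}$ (equivalently, introduce the parameter $y=sT^{-2/3}$) and perform a $g$-function transformation. The appropriate $g$-function is determined by the equilibrium measure whose density interpolates between the square-root Airy density $|\zeta|^{1/2}/\pi$ of \eqref{eq:Airydensity} and the sharp cutoff imposed by the Fermi factor; the right endpoint of the effective support lies at $y_*=-1/\pi^2$, which explains the appearance of $(1+\pi^2 y)^{5/2}$ in \eqref{def:phi}. The classical action associated to this equilibrium problem yields the leading contribution $T^2\phi(sT^{-2/3})$. After the $g$-transformation, one opens lenses around the support and the resulting jumps are exponentially close to the identity away from a neighborhood of the right endpoint, where a local Airy-type parametrix built from model RH problems handles the transition region of width $T^{-1/3}$.

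A small-norm argument then yields uniform control on the residual RH problem, and substituting the expansion of $Y$ back into the differential identity produces an asymptotic formula for $\partial_s\log Q(s,T)$; the leading term integrates to $-T^2\phi(sT^{-2/3})$, while the subleading correction $-\tfrac{1}{6}\sqrt{1+\pi^2 s/T^{2/3}}$ emerges from the logarithmic derivative of the prefactor in the local parametrix together with the regularity of $\sigma$ at the effective edge. The integration constant in $s$ is fixed by matching against a regime where $Q(s,T)$ is already controlled, e.g.\ $s\to-\infty$ with $T$ fixed. The main obstacle is uniformity in $T$ across $M^{-1}\leq T\leq Ms^{3/2}$: the regime of fixed $T$ corresponds to $y\to\infty$ and must recover the $\tfrac{4}{15\pi}y^{5/2}$ Tracy--Widom-type tail, whereas $T\sim s^{3/2}$ keeps $y$ bounded and must connect to the cubic $y^3/12$ KPZ tail, so the equilibrium measure, its endpoint, and the size of the local parametrix disk all vary with $y$. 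Controlling the error term $\mathcal O(\log^2 s)+\mathcal O(T^{1/3})$ uniformly in this entire range, in particular ensuring that the logarithmic contributions from the smoothness of $\sigma$ across the edge do not spoil the small-$T$ end where the Airy parametrix lives in an effectively shrinking disk, is the delicate part of the argument.
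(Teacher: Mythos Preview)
Your overall strategy---IIKS formulation, stripping off the Airy parametrix, $g$-function, local Airy parametrix near the edge, small-norm conclusion---is indeed what the paper does, and the identification of $\phi$ with the equilibrium energy is correct. However, your integration scheme has a genuine gap. The RH asymptotics for $\partial_s\log Q(s,T)$ are valid only for $s$ large, uniformly in $M^{-1}\le T\le Ms^{3/2}$; you cannot integrate them down to $s\to-\infty$, because the expansion simply does not hold there. If instead you integrate from a fixed large $s_1$ to $s$, the constant of integration is $\log Q(s_1,T)$, and this must be controlled for all $T$ up to $Ms^{3/2}$. But $Ms^{3/2}$ can be much larger than $Ms_1^{3/2}$, so at $s=s_1$ the value of $T$ may already lie outside the uniformity window $T\le Ms_1^{3/2}$ required for the asymptotics to apply along the integration path. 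In short, integrating in $s$ alone cannot reach the whole region $M^{-1}\le T\le Ms^{3/2}$.

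The paper resolves this by computing asymptotics for \emph{both} partial derivatives and integrating along an L-shaped path in the $(s,T)$-plane: first $\partial_s\log Q(s,M^{-1})$ in $s$ from a fixed large $M'$ to $s_0$ at the \emph{fixed small} value $T=M^{-1}$ (where the constraint $T\le Ms^{3/2}$ is trivially satisfied for all $s\ge M'$), and then $\partial_T\log Q(s_0,T)$ in $T$ from $M^{-1}$ to $T_0$ at fixed $s=s_0$ (where $T\le T_0\le Ms_0^{3/2}$ throughout). The remaining constant $\log Q(M',M^{-1})$ is a fixed number absorbed into the error. This two-variable integration is not a cosmetic choice: it is what makes the uniformity in $T$ work. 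Incidentally, the paper's rescaling is $\zeta+s=s\lambda$ rather than your $u=-x/T^{2/3}$; this keeps the endpoint $\lambda_0$ in a fixed compact subinterval of $(0,1)$ uniformly in $T$, which is convenient for the local parametrix, and the subleading $-\tfrac{1}{6}\sqrt{1+\pi^2 s/T^{2/3}}$ arises not from the local parametrix prefactor but from the bulk integral via the identity $\int_{\mathbb R} u\bigl(\sigma(u)-\mathbf 1_{(0,\infty)}(u)\bigr)\,du=-\pi^2/6$.
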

The above results have straightforward implications for the lower tail of the solution to the KPZ equation with narrow wedge initial data.

\begin{corollary}
Let $\Upsilon_T$ be as defined in \eqref{Upsilon}.
There exist functions $A(s,T)$ and $B(s,T)$ such that 
\[A(s,T)\leq \log \mathbb P_{\mathrm{KPZ}}(\Upsilon_T<-s)\leq B(s,T),\]
and such that for any $\epsilon, M>0$, the asymptotics
\begin{equation}\label{eq:KPZlowertail1}
B(s,T)=-T^2\phi(sT^{-2/3}) {-\frac{1}6}\sqrt{1+\frac{\pi^2s}{T^{2/3}}}
+\mathcal O(\log^2 s)+\mathcal O(T^{1/3}) 
\end{equation}
and 
\begin{equation}\label{eq:KPZlowertail2}
A(s,T)=-T^2\phi(\tilde sT^{-2/3}){-\frac{1}6}\sqrt{1+\frac{\pi^2\tilde s}{T^{2/3}}}
+\mathcal O(\log^2 s)+\mathcal O(T^{1/3})
\end{equation}
hold uniformly in $M^{-1}\leq T\leq Ms^{3/2}$ as $s\to +\infty$, where we denoted
$\tilde s=s+(3+\epsilon)T^{-1/3}\log s$.
\end{corollary}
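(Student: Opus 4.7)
The starting point is the identity \eqref{eq:BorodinGorin}, which writes $Q(s,T)$ as an expectation of a decreasing function of $\Upsilon_T$. The idea is to compare $Q(s,T)$ and $Q(\tilde s,T)$ to $\mathbb P_{\mathrm{KPZ}}(\Upsilon_T<-s)$ by splitting the expectation according to whether $\Upsilon_T<-s$ or not, then to invoke Theorem \ref{theorem:main} to insert the explicit asymptotics. The lower bound $A(s,T)$ and upper bound $B(s,T)$ arise from the two inequalities one naturally obtains in this comparison.

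For the upper bound I would argue as follows. On the event $\{\Upsilon_T<-s\}$ we have $T^{1/3}(\Upsilon_T+s)<0$, hence $\exp(-\exp(T^{1/3}(\Upsilon_T+s)))\geq e^{-1}$. Restricting the expectation in \eqref{eq:BorodinGorin} to this event therefore gives
\[
Q(s,T)\geq e^{-1}\,\mathbb P_{\mathrm{KPZ}}(\Upsilon_T<-s),
\]
so we may take $B(s,T)=\log Q(s,T)+1$. The stated asymptotics \eqref{eq:KPZlowertail1} then follow directly from Theorem \ref{theorem:main}, since the constant $+1$ is absorbed in the $\mathcal O(T^{1/3})$ (and $\mathcal O(\log^2 s)$) error.

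For the lower bound I would shift $s\mapsto \tilde s=s+(3+\epsilon)T^{-1/3}\log s$. On the complementary event $\{\Upsilon_T\geq -s\}$ we get $T^{1/3}(\Upsilon_T+\tilde s)\geq (3+\epsilon)\log s$, hence the integrand in \eqref{eq:BorodinGorin} evaluated at $\tilde s$ is bounded above by $\exp(-s^{3+\epsilon})$. Splitting the expectation yields
\[
Q(\tilde s,T)\leq \mathbb P_{\mathrm{KPZ}}(\Upsilon_T<-s)+\exp(-s^{3+\epsilon}),
\]
so $\mathbb P_{\mathrm{KPZ}}(\Upsilon_T<-s)\geq Q(\tilde s,T)-\exp(-s^{3+\epsilon})$ defines the required $A(s,T)$ through its logarithm. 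To convert this into the claimed asymptotics \eqref{eq:KPZlowertail2}, I need to verify that $Q(\tilde s,T)$ dominates $\exp(-s^{3+\epsilon})$. Using Theorem \ref{theorem:main} and the two asymptotic regimes of $\phi$, the leading term $-T^2\phi(\tilde s T^{-2/3})$ behaves, throughout the range $M^{-1}\leq T\leq M\tilde s^{3/2}$, as at worst $\mathcal O(s^3)$ (coming from the cubic regime of $\phi$ when $T$ is of order $s^{3/2}$) and as $\mathcal O(T^{1/3}s^{5/2})$ in the opposite regime; in either case it is $o(s^{3+\epsilon})$ as $s\to\infty$. Hence $\exp(-s^{3+\epsilon})=o(Q(\tilde s,T))$ and $\log\mathbb P_{\mathrm{KPZ}}(\Upsilon_T<-s)=\log Q(\tilde s,T)+o(1)$.

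The remaining step is routine: one notes that $\tilde s\in[s,2s]$ for $s$ large, so $\tilde s$ stays in the range of validity of Theorem \ref{theorem:main} and the error terms $\mathcal O(\log^2\tilde s)$, $\mathcal O(T^{1/3})$ are of the same orders $\mathcal O(\log^2 s)$, $\mathcal O(T^{1/3})$. The one place requiring care, and the main technical point of the proof, is the verification that the $5/2$ and $3$ power behaviors of $T^2\phi(\tilde s T^{-2/3})$ throughout $M^{-1}\leq T\leq Ms^{3/2}$ are indeed beaten by $s^{3+\epsilon}$ uniformly in $T$; once this is checked, the assembly of the bounds above is immediate.
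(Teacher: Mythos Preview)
Your argument is correct and follows essentially the same route as the paper: the upper bound comes from restricting the expectation in \eqref{eq:BorodinGorin} to $\{\Upsilon_T<-s\}$ (equivalently, Markov's inequality), and the lower bound from the same shift $s\mapsto\tilde s$ and the same comparison $Q(\tilde s,T)\leq \mathbb P_{\mathrm{KPZ}}(\Upsilon_T<-s)+e^{-s^{3+\epsilon}}$. Your verification that $T^2\phi(\tilde sT^{-2/3})=\mathcal O(s^3)$ uniformly in $M^{-1}\leq T\leq Ms^{3/2}$, and hence that $e^{-s^{3+\epsilon}}=o(Q(\tilde s,T))$, is exactly the ``main technical point'' the paper handles in one line; your treatment is in fact a bit more explicit.
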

\begin{proof}
\begin{enumerate}
\item
Using Markov's inequality, we have
\begin{equation}\label{upperbound}
\mathbb P_{\mathrm{KPZ}}(\Upsilon_T\leq -s)=\mathbb P_{\mathrm{KPZ}}\left({\rm e}^{-{\rm e}^{T^{1/3}(\Upsilon_T+s)}}\geq{\rm e}^{-1}\right)\leq {\rm e}
\mathbb E_{\mathrm{KPZ}}\left[{\rm e}^{-{\rm e}^{T^{1/3}(\Upsilon_T+s)}}\right]
={\rm e} Q(s,T).
\end{equation} It follows that 
$$\log \mathbb P_{\mathrm{KPZ}}(\Upsilon_T\leq -s)\leq \log Q(s,T)+1,$$
and \eqref{eq:KPZlowertail1} now follows directly from Theorem \ref{theorem:main}.
\item
The lower bound \eqref{eq:KPZlowertail2} is obtained in a slightly different way, as in \cite[Section 3.1]{CorwinGhosal}: for any $s,\tilde s\in\mathbb R$, we have
\[Q(\tilde s,T)=\mathbb E_{\mathrm{KPZ}}\left[{\rm e}^{-{\rm e}^{T^{1/3}(\Upsilon_T+\tilde s)}}\right]\leq \mathbb E_{\mathrm{KPZ}}\left[{\bf 1}_{\{\Upsilon_T+s\leq 0\}}+{\bf 1}_{\{\Upsilon_T+s> 0\}}{\rm e}^{-{\rm e}^{T^{1/3}(\tilde s-s)}}\right]\]
and this implies
\begin{equation}\label{lowerbound}\mathbb P_{\mathrm{KPZ}}(\Upsilon_T\leq -s)\geq Q(\tilde s,T)-{\rm e}^{-{\rm e}^{T^{1/3}(\tilde s-s)}}.\end{equation}
Now we choose $\tilde s=s+(3+\epsilon)T^{-1/3}\log s$ for $\epsilon>0$ arbitrary small, such that the second term is $\mathcal O({\rm e}^{-s^{3+\epsilon}})$.
If we substitute the asymptotics obtained in Theorem \ref{theorem:main} for $Q(\tilde s,T)$ as $\tilde s\to +\infty$, we see that this term is bigger than ${\rm e}^{-cs^3}$ for some $c>0$ and for $s$ sufficiently large, so that this term dominates the 
term of order $\mathcal O({\rm e}^{-s^{3+\epsilon}})$. In other words, we have
\[\mathbb P_{\mathrm{KPZ}}(\Upsilon_T\leq -s)\geq 2Q(\tilde s,T),\]
for $s$ sufficiently large,
and the lower bound \eqref{eq:KPZlowertail2} follows after taking logarithms.
\end{enumerate}
\end{proof}

\begin{remark}\label{rem:cases}
Let us look in more detail at some particular cases.
\begin{enumerate}
\item For any fixed $T>0$, the above results imply
\begin{align*}&\log Q(s,T)=-\frac{4 T^{1/3}s^{5/2}}{15 \pi}+\frac{T^{2/3}s^2}{2\pi^2}
-\frac{2Ts^{3/2}}{3 \pi^3} +\frac{2T^{4/3}s}{3\pi^4}
{-\frac{\pi s^{1/2}}{6 T^{1/3}}} -\frac{1}{2\pi^5}T^{5/3}s^{1/2}
+\mathcal O(\log^2 s),\\
&\log \mathbb P_{\mathrm{KPZ}}(\Upsilon_T<-s)=-\frac{4 T^{1/3}s^{5/2}}{15 \pi}+\frac{T^{2/3}s^2}{2\pi^2}+\mathcal O(s^{3/2}\log s)
\end{align*}
as $s\to\infty$. The leading order term is consistent with the heuristic prediction \eqref{eq:naiveestimate}.
\item As $s,T\to\infty$ in such a way that $sT^{-2/3}\to\infty$, we have
\begin{align*}&\log Q(s,T)=-\frac{4}{15 \pi} T^{1/3}s^{5/2}+\frac{1}{2\pi^2}T^{2/3}s^2
-\frac{2}{3 \pi^3} Ts^{3/2}+\frac{2}{3\pi^4}T^{4/3}s\\&\qquad
-\frac{1}{2\pi^5}T^{5/3}s^{1/2}
+\frac{4}{15 \pi^6} T^2
{-\frac{\pi}{6 T^{1/3}}s^{1/2}} +\mathcal O(T^{7/3}s^{-1/2})+\mathcal O(\log^2 s)+\mathcal O(T^{1/3}),
\\& \log \mathbb P_{\mathrm{KPZ}}(\Upsilon_T<-s)=-\frac{4}{15 \pi} T^{1/3}s^{5/2}+\frac{1}{2\pi^2}T^{2/3}s^2+\mathcal O(s^{3/2}\log s).
\end{align*}
This improves upper and lower bounds obtained by Corwin and Ghosal \cite{CorwinGhosal}.
\item Setting $y=sT^{-2/3}$, we have for any fixed $y\in(0,+\infty)$ that
\[\log Q(y T^{2/3},T)=-T^2\phi(y)
+\mathcal O(s^{1/2}),\quad 
\log \mathbb P_{\mathrm{KPZ}}(\Upsilon_T<-s)=-T^2\phi(y)+\mathcal O(s^{3/2}\log s),
\]
which strengthens the recently obtained large deviation results by Tsai \cite{Tsai} (proving predictions from \cite{SasorovMeersonProlhac, CGKLDT}) by quantifying the error term.
\item Our results do not cover the regime where $s,T\to\infty$ in such a way that $sT^{-2/3}\to 0$. However, we can formally let $sT^{-2/3}\to 0$ at the right hand side of \eqref{eq:Qlowertail}, which yields
$-\frac{s^3}{12}+o(s^3),$
which is coherent with the lower tail expansion of the Tracy-Widom distribution given by \cite{DIK, BBdF}
\[{\log}F^{\rm TW}(-s)={\log}Q(s,\infty)=-\frac{s^3}{12}-\frac{1}{8}\log s+\frac{1}{24}\log 2 +\zeta'(-1)+o(1), \qquad \mbox{as $s\to\infty$,}\]
(where $\zeta$ is the Riemann $\zeta$-function)
and with the upper and lower bounds from \cite{CorwinGhosal}.
\end{enumerate}
\end{remark}

The rest of the paper is organized as follows. 
In the second section we express $Q(s,T)$ in terms of a $2\times 2$ RH problem, resulting in Theorem \ref{diffidentity}. Section 3 is devoted to the construction of the $g$-function, an essential ingredient for the non--linear steepest descent analysis of the RH problem, which is realized in Section 4. Finally, in Section 5, we use the results of the second and fourth section to prove Theorem \ref{theorem:main}.

\section{RH characterization of $Q(s,T)$}

In this section, we will express the logarithmic partial derivatives $\partial_T\log Q(s,T)$ and $\partial_s\log Q(s,T)$, with $Q(s,T)$ defined in equation \eqref{eq:Fredholmid1}, in terms of the solution of the following $2\times 2$ RH problem, which depends on parameters $s, \zeta_0\in\mathbb R$ and $T > 0$.
\begin{figure}[t]
\begin{center}
    \setlength{\unitlength}{0.6truemm}
    \begin{picture}(100,98.5)(0,2.5)

    \put(50,50){\thicklines\circle*{2}}
    \put(51,52){\small $\zeta_0$}
    \put(50,0){\line(0,1){100}}
    \put(0,50){\line(1,0){100}}
    \put(27,50){\thicklines\vector(1,0){.0001}}
    \put(77,50){\thicklines\vector(1,0){.0001}}
    \put(50,25){\thicklines\vector(0,1){.0001}}
      \put(50,75){\thicklines\vector(0,-1){.0001}}

    \put(52,10){\small $\begin{pmatrix} 1 & 0 \\ (1-\sigma(T^{1/3}(\zeta+s)))^{-1} & 1 \end{pmatrix}$}
    \put(52,85){\small $\begin{pmatrix} 1 & 0 \\ (1-\sigma(T^{1/3}(\zeta+s)))^{-1} & 1 \end{pmatrix}$}
\put(-70,57){\tiny $\begin{pmatrix} 0 & 1-\sigma(T^{1/3}(\zeta+s)) \\ -(1-\sigma(T^{1/3}(\zeta+s)))^{-1} & 0 \end{pmatrix}$}
    \put(100,48){\small $\begin{pmatrix} 1 & 1-\sigma(T^{1/3}(\zeta+s)) \\ 0 & 1 \end{pmatrix}$}    
    \end{picture}
    \caption{The jump contour $\Gamma$ and the jump matrices for $\Psi$.}
    \label{fig:contourPsi}
\end{center}
\end{figure}
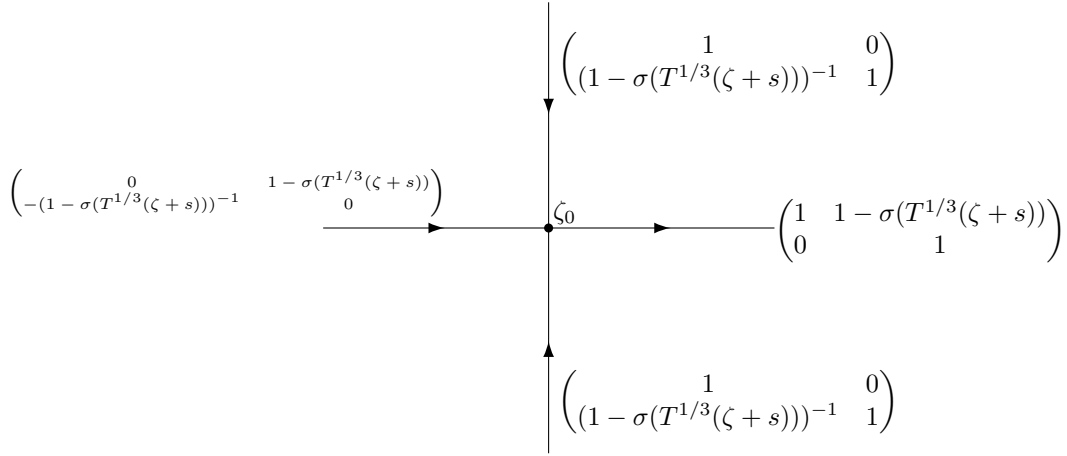
\subsubsection*{RH problem for $\Psi$}\label{RHPsi}
\begin{itemize}
\item[(a)] $\Psi : \mathbb C \backslash \Gamma \rightarrow \mathbb C^{2\times 2}$ is analytic, with
\begin{equation}\label{eq:defGamma}
\Gamma=\zeta_0 + (i\R \cup \R), 
\end{equation}
where the real line is oriented from left to right, and the two rays $\zeta_0+i\mathbb \R^\pm$ are oriented pointing to $\zeta_0$, see Figure \ref{fig:contourPsi}.
\item[(b)] $\Psi(\zeta)$ has continuous boundary values as $\zeta\in\Gamma\backslash \{\zeta_0\}$ is approached from the left ($+$ side) or from the right ($-$ side) and they are related by
\begin{equation}\label{jumpPsi}
\begin{array}{ll}
\Psi_+(\zeta) = \Psi_-(\zeta) \begin{pmatrix} 1 & 0 \\ (1-\sigma(T^{1/3}(\zeta+s)))^{-1} & 1 \end{pmatrix} & \textrm{for } \zeta \in   \zeta_0 + i\R^\pm,\\
\Psi_+(\zeta) = \Psi_-(\zeta) \begin{pmatrix} 0 & 1-\sigma(T^{1/3}(\zeta+s)) \\ -(1-\sigma(T^{1/3}(\zeta+s)))^{-1} & 0 \end{pmatrix} & \textrm{for } \zeta <\zeta_0, \\
\Psi_+(\zeta) = \Psi_-(\zeta) \begin{pmatrix} 1 & 1-\sigma(T^{1/3}(\zeta+s)) \\ 0 & 1 \end{pmatrix} & \textrm{for } \zeta>\zeta_0.
\end{array}
\end{equation}
\item[(c)] As $\zeta \rightarrow \infty$, $\Psi$ has the asymptotic behavior
\begin{equation}
\label{eq:psiasympinf}
\Psi(\zeta) = \left( I + \mathcal O\left(\frac{1}{\zeta}\right) \right) \zeta^{\frac{1}{4} \sigma_3} A^{-1} {\rm e}^{-\frac{2}{3}\zeta^{3/2}\sigma_3},
\end{equation}
where $A = (I + i \sigma_1) / \sqrt{2}$, $\sigma_1 = \begin{pmatrix} 0 & 1 \\ 1 & 0 \end{pmatrix}$, $\sigma_3 = \begin{pmatrix} 1 & 0 \\ 0 & -1 \end{pmatrix}$,
and where the principal branches of $\zeta^{\sigma_3/4}$ and $\zeta^{3/2}$ are taken, analytic in $\mathbb C\setminus(-\infty,0]$ and positive for $\zeta>0$.
\item[(d)] As $\zeta\to\zeta_0$, $\Psi(\zeta)$ remains bounded.
\end{itemize}

The main result in this section relates the logarithmic derivatives of $Q(s,T)$ with the solution $\Psi(\zeta)=\Psi(\zeta;s,T,\zeta_0)$ of the above RH problem. 
 \begin{theorem}\label{diffidentity}
The RH problem for $\Psi=\Psi(\zeta;s,T,\zeta_0)$ is uniquely solvable for any $s,\zeta_0\in\mathbb R$ and $T> 0$. Moreover, the following formulas hold true for any choice of $\zeta_0\in\mathbb R$:  
\begin{equation}\label{diffidentityeq}
	\partial_T\log Q(s,T)= -\frac{1}{6\pi iT^{2/3}} \int_{\mathbb R} (\zeta+s) \sigma'(T^{1/3}(\zeta+s))\left(\widehat\Psi^{-1}\widehat\Psi'\right)_{2,1}(\zeta;s,T,\zeta_0)d\zeta,
\end{equation} 
\begin{equation}\label{diffidentityeq_s}
		\partial_s\log Q(s,T)= -\frac{T^{1/3}}{2\pi i} \int_{\mathbb R} \sigma'(T^{1/3}(\zeta+s))\left(\widehat\Psi^{-1}\widehat\Psi'\right)_{2,1}(\zeta;s,T,\zeta_0)d\zeta,
	\end{equation} 
where $\widehat\Psi(\zeta)$ is given by
\begin{equation}\label{def:hatPsi}
\widehat\Psi(\zeta) :=\begin{cases}\Psi_+(\zeta),&\mbox{for $\zeta>\zeta_0$,}\\
\Psi_+(\zeta)\begin{pmatrix} 1 & 0 \\ (1-\sigma(T^{1/3}(\zeta+s)))^{-1} & 1 \end{pmatrix},&\mbox{for $\zeta<\zeta_0$,}\end{cases}
\end{equation}
and where $\widehat\Psi'$ is the $\zeta$-derivative of $\widehat\Psi$.
 \end{theorem}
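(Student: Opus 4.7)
The plan is to apply the Its-Izergin-Korepin-Slavnov (IIKS) method to the integrable kernel structure \eqref{intkernel}--\eqref{def:fgAiry} underlying \eqref{eq:Fredholmid1}. First I introduce an auxiliary $2\times 2$ RH problem for $Y$: analytic on $\mathbb{C}\setminus\R$, tending to $I$ at infinity, with jump $Y_+=Y_-(I-2\pi i f h^t)$ on $\R$. By standard IIKS theory, solvability of this auxiliary problem is equivalent to invertibility of $1-K$, which holds since $\det(1-K)=Q(s,T)>0$ (it is a Laplace transform of a real-valued random variable). The resolvent of $K$ is then of integrable form with $F=Y_+ f$ and $H=(Y_+^{-1})^t h$.

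I then construct $\Psi$ via the transformation $\Psi(\zeta)=Y(\zeta)\Phi(\zeta)$, where $\Phi$ is a piecewise-analytic parametrix built from $\Ai$, the second Airy function $\mathrm{Bi}$, and a $(1-\sigma(T^{1/3}(\zeta+s)))^{\pm\sigma_3/2}$ twist. In the four sectors cut out by $\Gamma$, $\Phi$ is a variant of the standard Airy parametrix, normalised so that its first column is a scalar multiple of $\tilde f=(-i\Ai',\Ai)^t$ and so that the second row of $\Phi^{-1}$ is a scalar multiple of $h^t=(-i\Ai,\Ai')$; both proportionalities are governed by the Airy Wronskian $\Ai\,\mathrm{Bi}'-\Ai'\,\mathrm{Bi}=1/\pi$. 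These normalisations are exactly what is needed for the IIKS jump to factor as $I-2\pi i f h^t=\Phi_-\,V_\Psi^{(\R)}\,\Phi_+^{-1}$ on $\R$, so that $Y\Phi$ acquires the jumps \eqref{jumpPsi} across $\R$. The triangular jumps of $\Phi$ on the vertical rays $\zeta_0+i\R^\pm$ supply the remaining jumps of $\Psi$, while the asymptotic \eqref{eq:psiasympinf} follows from classical Airy asymptotics together with $Y\to I$. Condition (d) is automatic because $\sigma$ is smooth on $\R$, and uniqueness is by the standard Liouville argument: any ratio of two solutions is entire, bounded at $\zeta_0$, and tends to $I$ at infinity.

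For the differential identities I start from $\partial_\lambda\log Q(s,T)=-\Tr((1-K)^{-1}\partial_\lambda K)$ for $\lambda\in\{s,T\}$, express the trace through the resolvent diagonal $R(\zeta,\zeta)=-F^t(\zeta)H'(\zeta)$ (via L'H\^opital, using $F^t H=0$), and reduce it to
\[\partial_\lambda\log Q(s,T)=-\int_\R \alpha_\lambda(\zeta)\,h^t(\zeta)\,(Y_+^{-1}Y_+')(\zeta)\,\tilde f(\zeta)\,d\zeta + (\text{residual terms}),\]
with $\alpha_s=T^{1/3}\sigma'(T^{1/3}(\zeta+s))$ and $\alpha_T=\frac{\zeta+s}{3T^{2/3}}\sigma'(T^{1/3}(\zeta+s))$. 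The crucial substitution is $Y_+^{-1}Y_+'=\Phi_+(\widehat\Psi^{-1}\widehat\Psi')\Phi_+^{-1}-\Phi_+'\Phi_+^{-1}$: the right multiplication by the triangular factor in \eqref{def:hatPsi} for $\zeta<\zeta_0$ is precisely what absorbs the vertical-ray jump of $\Psi$, turning $Y_+\Phi_+$ into the single analytic continuation $\widehat\Psi$ of $\Psi|_{\text{sector I}}$ into the upper half-plane -- which also explains the $\zeta_0$-independence of the final formula. The normalisations of $\Phi$ then give $\Phi_+^{-1}\tilde f\propto e_1$ and $h^t\Phi_+\propto e_2^t$, so the scalar bilinear $h^t\Phi_+(\widehat\Psi^{-1}\widehat\Psi')\Phi_+^{-1}\tilde f$ collapses to a multiple of $(\widehat\Psi^{-1}\widehat\Psi')_{2,1}$. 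The residual term involving $\Phi_+'\Phi_+^{-1}$ is handled by invoking the Airy equation $\Ai''(\zeta)=\zeta\Ai(\zeta)$. A bookkeeping of the prefactors -- $1/\pi$ from the Wronskian, $1/(2\pi i)$ from the IIKS differential identity, and the factors $T^{1/3}$ or $(\zeta+s)/(3T^{2/3})$ from $\partial_\lambda f$ -- then produces exactly the constants stated in \eqref{diffidentityeq_s} and \eqref{diffidentityeq}.

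The main obstacle is the sector-by-sector construction of $\Phi$: balancing the $(1-\sigma)^{\pm 1/2}$ twists so that the Airy-parametrix-type factor actually realises the factorisation of the IIKS jump $I-2\pi i f h^t$ on $\R$, together with the verification that the parasitic contributions (those involving $\Phi_+'\Phi_+^{-1}$) do not spoil the clean $(2,1)$-entry formula.
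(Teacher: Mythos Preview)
Your overall strategy is the paper's: set up the IIKS $Y$-problem, build $\Psi$ from $Y$ via an Airy-type parametrix, and derive the differential identities from the trace formula using the normalisations that force $\Phi_+^{-1}\tilde f\propto e_1$, $h^t\Phi_+\propto e_2^t$. The paper packages the last step by citing the Bertola--Cafasso formula rather than starting from $-\Tr((1-K)^{-1}\partial_\lambda K)$, but that is the same computation; your cancellation of the $\Phi_+'\Phi_+^{-1}$ contribution against the ``residual'' term via the Airy equation is exactly what happens in the paper between \eqref{secondterm} and \eqref{eq:diffidlast}.

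There is one genuine issue in your parametrix construction. You propose $\Psi=Y\Phi$ with $\Phi$ carrying a diagonal $(1-\sigma(T^{1/3}(\zeta+s)))^{\pm\sigma_3/2}$ twist. Algebraically this does conjugate the constant Airy jumps into the $\sigma$-dependent jumps \eqref{jumpPsi}, but $(1-\sigma(T^{1/3}(\zeta+s)))^{1/2}$ is not analytic in the four sectors: $1-\sigma$ has simple poles at $\zeta=-s+i\pi(2k+1)T^{-1/3}$, so your $\Phi$ (and hence your $\Psi$) would acquire singularities in sectors II and III, violating condition (a). The paper sidesteps this by keeping $\Phi$ equal to the \emph{standard} Airy parametrix (built from $\Ai(\zeta),\Ai(\omega\zeta),\Ai(\omega^2\zeta)$, with constant jumps and no $\sigma$-dependence) and then right-multiplying in the left half-plane by the \emph{lower-triangular} factor $\begin{pmatrix}1&0\\ \pm(1-(1-\sigma)^{-1})&1\end{pmatrix}$; see \eqref{def:Psi}. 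The off-diagonal entry $(1-\sigma)^{-1}-1=e^{T^{1/3}(\zeta+s)}$ is entire, so no spurious singularities appear, and it is bounded for $\Re(\zeta-\zeta_0)<0$, so the asymptotics \eqref{eq:psiasympinf} survive. Replacing your diagonal twist by this triangular one (or equivalently defining $\tilde\Psi=Y\Phi_{\rm Airy}$ first and then correcting) closes the gap; everything else in your outline then goes through.
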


The remaining part of this section is devoted to the proof of this result.\\

The uniqueness part of the theorem follows from standard arguments for RH problems: suppose that there exist two solutions $\Psi, \tilde\Psi$ to the above RH problem, then one shows first that $\det\Psi$ is an entire function using the fact that all the jump matrices have determinant $1$ and using condition (d). By condition (c), $\det\Psi(\zeta)$ tends to $1$ as $\zeta\to\infty$, and then Liouville's theorem implies that $\det\Psi\equiv 1$. Hence $\Psi$ is invertible and we can define the matrix-valued function 
$U(\zeta)=\tilde\Psi(\zeta)\Psi^{-1}(\zeta)$. It is straightforward to verify that condition (b) of the RH problem implies that $U$ has no jumps on $\Gamma$ and is hence, also by condition (d), an entire function. Moreover, by condition (c), $U(\zeta)$ tends to $I$ as $\zeta\to\infty$.
Again using Liouville's theorem, we obtain that $U\equiv I$ and $\Psi=\tilde\Psi$, which proves the uniqueness of the RH solution.\\

We will now show that the unique RH solution $\Psi$ exists, and after that we will relate it to $Q(s,T)$.
Recall that $Q(s,T)$ is the Fredholm determinant of an integrable operator, in the sense of Its, Izergin, Korepin, and Slavnov \cite{IIKS}, i.e. the determinant of the identity minus an operator of the form \eqref{intkernel} with $f$ and $h$ given in equation \eqref{def:fgAiry}. Following the general theory described in \cite{IIKS} and also in \cite{DIZ}, we can construct the (unique) solution to the RH problem for $Y$ given below using the resolvent of the operator
$K$. We do not provide details here, but refer to \cite[Section 3]{ClaeysDoeraene} where the procedure was carried out in detail for an Airy kernel with discontinuities similar to the deformed Airy kernel we consider here; adapting it to our situation is straightforward.

\subsubsection*{RH problem for $Y$}\label{RHY}

\begin{itemize}
\item[(a)] $Y : \mathbb C \backslash \R \rightarrow \mathbb C^{2\times 2}$ is analytic.

\item[(b)] $Y(\zeta)$ has continuous boundary values as $\zeta\in\R$ is approached from above or below, and they are related by 
\[Y_+(\zeta)=Y_-(\zeta)J(\zeta),\qquad J(\zeta)=I-2\pi i f(\zeta)h^t(\zeta),\]
or more explicitly by
\begin{equation*}
Y_+(\zeta) = Y_-(\zeta) \begin{pmatrix} 1 - 
2 \pi i  \sigma(T^{1/3}(\zeta + s)) \Ai(\zeta)\Ai'(\zeta) & 2\pi i \sigma(T^{1/3}(\zeta + s))\Ai^2(\zeta)  \\ -2 \pi i \sigma(T^{1/3}(\zeta + s))\Ai'^2(\zeta) & 1 + 2 \pi i \sigma(T^{1/3}(\zeta + s))\Ai(\zeta)\Ai'(\zeta)  \end{pmatrix}. 
\end{equation*}
\item[(c)] As $\zeta \rightarrow \infty$, we have
\begin{equation}\label{asY}
	Y(\zeta) = I + \mathcal O(\zeta^{-1}).
\end{equation}
\end{itemize}

To make the connection between $Q(s,T)$ and the RH solution $Y$ precise, we use \cite[Theorem 2.1]{BC}.
Specializing this result to our particular situation and adapting it to our notations, it states (noting that the contour $\mathcal C$ from \cite[Theorem 2.1]{BC} relevant for us is the real line) that
\begin{multline}\label{BCformula}
	\partial_T \log Q(s,T) = \int_\R \Tr \Big(Y_-^{-1}(\zeta)Y_-'(\zeta) \partial_T J^{-1}(\zeta)\Big) \frac{d \zeta}{2 \pi i}   
									\\- \int_\R  \partial_T \Big( (f^t)'(\zeta) h(\zeta)\Big)  d\zeta - 2 \pi i \int_\R h^t(\zeta) f'(\zeta) \partial_T h^t(\zeta)f(\zeta) d \zeta,
\end{multline}
and the same result holds if we replace the $T$-derivatives by $s$-derivatives.
In the remaining part of this section, we focus on the $T$-derivative and the proof of \eqref{diffidentityeq}, but it is directly verified that all identities below (up to \eqref{eq:diffidlast}) remain valid when replacing $T$-derivatives by $s$-derivatives, and that \eqref{diffidentityeq_s} follows from the counterpart of \eqref{eq:diffidlast}.

\medskip

In the next step, we transform the RH problem for $Y$ in a way which is analogous to \cite[Section 3]{ClaeysDoeraene}. For this, we need the solution to the Airy model RH problem. Define
\begin{align}
	\Phi_{\rm I}(\zeta) &= -\sqrt{2 \pi }\begin{pmatrix} \Ai'(\zeta) &-\omega\Ai'(\omega^2 \zeta) \\ i\Ai(\zeta) & -i\omega^2\Ai(\omega^2 \zeta)  \end{pmatrix} , \label{eqA}\\
	\Phi_{\rm IV}(\zeta) &=  -\sqrt{2 \pi }\begin{pmatrix} \Ai'(\zeta) &\omega^2\Ai'(\omega \zeta) \\ i\Ai(\zeta) & i\omega\Ai(\omega \zeta)  \end{pmatrix}, \label{eqB}\\
	\Phi_{\rm II}(\zeta) &= \Phi_{\rm I}(\zeta) \begin{pmatrix} 1 & 0 \\ -1 & 1 \end{pmatrix} \label{eqC},\\
	\Phi_{\rm III}(\zeta) &= \Phi_{\rm IV}(\zeta) \begin{pmatrix} 1 & 0 \\ 1 & 1 \end{pmatrix} \label{eqD},
\end{align} 
where $\omega := {\rm e}^{\frac{2 \pi i}3}$. Next, we fix $\zeta_0\in\mathbb R$ and we define $\Phi(\zeta)$ to be equal to $\Phi_{\rm I}(\zeta)$ for $\zeta-\zeta_0$ in the first quadrant, to $\Phi_{\rm II}(\zeta)$ for $\zeta-\zeta_0$ in the second quadrant, to $\Phi_{\rm III}(\zeta)$ for $\zeta-\zeta_0$ in the third quadrant, and to $\Phi_{\rm IV}(\zeta)$ for $\zeta-\zeta_0$ in the fourth quadrant.
It can be verified using the properties of the Airy function (similarly to e.g.\ \cite[Section 7]{DKMVZ}) that $\Phi$ solves the following RH problem.

\subsubsection*{RH problem for $\Phi$}\label{RHPHI}

\begin{itemize}
\item[(a)] $\Phi : \mathbb C \backslash \Gamma \rightarrow \mathbb C^{2\times 2}$ is analytic, with $\Gamma$ as in \eqref{eq:defGamma}.

\item[(b)] $\Phi(\zeta)$ has continuous boundary values as $\zeta\in \Gamma\setminus\{\zeta_0\}$ is approached from the left or from the right, and they are related by
\begin{align}
	\Phi_+(\zeta) &= \Phi_-(\zeta) \begin{pmatrix} 1 & 0 \\ 1 & 1 \end{pmatrix}, & \zeta \in \zeta_0+i\R^{\pm}, \label{eq:jump1}\\
	\Phi_+(\zeta) &=\Phi_-(\zeta) \begin{pmatrix} 0 & 1 \\ -1 & 0 \end{pmatrix}, & \zeta <\zeta_0, \label{eq:jump2}\\
	\Phi_+(\zeta) &= \Phi_-(\zeta) \begin{pmatrix} 1 & 1 \\ 0 & 1 \end{pmatrix}, & \zeta >\zeta_0, \label{eq:jump3}
\end{align}
\item[(c)] As $\zeta \rightarrow \infty$, we have
\begin{equation}\label{asPhi}
	\Phi(\zeta) = \Big(1 + \mathcal O(\zeta^{-3/2}) \Big)\zeta^{\sigma_3/4} A^{-1} {\rm e}^{-\frac{2}3 \zeta^{3/2}\sigma_3},
\end{equation}
where the principal branches of $\zeta^{\sigma_3/4}$ and $\zeta^{3/2}$ are taken. Note that this is precisely the same as \eqref{eq:psiasympinf}.
\item[(d)] As $\zeta\to\zeta_0$, $\Phi(\zeta)$ remains bounded.
\end{itemize}

\begin{remark}Our choice of jump contour for the Airy model problem is somewhat unusual, for reasons that will become clear later on in Section 3; it is more common to take half-lines with arguments $\pm\frac{2\pi}{3}$ instead of $\zeta_0+i\mathbb R^\pm$. 
However, since equations \eqref{eq:jump1}--\eqref{eq:jump3} are proven using the algebraic properties of the Airy function, and since the asymptotics of the Airy function are valid on sectors of the form $|\arg(\zeta)| < \pi - \delta$ with $\delta$ arbitrary small, the RH conditions are not affected by this choice of jump contours.
\end{remark}

Now we define $\Psi(\zeta)=\Psi(\zeta;s,T,\zeta_0)$ by
\begin{multline}\label{def:Psi}
	\Psi(\zeta) =  Y(\zeta)\Phi(\zeta)\\ \times\ \begin{cases}
\begin{pmatrix}1&0\\1-(1-\sigma(T^{1/3}(\zeta + s)))^{-1}&1\end{pmatrix}&\mbox{for $\Re(\zeta-\zeta_0)<0$, $\Im(\zeta-\zeta_0)>0$,}\\
\begin{pmatrix}1&0\\-1+(1-\sigma(T^{1/3}(\zeta + s)))^{-1}&1\end{pmatrix}&\mbox{for $\Re(\zeta-\zeta_0)<0$, $\Im(\zeta-\zeta_0)<0$,}\\
I&\mbox{for $\Re(\zeta-\zeta_0)>0$}.
\end{cases}
\end{multline}
\begin{lemma}
	The matrix--valued function $\Psi$ defined in \eqref{def:Psi} satisfies the RH problem for $\Psi$.
\end{lemma}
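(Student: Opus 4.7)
My plan is to verify conditions (a)--(d) of the RH problem for $\Psi$ directly from the factorization $\Psi = Y\Phi M$ in \eqref{def:Psi}, where $M$ denotes the piecewise-defined multiplier. Conditions (a) and (d) are immediate: $Y$ is analytic off $\mathbb{R}$, $\Phi$ is analytic off $\Gamma$, and the multiplier $M$ is analytic on each of the four sectors cut by $\Gamma$ (the poles of $\sigma(T^{1/3}(\zeta+s))$ lie off the real axis at $\zeta = -s + (2k+1)\pi i T^{-1/3}$), so $\Psi$ is analytic on $\mathbb{C}\setminus\Gamma$; and every factor is bounded in a punctured neighborhood of $\zeta_0$, yielding (d).

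The bulk of the work is verifying the jump conditions (b) by case analysis on the four arcs of $\Gamma$. The key algebraic observation is that for real $\zeta$ the vector $\tilde f(\zeta) := (-i\Ai'(\zeta), \Ai(\zeta))^t$ (so $f = \sigma \tilde f$) is a scalar multiple of the first column of both $\Phi_{\rm I}(\zeta)$ and $\Phi_{\rm IV}(\zeta)$, while $h(\zeta)^t\Phi_{\rm I/IV}(\zeta)$ has vanishing first entry (from $-i\Ai\Ai' + i\Ai'\Ai \equiv 0$) and second entry proportional to the Airy cross-Wronskian $\omega\Ai(\zeta)\Ai'(\omega\zeta) - \Ai'(\zeta)\Ai(\omega\zeta)$, which is a known constant. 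The constants conspire so that
\[
\Phi_{\rm IV}(\zeta)^{-1}\bigl(I - 2\pi i f(\zeta) h(\zeta)^t\bigr)\Phi_{\rm IV}(\zeta) = \begin{pmatrix}1 & -\sigma(T^{1/3}(\zeta+s)) \\ 0 & 1\end{pmatrix};
\]
combined with $\Phi_{\rm IV}^{-1}\Phi_{\rm I} = \begin{pmatrix}1 & 1 \\ 0 & 1\end{pmatrix}$ from \eqref{eq:jump3} and with trivial multipliers $M_\pm = I$, this produces the desired $\Psi$-jump $\begin{pmatrix}1 & 1-\sigma \\ 0 & 1\end{pmatrix}$ on $\zeta > \zeta_0$. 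On the vertical rays $Y$ has no jump, and the identities \eqref{eqC}--\eqref{eqD} combined with a short matrix computation give $M_-^{-1}\Phi_-^{-1}\Phi_+ M_+ = \begin{pmatrix}1 & 0 \\ (1-\sigma)^{-1} & 1\end{pmatrix}$ as required. On $\zeta < \zeta_0$ one uses $\Phi_{\rm III} = \Phi_{\rm IV}\begin{pmatrix}1 & 0 \\ 1 & 1\end{pmatrix}$ to propagate the $\zeta > \zeta_0$ information; then $\Phi_{\rm III}^{-1}J_Y\Phi_{\rm III}$ is no longer triangular, but after composing with the $\Phi$-jump $\begin{pmatrix}0 & 1 \\ -1 & 0\end{pmatrix}$ from \eqref{eq:jump2} and conjugating by the two distinct multipliers $M_\pm$, the full matrix product simplifies to $\begin{pmatrix}0 & 1-\sigma \\ -(1-\sigma)^{-1} & 0\end{pmatrix}$.

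For condition (c), in the sectors $\Re(\zeta-\zeta_0) < 0$ the only nontrivial entry of $M - I$ equals $\mp e^{T^{1/3}(\zeta+s)}$. A direct computation shows that after conjugation by the diagonal exponential $e^{-\frac{2}{3}\zeta^{3/2}\sigma_3}$ from the $\Phi$-asymptotic \eqref{asPhi}, this entry becomes exponentially small as $\zeta \to \infty$ within those sectors, hence is absorbable into the $\mathcal O(\zeta^{-1})$ error term; combined with $Y = I + \mathcal O(\zeta^{-1})$ and \eqref{asPhi}, this yields \eqref{eq:psiasympinf}.

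The main obstacle I anticipate is the bookkeeping for the $\zeta < \zeta_0$ case, where $\Phi_{\rm III}^{-1}J_Y\Phi_{\rm III}$ is a full $2 \times 2$ matrix and the two differing multipliers $M_\pm$ must combine with the anti-diagonal $\Phi$-jump to recover the required antidiagonal form; getting the signs, $\omega$-powers, and the precise normalization of the Airy Wronskian right is mechanical but error-prone. No fundamentally new ideas beyond the $\zeta > \zeta_0$ calculation are required.
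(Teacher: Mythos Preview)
Your proposal is correct and follows essentially the same approach as the paper: both proofs hinge on the identity that $f$ is proportional to the first column of $\Phi_{\rm IV}$ and $h^t$ to the second row of $\Phi_{\rm I}^{-1}$ (the paper records this as \eqref{fh}), which reduces $\Phi^{-1}J_Y\Phi$ to an explicit triangular matrix and makes the jump verification routine. The only organizational difference is that the paper introduces the intermediate $\tilde\Psi := Y\Phi$ and derives its jumps \eqref{J1}--\eqref{J3} before applying the multiplier $M$, whereas you handle $Y\Phi M$ in one pass on each arc; this is purely cosmetic.
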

\proof
\begin{itemize}
\item[(a)] To see that $\Psi$ is analytic in $\mathbb C\setminus\Gamma$, it suffices to note that $Y$ and $\Phi$ are analytic in this domain, and that $(1-\sigma(T^{1/3}(\zeta + s)))^{-1}$ is an entire function.
\item[(b)] For the jump relations, we start by verifying that 
\begin{equation}\label{deftildePsi}
\tilde\Psi(\zeta) := Y(\zeta) \Phi(\zeta)
\end{equation}
satisfies the following jump conditions: 
 \begin{align}
		\tilde\Psi_+(\zeta) &= \tilde\Psi_-(\zeta) \begin{pmatrix} 1 & 1 - \sigma(T^{1/3}(\zeta + s)) \\ 0 & 1 \end{pmatrix}, & \zeta > \zeta_0,  \label{J1}\\
		\tilde\Psi_+(\zeta) &= \tilde\Psi_-(\zeta) \begin{pmatrix} 1 & 0 \\ 1 & 1 \end{pmatrix},& \zeta \in \zeta_0 +i\R^{\pm}, \label{J2}\\
		\tilde\Psi_+(\zeta) &=\tilde\Psi_-(\zeta) \begin{pmatrix} \sigma(T^{1/3}(\zeta + s)) & 1 - \sigma(T^{1/3}(\zeta + s)) \\ -1 - \sigma(T^{1/3}(\zeta + s)) & \sigma(T^{1/3}(\zeta + s)) \end{pmatrix},&  \zeta < \zeta_0 . \label{J3}
	\end{align} 
Indeed, the first equation \eqref{J1} follows from the relations
\begin{align*}
 	\tilde\Psi_-^{-1}(\zeta) \tilde\Psi_+(\zeta) &= \Phi_{\rm IV}^{-1}(\zeta)Y_-^{-1}(\zeta)Y_+(\zeta) \Phi_{\rm I}(\zeta)\\
	 								&= \Phi_{\rm IV}^{-1}(\zeta) \Big( I - 2 \pi i f(\zeta)h^t(\zeta)\Big) \Phi_{\rm I}(\zeta)
 \end{align*}
and
 \begin{equation}\label{fh}
 	f(\zeta) = \frac{i}{\sqrt{2\pi}}\sigma(T^{1/3}(\zeta + s)) \Phi_{\rm IV}(\zeta) \begin{pmatrix} 1\\ 0 \end{pmatrix}, \quad h^t(\zeta) =-\frac{1}{\sqrt{2\pi}} \begin{pmatrix} 0 & 1 \end{pmatrix}  \Phi_{\rm I}^{-1}(\zeta),
 \end{equation}
 together with the jump relation \eqref{eq:jump3} for $\Phi$.
The second equation \eqref{J2} follows from the jump relations \eqref{eq:jump1} for $\Phi$, since $Y$ is analytic on this part of the contour. Finally, for \eqref{J3}, we have
 \[
 	\tilde\Psi_-^{-1}(\zeta) \tilde\Psi_+(\zeta) = \Phi_{\rm III}^{-1}(\zeta)Y_-^{-1}(\zeta )Y_+(\zeta) \Phi_{\rm II}(\zeta),\]
 	and the result follows after a straightforward computation by \eqref{eqC}, \eqref{eqD}, \eqref{fh}, and the jump relation for $Y$.
  Now, the jump relations for $\tilde\Psi$ together with 
 \begin{multline}
\label{def:Psi2}
\Psi(\zeta)=\tilde\Psi(\zeta)\\\times\ \begin{cases}
\begin{pmatrix}1&0\\1-(1-\sigma(T^{1/3}(\zeta + s)))^{-1}&1\end{pmatrix}&\mbox{for $\Re(\zeta-\zeta_0)<0$, $\Im(\zeta-\zeta_0)>0$,}\\
\begin{pmatrix}1&0\\-1+(1-\sigma(T^{1/3}(\zeta + s)))^{-1}&1\end{pmatrix}&\mbox{for $\Re(\zeta-\zeta_0)<0$, $\Im(\zeta-\zeta_0)<0$,}\\
I&\mbox{for $\Re(\zeta-\zeta_0)>0$}.
\end{cases}
\end{multline}
imply the required jump relations for $\Psi$.
\item[(c)]
The asymptotic condition (c) follows by substituting the asymptotics \eqref{asY} and \eqref{asPhi} for $Y(\zeta)$ and $\Phi(\zeta)$ as $\zeta\to\infty$ in \eqref{deftildePsi} and then in \eqref{def:Psi2}. Additionally, we have that
$$1-(1-\sigma(T^{1/3}(\zeta + s)))^{-1} = -\exp(T^{1/3}(\zeta + s)),$$
remains bounded as $\zeta\to\infty$ with $\zeta-\zeta_0$ in the second or third quadrant, for any choice of $\zeta_0\in\mathbb R$, such that
$$
{\rm e}^{-\frac{2}{3} \zeta^{3/2}\sigma_3} \begin{pmatrix}1&0\\ \pm (1-(1-\sigma(T^{1/3}(\zeta + s)))^{-1})&1\end{pmatrix} = (I + \mathcal O(\zeta^{-1})){\rm e}^{-\frac{2}{3} \zeta^{3/2}\sigma_3}. 
$$
\item[(d)] As $\zeta\to \zeta_0$, $\Psi$ remains bounded because $Y$ and $\Phi$ do.
This completes the proof.
\end{itemize}
\qed\\
 
 The last step in the proof of Theorem \ref{diffidentity} consists of re-writing the formula \eqref{BCformula} in terms of of $\hat\Psi$ as defined in \eqref{def:hatPsi}. To this end, observe that, by $Y_+(\zeta) = Y_-(\zeta)J(\zeta)$ and by the cyclic property of the trace, we can write the first term at the right hand side of \eqref{BCformula} as 
	 	\begin{multline}
			\int_\R \Tr \Big(Y_-^{-1}(\zeta)Y_-'(\zeta)\partial_T J_T(\zeta) J^{-1}(\zeta)\Big) \frac{d \zeta}{2 \pi i} = \int_\R \Tr \Big(Y_+^{-1}(\zeta)Y_+'(\zeta) J^{-1}(\zeta)\partial_TJ(\zeta) \Big) \frac{d \zeta}{2 \pi i} \\
			- \int_\R \Tr \Big(J^{-1}(\zeta)J'(\zeta) J^{-1}(\zeta) \partial_TJ(\zeta) \Big) \frac{d\zeta}{2 \pi i }.
		\end{multline}
Moreover, using the fact that $J(\zeta) = I - 2 \pi i f(\zeta)h^t(\zeta)$ and again the cyclic property of the trace it is straightforward to verify that\footnote{This is in fact the content of Remark A.1 in \cite{BC}, except for a factor $2 \pi i $ that is missing there.}
$$
	\int_\R \Tr \Big(J^{-1}(\zeta)J'(\zeta) J^{-1}(\zeta)\partial_TJ(\zeta) \Big) \frac{d\zeta}{2 \pi i } = -4 \pi i \int_\R h^t(\zeta) f'(\zeta) \partial_Th(\zeta) f(\zeta) d \zeta,
$$
and the right hand side is zero since $h$ does not depend on $T$ (nor on $s$). Hence, in our particular case, we can re-write \eqref{BCformula} as
\begin{equation}\label{BCformula2}
	\partial_T\log Q(s,T)= \int_\R \Tr \Big(Y_+^{-1}(\zeta)Y_+'(\zeta) J^{-1}(\zeta) \partial_T J(\zeta) \Big) \frac{d \zeta}{2 \pi i} -  \int_\R  \partial_T \Big( (f^t)'(\zeta) h(\zeta) \Big)  d\zeta.
\end{equation}
We now compute explicitly the two terms at the right hand side. From the definition \eqref{def:fgAiry} of $f$ and $h$, we get
\begin{equation}\label{secondterm} 
	\int_\R \Tr\left( \partial_T \Big( (f^t)'(\zeta) h(\zeta) \Big) \right) d\zeta = \int_\R \partial_T\sigma(T^{1/3}(\zeta + s))(\Ai'^2(\zeta) - \zeta\Ai^2(\zeta)) d\zeta.
\end{equation}
For the remaining term, we have
\begin{multline*}
	J^{-1}(\zeta)\partial_T J(\zeta) = -2 \pi i \partial_T\sigma(T^{1/3}(\zeta + s)) \begin{pmatrix} \Ai(\zeta)\Ai'(\zeta) & -\Ai^2(\zeta) \\ \Ai'^2(\zeta) & -  \Ai(\zeta)\Ai'(\zeta) \end{pmatrix} \\= -2 \pi i \frac{\partial_T\sigma(T^{1/3}(\zeta + s))}{\sigma(T^{1/3}(\zeta + s))}f(\zeta)h^t(\zeta)
\end{multline*}
Now we use once more the cyclic property of the trace, \eqref{fh}, and \eqref{eqA}--\eqref{eqB}
to arrive at
\begin{align}
	&\int_\R \Tr \Bigg(Y_+^{-1}(\zeta)Y_+'(\zeta) J^{-1}(\zeta)\partial_T J(\zeta) \Bigg) \frac{d \zeta}{2 \pi i} \\&= 
	-\int_\R \partial_T\sigma(T^{1/3}(\zeta + s))\begin{pmatrix} 0 & 1 \end{pmatrix} \Phi_{\rm I}^{-1}(\zeta)Y_+^{-1}(\zeta)Y_+'(\zeta) \Phi_{\rm I}(\zeta)  \begin{pmatrix}1 \\ 0 \end{pmatrix} \frac{d \zeta}{2\pi i}.
\end{align}
Finally, we use the fact that $\widehat\Psi(\zeta) =  Y(\zeta)\Phi_{\rm I}(\zeta) $, which follows from \eqref{def:hatPsi}, \eqref{def:Psi}, and \eqref{eqC}, to obtain
\begin{multline}
	\int_\R \Tr \Bigg(Y_+^{-1}(\zeta)Y_+'(\zeta) J^{-1}(\zeta)\partial_T J(\zeta) \Bigg) \frac{d \zeta}{2 \pi i} \\ 
=	 -\frac{1}{2 \pi i} \int_\R \partial_T\sigma(T^{1/3}(\zeta + s)) \Big(\hat\Psi^{-1} \hat\Psi' \Big)_{21}(\zeta) d \zeta 
+ \int_\R \partial_T\sigma(T^{1/3}(\zeta + s))\Big(\Ai'^2(\zeta) - \zeta\Ai^2(\zeta) \Big) d\zeta, \label{eq:diffidlast}
\end{multline}
for which we also used \eqref{eqA}.
Combining this last equation with \eqref{BCformula2} and \eqref{secondterm}, we obtain \eqref{diffidentityeq}. 

\section{Construction of the $g$-function}

\subsection{Heuristics}
Our objective in this section and the next one is to derive asymptotics for the solution $\Psi = \Psi(\zeta;s,T,\zeta_0)$ of the RH problem for $\Psi$ from the previous section as $s\to\infty$, uniform in $\zeta$ and in $M^{-1}\leq T\leq Ms^{3/2}$, with  $M>0$ arbitrary large, for a convenient choice of $\zeta_0$. To this end, we will transform the RH problem for $\Psi$, in several steps, to a RH problem for $R$ with jumps which are close to the identity matrix as $s\to\infty$, and which is normalized at infinity such that $\lim_{\lambda\to\infty}R(\lambda)=I$. 
Such a RH problem is often referred to as a {\em small-norm} RH problem, and the general theory of RH problems will then allow us to conclude that $R(\lambda)$ is close to $I$ as $s\to\infty$, with estimates for the error term which will be uniform in $\lambda$ and in $M^{-1}\leq T\leq Ms^{3/2}$.\\

The main obstacles for us to obtain such a small-norm RH problem are:
\begin{itemize}
\item[(i)] the fact that the jump matrices for $\Psi$, given in \eqref{jumpPsi}, depend on $s$ and $T$ in a rather complicated manner,
\item[(ii)] the fact that $\Psi$ is not normalized to the identity matrix as $\zeta\to\infty$, see \eqref{eq:psiasympinf}.
\end{itemize}
In order to improve on the first issue (i), we will first apply a change of variables $\zeta+s=s\lambda$, which will remove the $s$-dependence in the jump matrices \eqref{jumpPsi}, at the expense of making the asymptotic behavior slightly more complicated. Indeed, the exponential factor ${\rm e}^{-\frac{2}{3}\zeta^{3/2}\sigma_3}$ in \eqref{eq:psiasympinf} for instance becomes
${\rm e}^{-\frac{2}{3}s^{3/2}\left(\lambda-1\right)^{3/2}\sigma_3}$, which behaves like 
${\rm e}^{-s^{3/2}\left(\frac{2}{3}\lambda^{3/2}-\lambda^{1/2}\right)\sigma_3}$ as $\lambda\to\infty$.
Next, in order to resolve issue (ii), we will need to regularize the asymptotic behavior of the RH solution as $\lambda\to\infty$, by introducing a conveniently defined $g$-function, and by multiplying the RH solution from the right by ${\rm e}^{s^{3/2}g(\lambda)\sigma_3}$. The construction of this $g$-function is the key to a successful RH analysis, and we devote this section to it.

Although some of the reasons behind our definition of the $g$-function will become clear only a posteriori, we attempt to give some heuristics behind it now.
First, we will need this function to behave for large $\lambda$ like $\frac{2}{3}\lambda^{3/2}-\lambda^{1/2}$, 
in such a way that ${\rm e}^{s^{3/2}g(\lambda)\sigma_3}$ can neutralize the factor ${\rm e}^{-\frac{2}{3}s^{3/2}\left(\lambda-1\right)^{3/2}\sigma_3}$ appearing in the large $\lambda$ asymptotics for the RH problem.
While this condition still leaves us with a lot of freedom to define $g$, we will secondly take the opportunity to construct $g$ in such a way that it helps for making the jump matrices as simple as possible. To do this, we will need ${\rm e}^{s^{3/2}g(\lambda)\sigma_3}$ to be analytic except on a half-line of the form $(-\infty,\lambda_0]$ and 
to interact in a suitable way with the function $1-\sigma(sT^{1/3}\lambda)$ appearing in the jump matrices of the RH problem.
More precisely, it will turn out convenient to have the property ${\rm e}^{s^{3/2}(g_+(\lambda)+g_-(\lambda))}=\frac{1-\sigma(sT^{1/3}\lambda)}{1-\sigma(sT^{1/3}\lambda_0)}$ for $\lambda<\lambda_0$.

Combining all the above conditions, we are led to the question of constructing a function $g(\lambda)=g(\lambda;s,T)$ satisfying the following properties.
\subsubsection*{RH problem for $g$}\label{RHg}
\begin{itemize}
\item[(a)] $g : \mathbb C\setminus (-\infty,\lambda_0] \longrightarrow \mathbb C$ is analytic.
\item[(b)] $g(\lambda)$  has continuous boundary values as $\lambda \in (-\infty,\lambda_0]$ is approached from above or below, and they are related by
\begin{equation}\label{jumpg}
g_+(\lambda)+g_-(\lambda)=V(\lambda;s,T)-V(\lambda_0;s,T),\qquad\mbox{for }\lambda \in (-\infty,\lambda_0),
\end{equation}
where we denote
\begin{equation}
\label{def:h}V(\lambda;s,T) := s^{-3/2}\log\left(1-\sigma(sT^{1/3}\lambda)\right).
\end{equation}
\item[(c)] For fixed $s,T>0$, $g$ has the asymptotics
\begin{equation}\label{as:g}g(\lambda)=\frac{2}{3}\lambda^{3/2}-\lambda^{1/2}-\frac{V(\lambda_0;s,T)}{2}+g_1\lambda^{-1/2}
+o(\lambda^{-1/2})\qquad\mbox{as $\lambda\to\infty$,}\end{equation}
for a certain value $g_1$, which may depend on $s, T$. Here $\lambda^{3/2}$ and $\lambda^{\pm 1/2}$ are the principal branches of the root functions, analytic in $\mathbb C\setminus(-\infty,0]$ and positive for $\lambda>0$.
\end{itemize}

We note that $V$ is negative, strictly decreasing in $\lambda$ and tending to $0$ as $\lambda\to -\infty$ and to $-\infty$ as $\lambda\to +\infty$. Moreover, by the definition \eqref{eq:Fredholmid1} of $\sigma$, we have 
\begin{equation}\label{eq:hprime}V'(\lambda;s,T)=-s^{-1/2}T^{1/3}\sigma(sT^{1/3}\lambda).\end{equation}
\subsection{Construction of $g'$}
Instead of constructing $g$ directly, we will first construct a function $g'$ (which will indeed turn out to be the derivative of $g$) solving the following RH problem. 
\subsubsection*{RH problem for $g'$}
\begin{itemize}
\item[(a)] $g'$ is analytic in $\mathbb C\setminus (-\infty,\lambda_0]$.
\item[(b)] $g'$ has the jump relation
\begin{equation}
g_+'(\lambda)+g_-'(\lambda)=V'(\lambda;s,T)\qquad\mbox{for }\lambda<\lambda_0.\label{eq:jumpgprime}
\end{equation}
\item[(c)] $g'$ has the asymptotics
\[g'(\lambda)=\lambda^{1/2}-\frac{1}{2}\lambda^{-1/2}-\frac{g_1}{2}\lambda^{-3/2}+o(\lambda^{-3/2})\qquad\mbox{as }\lambda\to\infty.\]
\end{itemize}

\begin{lemma}\label{lemma:g'}
	The function
	\begin{align}	g'(\lambda)&=(\lambda-\lambda_0)^{1/2}\left(1-\frac{1}{2\pi}\int_{-\infty}^{\lambda_0}\frac{V'(\xi;s,T)}{\sqrt{\lambda_0-\xi}}\frac{d\xi}{\xi-\lambda}\right) \nonumber\\	&=(\lambda-\lambda_0)^{1/2}\left(1+\frac{s^{-1/2}T^{1/3}}{2\pi}\int_{-\infty}^{\lambda_0}\frac{\sigma(sT^{1/3}\xi)}{\sqrt{\lambda_0-\xi}}\frac{d\xi}{\xi-\lambda}\right),\label{exprg}
	\end{align}
where $(.)^{1/2}$ denotes the principal branch of the complex square root, $\sqrt{.}$ is the positive square root, and $\lambda_0=\lambda_0(s,T)$ is the unique real solution to the equation
	\begin{equation}\label{eq:endpointequation}
		\lambda_0-1=-\frac{s^{-1/2}T^{1/3}}{\pi}\int_{-\infty}^{\lambda_0}\frac{\sigma(sT^{1/3}\xi)}{\sqrt{\lambda_0-\xi}}d\xi=-\frac{s^{-1/2}T^{1/3}}{\pi}\int_{-\infty}^{0}\frac{\sigma(sT^{1/3}(u+\lambda_0))}{\sqrt{-u}}du,
\end{equation}
solves the RH problem for $g'$.
\end{lemma}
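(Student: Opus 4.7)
The plan is to verify the three RH conditions directly from the explicit formula \eqref{exprg}, and to establish existence and uniqueness of $\lambda_0$ separately by a monotonicity argument. Write $g'(\lambda)=R(\lambda)F(\lambda)$ where $R(\lambda)=(\lambda-\lambda_0)^{1/2}$ is the principal square root and
\[
F(\lambda)=1-\frac{1}{2\pi}\int_{-\infty}^{\lambda_0}\frac{V'(\xi;s,T)}{\sqrt{\lambda_0-\xi}}\frac{d\xi}{\xi-\lambda}.
\]
Analyticity off $(-\infty,\lambda_0]$ is immediate: $R$ is analytic there by choice of branch, and the Cauchy-type integral defining $F$ is analytic off the contour of integration. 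The convergence of the integral near $\xi=\lambda_0$ is controlled by the $1/\sqrt{\lambda_0-\xi}$ singularity (integrable), and near $\xi=-\infty$ by the exponential decay $\sigma(sT^{1/3}\xi)\sim {\rm e}^{sT^{1/3}\xi}$ coming from \eqref{eq:hprime}.

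For the jump \eqref{eq:jumpgprime}, I would use Sokhotski--Plemelj on $F$, giving
\[
F_\pm(\lambda)=1-\frac{1}{2\pi}\,\mathrm{PV}\!\int_{-\infty}^{\lambda_0}\frac{V'(\xi;s,T)}{\sqrt{\lambda_0-\xi}}\frac{d\xi}{\xi-\lambda}\mp\frac{i}{2}\,\frac{V'(\lambda;s,T)}{\sqrt{\lambda_0-\lambda}},
\]
together with the boundary values $R_\pm(\lambda)=\pm i\sqrt{\lambda_0-\lambda}$ on $(-\infty,\lambda_0)$. Since $R_++R_-=0$, the combination $g'_++g'_-=R_+(F_+-F_-)=i\sqrt{\lambda_0-\lambda}\cdot\bigl(-i V'(\lambda;s,T)/\sqrt{\lambda_0-\lambda}\bigr)=V'(\lambda;s,T)$, as required.

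For the asymptotics, expand $R(\lambda)=\lambda^{1/2}-\tfrac{\lambda_0}{2}\lambda^{-1/2}+O(\lambda^{-3/2})$ and $F(\lambda)=1+A_0\lambda^{-1}+A_1\lambda^{-2}+\cdots$ as $\lambda\to\infty$, where
\[
A_0=\frac{1}{2\pi}\int_{-\infty}^{\lambda_0}\frac{V'(\xi;s,T)}{\sqrt{\lambda_0-\xi}}\,d\xi.
\]
Multiplying, the coefficient of $\lambda^{-1/2}$ is $A_0-\lambda_0/2$; matching this to $-1/2$ gives, after substituting \eqref{eq:hprime}, precisely the endpoint equation \eqref{eq:endpointequation}. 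The higher-order term $g_1\lambda^{-3/2}$ in condition~(c) is then just a definition of $g_1$ in terms of the next Taylor coefficient, so it imposes no further constraint.

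It remains to prove existence and uniqueness of a real $\lambda_0$ solving \eqref{eq:endpointequation}. Setting
\[
\Phi(\lambda_0):=\lambda_0-1+\frac{s^{-1/2}T^{1/3}}{\pi}\int_{-\infty}^{0}\frac{\sigma(sT^{1/3}(u+\lambda_0))}{\sqrt{-u}}\,du,
\]
one differentiates under the integral sign, using $\sigma'>0$, to conclude $\Phi'(\lambda_0)>1$, hence $\Phi$ is strictly increasing. Dominated convergence gives $\Phi(\lambda_0)\to-\infty$ as $\lambda_0\to-\infty$ (since $\sigma(sT^{1/3}(u+\lambda_0))\to 0$ exponentially) and $\Phi(\lambda_0)\to+\infty$ as $\lambda_0\to+\infty$, so a unique real root exists. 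The subtle point, and the step I expect to require most care, is justifying the interchange of limit/derivative with the integral uniformly in the parameter regime $M^{-1}\leq T\leq Ms^{3/2}$ that will be needed later; for the statement of this lemma (fixed $s,T>0$), standard dominated convergence suffices.
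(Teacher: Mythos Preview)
Your proposal is correct and follows essentially the same approach as the paper: verify the jump via Sokhotski--Plemelj, derive the endpoint equation from the $\lambda^{-1/2}$ coefficient in the large-$\lambda$ expansion, and prove existence/uniqueness of $\lambda_0$ by monotonicity. The only cosmetic difference is that the paper argues uniqueness by observing that the left side of \eqref{eq:endpointequation} is linearly increasing in $\lambda_0$ while the right side decreases from $0$ to $-\infty$, whereas you combine both sides into a single strictly increasing function $\Phi$; these are equivalent.
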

\begin{proof}
It is immediate, using the properties of the Cauchy integral, to prove that $g'$, defined as in \eqref{exprg}, satisfies the jump condition \eqref{eq:jumpgprime}. Then, expanding the right hand side of the last expression in \eqref{exprg} we easily verify that \eqref{eq:endpointequation} needs to be satisfied if we want the coefficient of $\lambda^{-1/2}$ in the asymptotic expansion of $g'$ to be equal to -1/2.

Moreover, the left hand side of \eqref{eq:endpointequation} is linearly increasing in $\lambda_0$, while the right hand side is decreasing from $0$ to $-\infty$ as $\lambda_0$ goes from $-\infty$ to $+\infty$. It follows that there is a unique solution to \eqref{eq:endpointequation}.
\end{proof}
\begin{remark}\label{rem:unicitylambda0}
	It is easy to see that $\lambda_0(s,T)<1$ for any $s,T>0$, and that $\lambda_0(s,T)$ decreases as $T$ increases. 
\end{remark}
\begin{remark}
	An alternative representation for $g'$, obtained from \eqref{exprg} after integration by parts, is
\begin{equation}\label{exprg2}
g'(\lambda)=(\lambda-\lambda_0)^{1/2}+
\frac{s^{-1/2}T^{1/3}}{2\pi i}\int_{-\infty}^{\lambda_0}\log\left(\frac{(\lambda-\lambda_0)^{1/2}-i\sqrt{\lambda_0-\xi}}{(\lambda-\lambda_0)^{1/2}+i\sqrt{\lambda_0-\xi}}\right)d\sigma(sT^{1/3}\xi),
\end{equation}
where $\log$ is the principal branch of the complex logarithm.
\end{remark}
We now define 
\begin{equation}\label{def:g}
g(\lambda) := \int_{\lambda_0}^\lambda g'(\xi)d\xi,
\end{equation}
with $g'$ given by \eqref{exprg}, and
where the path on integration is always chosen within the domain of analyticity of $g'$.
It is straightforward to verify that $g$ defined by \eqref{def:g} is analytic in $\mathbb C\setminus(-\infty,\lambda_0]$,
satisfies the jump relation \eqref{jumpg}, and has asymptotics of the form
\[g(\lambda)=\frac{2}{3}\lambda^{3/2}-\lambda^{1/2}+C+g_1\lambda^{-1/2}
+o(\lambda^{-1/2})\qquad\mbox{as $\lambda\to\infty$,}\] 
for some integration constant $C$. This can only be consistent with the jump relation if $C=-\frac{V(\lambda_0;s,T)}{2}$, and
\eqref{as:g} follows. Hence we have solved the RH problem for $g$.

\subsection{Estimates for the endpoint $\lambda_0$}
We now establish the leading order asymptotics of $\lambda_0=\lambda_0(s,T)$.
\begin{proposition}\label{prop:asendpoint}
Let $\lambda_0=\lambda_0(s,T)$ be the unique real solution of the equation \eqref{eq:endpointequation}.
For any $M>0$ the following estimate holds uniformly in $M^{-1} \leq T\leq Ms^{3/2}$ as $s\to +\infty$:
\[\lambda_0=\frac{T^{2/3}}{\pi^2 s}\left(\sqrt{1+\pi^2sT^{-2/3}}-1\right)^2+\mathcal O(s^{-5/2}T^{-1/3}).\]
In particular, for any $M>0$, there exist $s_0, \kappa > 0$ such that $\lambda_0\in (\kappa, 1)$ for any $s\geq s_0$ and for any $M^{-1} \leq T\leq Ms^{3/2}$.
\end{proposition}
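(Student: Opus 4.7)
The plan is to reduce the implicit fixed-point equation \eqref{eq:endpointequation} to an explicit quadratic equation whose positive solution is precisely the claimed leading term, and then to bound the difference $|\lambda_0-\tilde\lambda_0|$ by a monotonicity argument. First I would perform the scaling $v=sT^{1/3}u$ in the integral, which absorbs all $s,T$ dependence into a single dimensionless variable and recasts \eqref{eq:endpointequation} as
\[
\lambda_0-1 = -\frac{T^{1/6}}{\pi s}\,I(sT^{1/3}\lambda_0),\qquad I(\alpha):=\int_0^\infty \frac{dv}{\sqrt{v}\,(1+e^{v-\alpha})}.
\]
The main technical ingredient is then the Sommerfeld-type expansion $I(\alpha)=2\sqrt{\alpha}+\mathcal O(\alpha^{-3/2})$ as $\alpha\to+\infty$: the leading term comes from $(1+e^{v-\alpha})^{-1}\approx\mathbf{1}_{[0,\alpha]}(v)$, and after substituting $v=\alpha\pm t$ the remainder becomes
\[
\int_0^\alpha \sigma(-t)\Bigl((\alpha+t)^{-1/2}-(\alpha-t)^{-1/2}\Bigr)\,dt + \mathcal O(e^{-\alpha}),
\]
which is $\mathcal O(\alpha^{-3/2})$ upon expanding the bracket in $t/\alpha$ and using the exponential decay of $\sigma(-t)$ (one could even pin down the coefficient $-\pi^2/(12\alpha^{3/2})$ via $\int_0^\infty t/(1+e^t)\,dt=\pi^2/12$, but this is not needed for the stated error).

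Next, I would solve the ``target'' leading-order equation $\tilde\lambda_0 - 1 + \tfrac{2T^{1/3}}{\pi\sqrt{s}}\sqrt{\tilde\lambda_0}=0$, which is a quadratic in $\sqrt{\tilde\lambda_0}$; a direct computation shows that its unique positive root is exactly $\tilde\lambda_0=\tfrac{T^{2/3}}{\pi^2 s}(\sqrt{1+\pi^2 sT^{-2/3}}-1)^2$. Setting $x:=sT^{-2/3}$, this simplifies to
\[
\tilde\lambda_0 = \frac{\sqrt{1+\pi^2 x}-1}{\sqrt{1+\pi^2 x}+1},
\]
from which one reads off that $\tilde\lambda_0$ is strictly increasing in $x$ with values in $(0,1)$. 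The constraint $T\leq Ms^{3/2}$ forces $x\geq M^{-2/3}$, so we obtain a uniform lower bound $\tilde\lambda_0\geq\kappa'>0$ depending only on $M$, and in particular $\alpha^*:=sT^{1/3}\tilde\lambda_0\geq \kappa'M^{-1/3}s\to+\infty$ uniformly in the range.

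Finally, I would conclude by monotonicity. Let $G(\lambda):=\lambda-1+\tfrac{T^{1/6}}{\pi s}I(sT^{1/3}\lambda)$, so that $G(\lambda_0)=0$. Since $\sigma$ is strictly increasing, $I'\geq 0$ and therefore $G'\geq 1$. Combining the $I(\alpha)$ expansion with the uniform lower bound on $\alpha^*$ yields
\[
G(\tilde\lambda_0) = \frac{T^{1/6}}{\pi s}\bigl[I(\alpha^*)-2\sqrt{\alpha^*}\bigr] = \mathcal O\!\left(\frac{T^{1/6}}{s\,(\alpha^*)^{3/2}}\right) = \mathcal O\bigl(s^{-5/2}T^{-1/3}\bigr),
\]
and $G'\geq 1$ together with the mean value theorem then gives $|\lambda_0-\tilde\lambda_0|\leq |G(\tilde\lambda_0)|=\mathcal O(s^{-5/2}T^{-1/3})$, which is the claimed asymptotics. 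The ``in particular'' statement $\lambda_0\in(\kappa,1)$ follows by taking $\kappa:=\kappa'/2$: $\lambda_0<1$ is immediate from \eqref{eq:endpointequation} since $I>0$, and $\lambda_0\geq \tilde\lambda_0-|\lambda_0-\tilde\lambda_0|\geq\kappa'/2$ once $s$ is large enough. The main obstacle is securing the $\mathcal O(\alpha^{-3/2})$ remainder in $I(\alpha)$ with a constant \emph{uniform} in $s,T$; but precisely because the initial rescaling concentrates all the parameter dependence into the single variable $\alpha$, this reduces to a clean one-variable asymptotic question, after which the explicit form of $\tilde\lambda_0$ and the monotonicity of $G$ do the rest.
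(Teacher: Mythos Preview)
Your proof is correct and takes a cleaner route than the paper's. Both arguments ultimately compare $\lambda_0$ to the explicit solution $\tilde\lambda_0$ of the quadratic $\tilde\lambda_0-1+\tfrac{2T^{1/3}}{\pi\sqrt{s}}\sqrt{\tilde\lambda_0}=0$, and both rely on the same analytic cancellation (the antisymmetry of $\sigma(u)-\mathbf 1_{(0,\infty)}(u)$, which is precisely what produces your $\mathcal O(\alpha^{-3/2})$ Sommerfeld remainder). The organization, however, is different.

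The paper works directly with the unknown $\lambda_0$: it first proves $\lambda_0\geq 0$ by contradiction, then derives the estimate
\[
\Bigl|\lambda_0-1+\tfrac{2T^{1/3}}{\pi\sqrt{s}}\sqrt{\lambda_0}\Bigr|\leq \frac{C}{s^{5/2}T^{1/3}\lambda_0}+\tfrac{1}{2}e^{-\sqrt{s}\lambda_0}
\]
by splitting the integral at $|u|=2\sqrt{s}\lambda_0$ and exploiting the antisymmetry; because of the residual $1/\lambda_0$, a second contradiction argument is needed to exclude $\lambda_0\to 0$ before the bound can be closed.

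You reverse the logic: you first pin down the \emph{explicit} target $\tilde\lambda_0$ and its uniform lower bound $\kappa'$ (from $sT^{-2/3}\geq M^{-2/3}$), and only then invoke the Sommerfeld expansion---at the known point $\alpha^*=sT^{1/3}\tilde\lambda_0$, where $\alpha^*\to\infty$ is already secured. The single observation $G'\geq 1$ (immediate from $\sigma'\geq 0$) plus the mean value theorem then delivers $|\lambda_0-\tilde\lambda_0|\leq|G(\tilde\lambda_0)|=\mathcal O(s^{-5/2}T^{-1/3})$ in one stroke, bypassing both bootstrap/contradiction steps. Packaging the integral as the one-variable function $I(\alpha)$ is what makes this work: once the rescaling is done, uniformity in $s,T$ is automatic and the problem reduces to a clean scalar asymptotic. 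The paper's approach is more hands-on with the original variables; yours is more structural, and arguably shorter.
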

\begin{proof}
By \eqref{eq:endpointequation}, we have
\begin{align}
\left|\lambda_0-1+\frac{2T^{1/3}}{\pi s^{1/2}}\sqrt{\max\{0,\lambda_0\}}\right|&\leq \frac{s^{-1/2}T^{1/3}}{\pi}\int_{-\infty}^{\lambda_0}
\frac{\left|\sigma(sT^{1/3}\xi)-{\bf 1}_{(0,+\infty)}(\xi)\right|}{\sqrt{\lambda_0-\xi}}d\xi\nonumber\\
&=\frac{T^{1/6}}{\pi s}\int_{-\infty}^{sT^{1/3}\lambda_0}
\frac{\left|\sigma(u)-{\bf 1}_{(0,+\infty)}(u)\right|}{\sqrt{sT^{1/3}\lambda_0-u}}du.\label{eq:endpoint1}
\end{align}

We start by showing that $\lambda_0\geq 0$. Assuming by contra-position that $\lambda_0<0$, we obtain from the above inequality and
the inequality
\begin{equation}
\left|\sigma(x)-{\bf 1}_{(0,+\infty)}(x)\right|\leq {\rm e}^{-|x|},\qquad x\in\mathbb R \label{ineq:sigma}
\end{equation} that
\begin{multline*}
1\leq |\lambda_0-1|\leq \frac{T^{1/6}}{\pi s}\int_{-\infty}^{sT^{1/3}\lambda_0}\frac{{\rm e}^{u}du}{\sqrt{sT^{1/3}\lambda_0-u}}= \frac{T^{1/6}}{\pi s}\int_{-\infty}^{0}\frac{{\rm e}^{v+sT^{1/3}\lambda_0}}{\sqrt{-v}}dv\\ =\frac{T^{1/6}}{\sqrt{\pi} s}{\rm e}^{sT^{1/3}\lambda_0}\leq \frac{M^{1/6}}{\sqrt{\pi}s^{3/4}}.
\end{multline*}
This gives a contradiction for $s$ sufficiently large.

Next, as in \eqref{eq:endpoint1} but now with $\lambda_0>0$, we obtain by \eqref{eq:endpointequation} that
\[
\left|\lambda_0-1+\frac{2T^{1/3}}{\pi s^{1/2}}\sqrt{\lambda_0}\right|=
\frac{T^{1/6}}{\pi s}\left|
\int_{-\infty}^{sT^{1/3}\lambda_0}
\frac{\sigma(u)-{\bf 1}_{(0,+\infty)}(u)}{\sqrt{sT^{1/3}\lambda_0-u}}du \right|.\]
It is straightforward to show by \eqref{ineq:sigma} that the contribution of the region $|u|\geq 2\sqrt{s}\lambda_0$ to the latter integral is $\mathcal O({\rm e}^{-\sqrt{s}\lambda_0})$ as $s\to\infty$, uniformly in $T$.
Moreover, since  
$\int_{-2\sqrt{s}\lambda_0}^{2\sqrt{s}\lambda_0}
\frac{\sigma(u)-{\bf 1}_{(0,+\infty)}(u)}{\sqrt{|sT^{1/3}\lambda_0|}}du=0$ and since $\frac{T^{1/6}}{\pi s}\leq \frac{M^{1/6}}{\pi s^{3/4}}$ is small, we obtain that 
\begin{align*}
&\left|\lambda_0-1+\frac{2T^{1/3}}{\pi s^{1/2}}\sqrt{\lambda_0}\right|\nonumber\\ &\leq
\frac{T^{1/6}}{\pi s}\left|
\int_{-2\sqrt{s}\lambda_0}^{2\sqrt{s}\lambda_0}
\left(\sigma(u)-{\bf 1}_{(0,+\infty)}(u)\right)\left(\frac{1}{\sqrt{|sT^{1/3}\lambda_0-u|}}-\frac{1}{\sqrt{|sT^{1/3}\lambda_0|}}\right)du \right|
+\frac{1}{2}{\rm e}^{-\sqrt{s}\lambda_0}\nonumber\\
&= \frac{1}{\pi s^{3/2}}\left|
\int_{-2\sqrt{s}\lambda_0}^{2\sqrt{s}\lambda_0}
\left(\sigma(u)-{\bf 1}_{(0,+\infty)}(u)\right)\left(\left(1-\frac{u}{sT^{1/3}\lambda_0}\right)^{-1/2}-1\right)du \right|
+\frac{1}{2}{\rm e}^{-\sqrt{s}\lambda_0}
,\end{align*}
for $s$ sufficiently large.
Since $|(1-y)^{-1/2}-1|\leq |y|$ for $y$ sufficiently small, we have for $s$ sufficiently large that
\begin{equation}\left|\lambda_0-1+\frac{2T^{1/3}}{\pi s^{1/2}}\sqrt{\lambda_0}\right|\leq \frac{1}{\pi s^{5/2}T^{1/3}\lambda_0}\int_{-\infty}^{+\infty}\left|\left(\sigma(u)-{\bf 1}_{(0,+\infty)}(u)\right)u \right| du+\frac{1}{2}{\rm e}^{-\sqrt{s}\lambda_0}.\label{eq:boundlam0}\end{equation}
From this, we can first derive that $\lambda_0$ is bounded below by a positive constant independent of $s, T$. Indeed, if we suppose that $\lambda_0\to 0$, the left hand side would tend to $1$ and the second term at the right is smaller than $1/2$, which implies that the first term at the right cannot tend to $0$. In other words, $s^{5/2}T^{1/3}\lambda_0$ remains bounded, and $sT^{1/3}\lambda_0=\mathcal O(s^{-3/2})$. Substituting this in \eqref{eq:endpointequation}, we obtain
\[|\lambda_0-1|\leq 
\frac{s^{-1/2}T^{1/3}}{\pi}\int_{-\infty}^0\frac{\sigma(sT^{1/3}u+1)}{\sqrt{-u}}du\leq 
\frac{T^{1/6}}{\pi s}\int_{-\infty}^0\frac{\sigma(v+1)}{\sqrt{-v}}dv=\mathcal O(s^{-3/4}),
\]
which implies the contradiction that $\lambda_0$ would tend to $1$.

Knowing that $\lambda_0$ is bounded below by a positive constant, we obtain from \eqref{eq:boundlam0} that
\[\left|\lambda_0-1+\frac{2T^{1/3}}{\pi s^{1/2}}\sqrt{\lambda_0}\right|=\mathcal O(s^{-5/2}T^{-1/3})\]
as $s\to\infty$ uniformly in $T$,
and it is then straightforward to complete the proof (recall that the inequality $\lambda_0 < 1$ was already observed in Remark \ref{rem:unicitylambda0}).
\end{proof}

\subsection{Properties of $g$}

We start this subsection with a technical proposition which will be crucial in the RH analysis later on. 

\begin{proposition}\label{prop:ineqspsi}
The following inequalities are satisfied :
\begin{align}
&\label{ineq1}\left| {\rm e}^{-s^{3/2}\left(2g(\lambda)-V(\lambda)+V(\lambda_0)\right)}\right|\leq {\rm e}^{-\frac{4}{3}s^{3/2}(\lambda-\lambda_0)^{3/2}}&\mbox{for $\lambda>\lambda_0$,}\\
&\label{ineq2}\left| {\rm e}^{s^{3/2}\left(2g(\lambda)-V(\lambda)+V(\lambda_0)\right)}\right|\leq 2{\rm e}^{-\frac{2\sqrt{2}}{3}s^{3/2}|\lambda-\lambda_0|^{3/2}}
&\mbox{for $\lambda\in \lambda_0+i\mathbb R$.}
\end{align}
\end{proposition}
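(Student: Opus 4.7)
My plan is to treat the two inequalities separately via the combined phase $\phi(\lambda) := 2g(\lambda) - V(\lambda) + V(\lambda_0)$, which vanishes at $\lambda_0$ since $g(\lambda_0) = 0$ by definition \eqref{def:g}.

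For \eqref{ineq1}, on the real half-line $(\lambda_0,\infty)$ the function $\phi$ is real-valued, so it suffices to prove $\phi(\lambda) \geq \tfrac{4}{3}(\lambda-\lambda_0)^{3/2}$. Differentiating and using the expression for $g'$ from Lemma \ref{lemma:g'} together with $V'(\lambda) = -s^{-1/2}T^{1/3}\sigma(sT^{1/3}\lambda)$, I rearrange
\[
\phi'(\lambda) - 2(\lambda-\lambda_0)^{1/2} = s^{-1/2}T^{1/3}\left[\sigma(sT^{1/3}\lambda) - \frac{(\lambda-\lambda_0)^{1/2}}{\pi}\int_{-\infty}^{\lambda_0}\frac{\sigma(sT^{1/3}\xi)\,d\xi}{\sqrt{\lambda_0-\xi}\,(\lambda-\xi)}\right].
\]
Monotonicity of $\sigma$ gives $\sigma(sT^{1/3}\xi) \leq \sigma(sT^{1/3}\lambda)$ for $\xi < \lambda_0 \leq \lambda$, and the substitution $\xi = \lambda_0 - (\lambda-\lambda_0)v^2$ yields the elementary identity $\int_{-\infty}^{\lambda_0}\tfrac{d\xi}{\sqrt{\lambda_0-\xi}(\lambda-\xi)} = \tfrac{\pi}{\sqrt{\lambda-\lambda_0}}$. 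Hence the bracket is nonnegative, so $\phi'(\lambda) \geq 2(\lambda-\lambda_0)^{1/2}$; integrating from $\lambda_0$ gives \eqref{ineq1}.

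For \eqref{ineq2} on $\lambda_0 + i\mathbb{R}$, I would first exploit the identity $e^{-s^{3/2}V(\lambda)} = 1 + e^{sT^{1/3}\lambda}$, which is immediate from \eqref{def:h} and $1 - \sigma(x) = 1/(1+e^x)$. This gives
\[
\bigl|e^{s^{3/2}\phi(\lambda)}\bigr| = \frac{\bigl|1 + e^{sT^{1/3}\lambda}\bigr|}{1 + e^{sT^{1/3}\lambda_0}}\, e^{2s^{3/2}\Re g(\lambda)},
\]
and the prefactor is bounded by $1$ by the triangle inequality, reducing the target to $\Re g(\lambda_0+it) + \tfrac{\sqrt{2}}{3}|t|^{3/2} \leq \tfrac{1}{2}s^{-3/2}\ln 2$. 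To obtain a workable formula, I integrate the expression for $g'$ from $\lambda_0$ to $\lambda_0 + it$ and use the endpoint equation \eqref{eq:endpointequation} to eliminate the Cauchy integral of $\sigma/\sqrt{\lambda_0-\xi}$, producing the compact representation
\[
g(\lambda) = \tfrac{2}{3}(\lambda-\lambda_0)^{3/2} - (1-\lambda_0)(\lambda-\lambda_0)^{1/2} + \frac{s^{-1/2}T^{1/3}}{\pi}\int_{-\infty}^{\lambda_0}\sigma(sT^{1/3}\xi)\arctan\frac{(\lambda-\lambda_0)^{1/2}}{\sqrt{\lambda_0-\xi}}\,d\xi.
\]

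Evaluating at $\lambda = \lambda_0 + it$ with the principal branches, the first two explicit terms contribute $-\tfrac{\sqrt{2}}{3}|t|^{3/2}$ and $-\tfrac{1-\lambda_0}{\sqrt{2}}|t|^{1/2}$ to $\Re g$; the former matches the leading target term exactly, so the task reduces to bounding the real part of the arctangent integral by $\tfrac{1-\lambda_0}{\sqrt{2}}|t|^{1/2} + \mathcal{O}(s^{-3/2})$. For this I would use the elementary identity $\Re[\arctan(re^{i\pi/4})] = \tfrac{1}{\sqrt{2}}\int_0^r \tfrac{1+u^2}{1+u^4}\,du$, swap the order of integration, and compute the inner $\xi$-integral in closed form via $\int \sigma(sT^{1/3}\xi)\,d\xi = s^{-1}T^{-1/3}\log(1+e^{sT^{1/3}\xi})$. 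In the regime $|t| \to 0$ the linearization $\arctan(w) \approx w$ combined with the endpoint equation provides exact cancellation against the $\tfrac{1-\lambda_0}{\sqrt{2}}|t|^{1/2}$ term, and for $|t|$ of order one or larger the arctangent saturates at $\pi/2$ while the explicit negative $|t|^{1/2}$ contribution (of size $\sim T^{1/3}/\sqrt{s}$ by Proposition \ref{prop:asendpoint}) dominates. I expect the main obstacle to be making this cancellation uniform in $|t|$ and in the parameter range $M^{-1} \leq T \leq Ms^{3/2}$, in particular sharply quantifying the higher-order discrepancy from the arctangent linearization so that it fits into the $s^{-3/2}\ln 2$ remainder corresponding to the factor $2$ in the right-hand side of \eqref{ineq2}.
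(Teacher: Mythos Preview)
Your proof of \eqref{ineq1} is correct and is essentially the paper's argument: the paper packages your bracket as the single expression $\frac{\sigma(sT^{1/3}\xi)-\sigma(sT^{1/3}\lambda)}{\xi-\lambda}\geq 0$ and invokes the same integral identity $\int_{-\infty}^{\lambda_0}\frac{d\xi}{\sqrt{\lambda_0-\xi}\,(\lambda-\xi)}=\pi/\sqrt{\lambda-\lambda_0}$.

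For \eqref{ineq2}, your reduction is correct and in one respect sharper than the paper's: the triangle inequality $|1+e^{sT^{1/3}\lambda}|\leq 1+e^{sT^{1/3}\lambda_0}$ on $\lambda\in\lambda_0+i\mathbb R$ indeed bounds the prefactor by $1$ (the paper only bounds it by~$2$, using $\lambda_0>0$). Your arctangent representation of $g$ is also correct. The gap is in the final step. After your reduction you must show
\[
\frac{s^{-1/2}T^{1/3}}{\pi}\int_{-\infty}^{\lambda_0}\sigma(sT^{1/3}\xi)\,\Re\arctan\!\frac{\sqrt{t}\,e^{i\pi/4}}{\sqrt{\lambda_0-\xi}}\,d\xi\ \leq\ \frac{1-\lambda_0}{\sqrt 2}\sqrt{t},
\]
and here your proposed route (Fubini, then a small-$t$/large-$t$ dichotomy calibrated by Proposition~\ref{prop:asendpoint}) does not close. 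The obstruction is structural: with $r=\sqrt{t}/\sqrt{\lambda_0-\xi}$ one has $\Re\arctan(re^{i\pi/4})-r/\sqrt 2=\frac{1}{\sqrt 2}\int_0^r\frac{u^2(1-u^2)}{1+u^4}\,du$, which is \emph{positive} for $r$ below a threshold and negative above it, so there is no pointwise inequality and you are forced into the delicate cancellation you flag as the ``main obstacle''. You also invoke large-$s$ asymptotics for $\lambda_0$, whereas the proposition is meant to hold without that restriction.

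The paper avoids all of this with one integration by parts in $\xi$, performed at the level of $g'$ rather than $g$. This converts the weight $\sigma(sT^{1/3}\xi)\,d\xi$ into the positive measure $d\sigma(sT^{1/3}\xi)$ and the Cauchy kernel into a logarithm, giving the representation \eqref{exprg2}:
\[
g'(\eta)=(\eta-\lambda_0)^{1/2}+\frac{s^{-1/2}T^{1/3}}{2\pi i}\int_{-\infty}^{\lambda_0}\log\frac{(\eta-\lambda_0)^{1/2}-i\sqrt{\lambda_0-\xi}}{(\eta-\lambda_0)^{1/2}+i\sqrt{\lambda_0-\xi}}\,d\sigma(sT^{1/3}\xi).
\]
For $\eta=\lambda_0+i\tau$ with $\tau>0$ one has $(\eta-\lambda_0)^{1/2}=\tau^{1/2}e^{i\pi/4}$, whose real and imaginary parts are both positive; hence the numerator is strictly closer to the origin than the denominator, and $\Re\log(\cdot)<0$ \emph{pointwise} in $\xi$. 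Since $d\sigma>0$, the inner integral has negative real part for every $\eta$ on the ray, and integrating $g'$ along $\lambda_0+i[0,t]$ yields $\Re g(\lambda_0+it)\leq -\tfrac{\sqrt 2}{3}t^{3/2}$ directly---no case analysis, no slack, and uniformly in all parameters. This single integration by parts is the idea missing from your argument.
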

\begin{proof}
For the first assertion, we observe after a straightforward computation using \eqref{exprg} and the identity $$\sqrt{\lambda-\lambda_0}\int_{-\infty}^{\lambda_0}\frac{d\xi}{\sqrt{\lambda_0-\xi}(\xi-\lambda)}=-\pi$$
that
\begin{equation}\label{exprgprime}2g'(\lambda)-V'(\lambda)=2\sqrt{\lambda-\lambda_0}\left(1+\frac{s^{-1/2}T^{1/3}}{2\pi}\int_{-\infty}^{\lambda_0}\frac{\sigma(sT^{1/3}\xi)-\sigma(sT^{1/3}\lambda)}{\sqrt{\lambda_0-\xi}(\xi-\lambda)}d\xi\right)\geq 2\sqrt{\lambda-\lambda_0}\end{equation}
for $\lambda>\lambda_0$.
After integration, we obtain by \eqref{def:g} that
\[2g(\lambda)-V(\lambda)+V(\lambda_0)\geq 2\int_{\lambda_0}^{\lambda}\sqrt{\xi-\lambda_0}d\xi=\frac{4}{3}(\lambda-\lambda_0)^{3/2}\]
for $\lambda>\lambda_0$, and \eqref{ineq1} follows easily.

\medskip

For \eqref{ineq1}, the cases $\lambda\in i\mathbb R^+$ and $\lambda\in i\mathbb R^-$ are similar, and we restrict to $\lambda$ in the upper half plane.
By \eqref{def:h} and \eqref{eq:Fredholmid1}, we have
\begin{align*}
\left|{\rm e}^{s^{3/2}\left(2g(\lambda)-V(\lambda)+V(\lambda_0)\right)}\right|&=\left|{\rm e}^{2s^{3/2}g(\lambda)}\right|\frac{\left|1-\sigma(sT^{1/3}\lambda_0)\right|}{\left|1-\sigma(sT^{1/3}\lambda)\right|}=\left|{\rm e}^{2s^{3/2}g(\lambda)}\right|\frac{\left|1+{\rm e}^{sT^{1/3}\lambda}\right|}{\left|1+{\rm e}^{sT^{1/3}\lambda_0}\right|}\\
&\leq \left|{\rm e}^{2s^{3/2}g(\lambda)}\right|\frac{1+{\rm e}^{sT^{1/3}\lambda_0}}{{\rm e}^{sT^{1/3}\lambda_0}}\leq 2\left|{\rm e}^{2s^{3/2}g(\lambda)}\right|,
\end{align*}
where we used the fact that $\lambda_0>0$ in the last step.
It remains to show that 
$$\left|{\rm e}^{2s^{3/2}g(\lambda)}\right|\leq {\rm e}^{-\frac{2\sqrt{2}}{3}s^{3/2}|\lambda-\lambda_0|^{3/2}}.$$
To see this, we use \eqref{def:g}
and \eqref{exprg2} and obtain, using the fact that $\lambda\in\lambda_0+i\mathbb R^+$,
\[\left|{\rm e}^{ 2s^{3/2}g(\lambda)}\right|={\rm e}^{-\frac{2\sqrt{2}}{3}s^{3/2}|\lambda-\lambda_0|^{3/2}}
\left|
{\rm e}^{\frac{sT^{1/3}}{\pi i}\int_{\lambda_0}^{\lambda}
\left(
\int_{-\infty}^{\lambda_0}\log\frac{(\eta-\lambda_0)^{1/2}-i\sqrt{\lambda_0-\xi}}{(\eta-\lambda_0)^{1/2}+i\sqrt{\lambda_0-\xi}}d\sigma(sT^{1/3}\xi)\right)
d\eta}\right|.\]
It now remains to show that the last factor in the above expression is smaller than $1$, which is true if we can show that 
$$
\Re \left(\int_{-\infty}^{\lambda_0}\log\frac{(\eta-\lambda_0)^{1/2}-i\sqrt{\lambda_0-\xi}}{(\eta-\lambda_0)^{1/2}+i\sqrt{\lambda_0-\xi}}d\sigma(sT^{1/3}\xi) \right) < 0 \quad \forall \eta \in\lambda_0+i\mathbb R^+
$$
This follows from the observation that, for any $\xi\in\mathbb R$ and for any $\eta\in\lambda_0+i\mathbb R^+$,
\[\left|(\eta-\lambda_0)^{1/2}-i\sqrt{\lambda_0-\xi}\right|< \left|(\eta-\lambda_0)^{1/2}+i\sqrt{\lambda_0-\xi}\right|.\]
This completes the proof.
\end{proof}

Now we define, for any $\lambda \notin i\R$ (recall that $\sigma$ has poles on the imaginary line), 
\begin{equation}\label{eq:defw1}
	w(\lambda) := 1+\frac{s^{-1/2}T^{1/3}}{2\pi}\int_{-\infty}^{\lambda_0}\frac{\sigma(sT^{1/3}\xi)-\sigma(sT^{1/3}\lambda)}{\sqrt{\lambda_0-\xi}(\xi-\lambda)}d\xi.
\end{equation}
By \eqref{exprgprime} and the identity theorem, we have
\begin{equation}
	2g'(\lambda) - V'(\lambda) = 2(\lambda - \lambda_0)^{1/2}w(\lambda)
\end{equation}
for $\lambda\in\mathbb C\setminus((-\infty,\lambda_0]\cup i\mathbb R)$.
By taking limits, we obtain for $\lambda<\lambda_0$ that 
\begin{equation}\label{equationgVp}
	2g'_+(\lambda) - V'(\lambda) = 2i \sqrt{\lambda_0 - \lambda}w(\lambda).
\end{equation}
We also denote
\begin{equation}\label{def:psi}
	\psi(\lambda) := 2\sqrt{\lambda_0-\lambda}\,w(\lambda) = -2ig_+'(\lambda)+iV'(\lambda),\qquad \lambda \in (-\infty,\lambda_0),
\end{equation}
and observe that, by 
\eqref{exprg} and \eqref{eq:jumpgprime}, we have
\begin{equation}\label{exprpsi}
\psi(\lambda) =2\sqrt{\lambda_0 - \lambda}\left(1+\frac{s^{-1/2}T^{1/3}}{2\pi}P.V.\int_{-\infty}^{\lambda_0}
\frac{\sigma(sT^{1/3}\xi)}{\sqrt{\lambda_0-\xi}}\frac{d\xi}{\xi-\lambda}\right),
\end{equation}
where we denote with $P.V.$ the Cauchy principal value integral.
By \eqref{exprg2}, we can also write
\begin{equation}\label{exprpsi2}
\psi(\lambda)=2\sqrt{\lambda_0-\lambda}+\frac{s^{-1/2}T^{1/3}}{\pi}\int_{-\infty}^{\lambda_0}
\log\frac{\sqrt{\lambda_0-\lambda}+
\sqrt{\lambda_0-\xi}}{\left|\sqrt{\lambda_0-\lambda}-
\sqrt{\lambda_0-\xi}\right|}
d\sigma(sT^{1/3}\xi).
\end{equation}
It is straightforward to verify that $\psi(\lambda)\geq 2\sqrt{\lambda_0-\lambda}$ for any $s,T>0$ and for any $\lambda<\lambda_0$.

\begin{remark}
If $s^{-1/2}T^{1/3}\to 0$, $\lambda_0\to 1$ by \eqref{eq:endpointequation}, and using \eqref{exprpsi} we have that
 $\frac{\sqrt{s}}{2\pi}\psi(\lambda)$ can be approximated by $\frac{\sqrt{s}}{\pi}\sqrt{1-\lambda}$ or, in terms of $\zeta=s\lambda-s$, by $\frac{1}{\pi}\sqrt{-\zeta}$, which is the limiting density of the Airy point process given in \eqref{eq:Airydensity}.
For general $s,T$, in analogy to the Coulomb gas picture from \cite{CGKLDT}, one can heuristically interpret $\frac{\sqrt{s}}{2\pi}\psi(\lambda)$ as an approximation of the density of the points in the Airy point process, but influenced by an additional force which pushes the particles to the left, and which becomes stronger when $T$ increases.
Although we do not have a proof of this fact, we expect that $\frac{\sqrt{s}}{2\pi}\psi(\lambda)$ can be interpreted as the limiting density of the Airy point process, conditioned on the event that the thinned Airy point process (recall the discussion below \eqref{eq:BorodinGorin}) is empty.
\end{remark}

From the representation \eqref{exprpsi2}, we are able to derive the following estimates which will be needed later on.

\begin{proposition}
\label{prop:estimatespsi0}
For any $M>0$, as $s\to\infty$, we have uniformly in $M^{-1}\leq T\leq Ms^{3/2}$ that
\begin{equation}\label{eq:psiest1}
\psi(0)=2\sqrt{\lambda_0}+\frac{s^{-1/2}T^{1/3}}{\pi}\log(sT^{1/3})+\mathcal O(s^{-1/2}T^{1/3}),
\end{equation}
and with the same uniformity, but in addition uniformly in $\lambda\leq \lambda_0$,
\begin{equation}\label{eq:psiest2}
\psi(\lambda)-\psi(0)=2\sqrt{\lambda_0-\lambda}-2\sqrt{\lambda_0}+\mathcal O(s^{-1/2}T^{1/3}\lambda)+\mathcal O_{\rm sym}\left(s^{-1/2}T^{1/3}\log\left(2+ sT^{1/3}|\lambda|)\right)\right),
\end{equation}
where $\mathcal O_{\rm sym}\left(s^{-1/2}T^{1/3}\log\left(2+ sT^{1/3}|\lambda|)\right)\right)$ denotes a function of order $\mathcal O\left(s^{-1/2}T^{1/3}\log\left(2+ sT^{1/3}|\lambda|)\right)\right)$ which is moreover symmetric as a function of $\lambda$.
\end{proposition}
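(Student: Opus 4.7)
The plan is to analyze the integral in \eqref{exprpsi2} by substituting $u=sT^{1/3}\xi$, so that $d\sigma(sT^{1/3}\xi)=\sigma'(u)\,du$ with $\sigma'(u)=\frac{{\rm e}^{-u}}{(1+{\rm e}^{-u})^2}$ even and exponentially concentrated at $u=0$. By Proposition \ref{prop:asendpoint}, $sT^{1/3}\lambda_0\to+\infty$ uniformly in $T$ on $M^{-1}\leq T\leq Ms^{3/2}$, so truncation errors at the upper endpoint are exponentially small. A convenient first reduction is the algebraic identity
\[\log\frac{\sqrt{\lambda_0-\lambda}+\sqrt{\lambda_0-\xi}}{|\sqrt{\lambda_0-\lambda}-\sqrt{\lambda_0-\xi}|}=2\log\bigl(\sqrt{\lambda_0-\lambda}+\sqrt{\lambda_0-\xi}\bigr)-\log|\xi-\lambda|,\]
which follows from $(\sqrt{\lambda_0-\lambda}+\sqrt{\lambda_0-\xi})(\sqrt{\lambda_0-\lambda}-\sqrt{\lambda_0-\xi})=\xi-\lambda$; this separates the integrand into a regular ``geometric mean'' piece and a logarithmic singularity at $\xi=\lambda$, which can be handled independently.

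For \eqref{eq:psiest1}, setting $\lambda=0$ and rescaling, the singular piece becomes $-\int_{-\infty}^{sT^{1/3}\lambda_0}\log|u/(sT^{1/3})|\,\sigma'(u)\,du$, which equals $\log(sT^{1/3})+\mathcal O(1)$ because $\int_{\mathbb R}\sigma'(u)\,du=1$ and $\int_{\mathbb R}\log|u|\sigma'(u)\,du$ is a finite absolute constant. The regular piece $2\log(\sqrt{\lambda_0}+\sqrt{\lambda_0-u/(sT^{1/3})})$ equals $2\log(2\sqrt{\lambda_0})+\mathcal O(1/(sT^{1/3}))$ on the support of $\sigma'$ and hence contributes $\mathcal O(1)$. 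Multiplying by $\frac{s^{-1/2}T^{1/3}}{\pi}$ yields \eqref{eq:psiest1}.

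For \eqref{eq:psiest2}, the same decomposition gives
\[\psi(\lambda)-\psi(0)=2\sqrt{\lambda_0-\lambda}-2\sqrt{\lambda_0}+\frac{s^{-1/2}T^{1/3}}{\pi}\bigl(2A(\lambda)-B(\lambda)\bigr),\]
with $A(\lambda):=\int\log\frac{\sqrt{\lambda_0-\lambda}+\sqrt{\lambda_0-\xi}}{\sqrt{\lambda_0}+\sqrt{\lambda_0-\xi}}\,d\sigma(sT^{1/3}\xi)$ and $B(\lambda):=\int\log\frac{|\xi-\lambda|}{|\xi|}\,d\sigma(sT^{1/3}\xi)$. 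After substituting $u=sT^{1/3}\xi$ and setting $\mu:=sT^{1/3}\lambda$, the integral $B$ becomes $\int\log\frac{|u-\mu|}{|u|}\sigma'(u)\,du$; because $\sigma'$ is even, its extension to the whole real line is an even function of $\mu$, and splitting the integration into $|u|\leq|\mu|/2$ and $|u|\geq|\mu|/2$ bounds it by $\mathcal O(\log(2+|\mu|))$, which supplies the $\mathcal O_{\rm sym}$ contribution. For $A(\lambda)$, the approximation $\sqrt{\lambda_0-\xi}=\sqrt{\lambda_0}+\mathcal O(u/(sT^{1/3}))$ on the support of $\sigma'$ gives $A(\lambda)=\log\frac{\sqrt{\lambda_0-\lambda}+\sqrt{\lambda_0}}{2\sqrt{\lambda_0}}+\mathcal O(s^{-1}T^{-1/3})$, and the explicit principal term decomposes into an antisymmetric piece in $\lambda$ (linear to leading order around $\lambda=0$, feeding the $\mathcal O(s^{-1/2}T^{1/3}\lambda)$ bucket) and an even piece of order $\mathcal O(\log(2+|\lambda|))$ that joins the $\mathcal O_{\rm sym}$ bucket.

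The main technical hurdle is keeping all errors uniform in $\lambda\in(-\infty,\lambda_0]$ together with $T\in[M^{-1},Ms^{3/2}]$. When $\lambda$ is very negative the literal odd/even splitting of $A(\lambda)$ around $\lambda=0$ is formal, since $-\lambda$ may exceed $\lambda_0$ and lie outside the physical domain of $\psi$; one must therefore interpret the symmetric remainder as the even continuation of the underlying integral representation rather than as a reflection of $\psi$ itself. Likewise, the exponentially small tail errors introduced by truncating at $sT^{1/3}\lambda_0$ in both $A$ and $B$ must be shown to remain uniformly exponentially small across the entire $T$-range using the strict positivity of $\lambda_0$ from Proposition \ref{prop:asendpoint}.
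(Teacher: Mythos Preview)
Your proposal follows essentially the same route as the paper: start from \eqref{exprpsi2}, use the identity $\log\frac{\sqrt{\lambda_0-\lambda}+\sqrt{\lambda_0-\xi}}{|\sqrt{\lambda_0-\lambda}-\sqrt{\lambda_0-\xi}|}=2\log(\sqrt{\lambda_0-\lambda}+\sqrt{\lambda_0-\xi})-\log|\xi-\lambda|$ to separate the regular part from the logarithmic singularity, rescale $u=sT^{1/3}\xi$, and exploit the evenness of $\sigma'$ to identify the symmetric remainder. Your treatment of $\psi(0)$ and of the $B(\lambda)$ piece matches the paper almost verbatim.

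The one substantive difference is your handling of $A(\lambda)=\int_{-\infty}^{\lambda_0}\log\frac{\sqrt{\lambda_0-\lambda}+\sqrt{\lambda_0-\xi}}{\sqrt{\lambda_0}+\sqrt{\lambda_0-\xi}}\,d\sigma(sT^{1/3}\xi)$. You first approximate it by $\log\frac{\sqrt{\lambda_0-\lambda}+\sqrt{\lambda_0}}{2\sqrt{\lambda_0}}$ and then attempt an odd/even splitting in $\lambda$, which forces you to confront the domain issue you flag in your last paragraph (the reflection $\lambda\mapsto-\lambda$ need not stay in $(-\infty,\lambda_0]$). The paper avoids this entirely by observing directly that the whole term is $\mathcal O(s^{-1/2}T^{1/3}\lambda)$: the integrand vanishes at $\lambda=0$, has $\lambda$-derivative uniformly bounded on the effective support of $d\sigma(sT^{1/3}\xi)$ (which is concentrated near $\xi=0$, hence away from $\lambda_0$), and for $\lambda\to-\infty$ grows only like $\tfrac12\log|\lambda|=\mathcal O(|\lambda|)$. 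No symmetric piece of $A$ needs to enter the $\mathcal O_{\rm sym}$ bucket at all. Your route can be pushed through, but the paper's direct bound is both shorter and sidesteps the extension subtlety.
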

\begin{proof}
We obtain from \eqref{exprpsi2} that
\begin{multline*}\psi(0)=2\sqrt{\lambda_0}+\frac{2s^{-1/2}T^{1/3}}{\pi}\int_{-\infty}^{\lambda_0}\log\left(\sqrt{\lambda_0}+\sqrt{\lambda_0-\xi}\right)d\sigma(sT^{1/3}\xi)\\
-\frac{s^{-1/2}T^{1/3}}{\pi}\int_{-\infty}^{\lambda_0}\log|\xi|d\sigma(sT^{1/3}\xi).
\end{multline*}
The second term at the right hand side is uniformly $\mathcal O(s^{-1/2}T^{1/3})$, and the last term is equal to 
\begin{multline*}-\frac{s^{-1/2}T^{1/3}}{\pi}\int_{-\infty}^{sT^{1/3}\lambda_0}\log|u|d\sigma(u)
+\frac{s^{-1/2}T^{1/3}}{\pi}\log(sT^{1/3})\int_{-\infty}^{\lambda_0}d\sigma(sT^{1/3}\xi)
\\=\frac{s^{-1/2}T^{1/3}}{\pi}\log(sT^{1/3})+\mathcal O(s^{-1/2}T^{1/3}).\end{multline*}
This yields \eqref{eq:psiest1}.

For \eqref{eq:psiest2}, we have after a straightforward computation that
\begin{multline*}\psi(\lambda)-\psi(0)=2\sqrt{\lambda_0-\lambda}-2\sqrt{\lambda_0}+\frac{2s^{-1/2}T^{1/3}}{\pi}\int_{-\infty}^{\lambda_0}\log\frac{\sqrt{\lambda_0-\lambda}+\sqrt{\lambda_0-\xi}}{\sqrt{\lambda_0}+\sqrt{\lambda_0-\xi}}d\sigma(sT^{1/3}\xi)\\
-\frac{s^{-1/2}T^{1/3}}{\pi}\int_{-\infty}^{\lambda_0}\log\left|\frac{\xi-\lambda}{\xi}\right|d\sigma(sT^{1/3}\xi).
\end{multline*}
It is easy to verify that the integral on the first line gives a uniform $\mathcal O(s^{-1/2}T^{1/3}\lambda)$ contribution. Up to an exponentially small term as $s\to\infty$ coming from the integration over $(-\infty,-\lambda_0)$, the term on the second line can be written as
\[-\frac{s^{-1/2}T^{1/3}}{\pi}\int_{-\lambda_0}^{\lambda_0}\log\left|\frac{\xi-\lambda}{\xi}\right|d\sigma(sT^{1/3}\xi)=-\frac{s^{-1/2}T^{1/3}}{\pi}\int_{-sT^{1/3}\lambda_0}^{sT^{1/3}\lambda_0}\log\left|\frac{u-sT^{1/3}\lambda}{u}\right|\sigma'(u)du.\]
This is a symmetric function of $\lambda$, since $\sigma'$ is symmetric.
For $\lambda=\mathcal O(s^{-1}T^{-1/3})$, this is of order $\mathcal O(s^{-1/2}T^{1/3})$. As $sT^{1/3}|\lambda|\to\infty$, it can be estimated as $\mathcal O\left((s^{-1/2}T^{1/3}\log(sT^{1/3}|\lambda|)\right)$, and the second estimate \eqref{eq:psiest2} follows from these observations.
\end{proof}

\begin{proposition}\label{prop:hatpsi}
For $s$ sufficiently large and for $M^{-1}\leq T\leq Ms^{3/2}$ with $M>0$ arbitrary, there exists $\epsilon>0$ such that $w$ is analytic for $|\lambda-\lambda_0|<\epsilon$. 
Moreover, $w(\lambda_0)\geq 1$ and
\[w(\lambda)=w(\lambda_0)+\mathcal O(\lambda-\lambda_0),\] uniformly in $M^{-1}\leq T\leq Ms^{3/2}$ and $|\lambda-\lambda_0|<\epsilon$ as $s\to\infty$.
\end{proposition}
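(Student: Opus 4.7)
The plan is to establish the three claims in turn, using the integral representation \eqref{eq:defw1} of $w$ and the uniform lower bound $\lambda_0 \ge \kappa > 0$ supplied by Proposition~\ref{prop:asendpoint}.

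\emph{Analyticity.} For each fixed $\xi < \lambda_0$, the integrand in \eqref{eq:defw1} is analytic in $\lambda$ wherever $\sigma(sT^{1/3}\lambda)$ is, since $\xi = \lambda$ is a removable singularity of the difference quotient. The poles of $\sigma(sT^{1/3}\cdot)$ all lie on $i\R$, so choosing $\epsilon := \kappa/2$ keeps the disk $|\lambda-\lambda_0|<\epsilon$ inside $\Re\lambda > \kappa/2$, uniformly separated from these poles. Combined with the exponential decay of $\sigma(sT^{1/3}\xi)$ as $\xi\to-\infty$ and the integrability of $(\lambda_0-\xi)^{-1/2}$ near $\lambda_0$, this justifies differentiation under the integral sign and proves that $w$ is analytic on this disk.

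\emph{Nonnegativity of $w(\lambda_0) - 1$.} Substituting $\xi = \lambda_0 - u^2$ in \eqref{eq:defw1} evaluated at $\lambda = \lambda_0$ gives
\[
w(\lambda_0) = 1 + \frac{s^{-1/2}T^{1/3}}{\pi}\int_0^\infty \frac{\sigma(sT^{1/3}\lambda_0) - \sigma(sT^{1/3}(\lambda_0 - u^2))}{u^2}\,du,
\]
and the integrand is nonnegative because $\sigma$ is strictly increasing; hence $w(\lambda_0) \ge 1$.

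\emph{Lipschitz bound.} Since $w$ is analytic on $|\lambda-\lambda_0|<\epsilon$, Cauchy's formula applied on $|\lambda-\lambda_0|\le\epsilon/4$ yields $w(\lambda)-w(\lambda_0)=\mathcal O(\lambda-\lambda_0)$ with constant controlled by $\sup_{|\zeta-\lambda_0|=\epsilon/2}|w(\zeta)|$. The remaining task is a uniform bound on $|w|$ on this circle for $M^{-1}\le T\le Ms^{3/2}$. I would split the $\xi$-integral in \eqref{eq:defw1} into three pieces and bound each separately. On $\xi\in(-\infty,0)$, the estimates $|\sigma(sT^{1/3}\xi)-\sigma(sT^{1/3}\lambda)|\le 2$, $|\xi-\lambda|\ge \kappa/2+|\xi|$, and $\sqrt{\lambda_0-\xi}\ge \sqrt{\kappa+|\xi|}$ reduce the integral to $\int_0^\infty du/(\sqrt{\kappa+u}(\kappa/2+u))=\mathcal O(1)$. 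On $\xi\in(0,\kappa/4)$, $|\xi-\lambda|$ is bounded below by a positive constant, so the integrand is uniformly bounded and the contribution is $\mathcal O(1)$. On $\xi\in(\kappa/4,\lambda_0)$, rewriting the difference quotient as $sT^{1/3}\int_0^1\sigma'(sT^{1/3}((1-t)\lambda+t\xi))\,dt$ and exploiting the bound $|\sigma'(z)|\le C\mathrm{e}^{-\Re z}$ for $\Re z\gg 1$ produces an exponentially small factor $\mathrm{e}^{-sT^{1/3}\kappa/4}$ that kills any polynomial growth. The full integral is thus $\mathcal O(1)$, and multiplication by the prefactor $s^{-1/2}T^{1/3}\le M^{1/3}$ yields the uniform bound $|w-1|=\mathcal O(1)$ on the circle.

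The principal obstacle is the upper end of the range $T\sim s^{3/2}$: the prefactor $s^{-1/2}T^{1/3}$ is only $\mathcal O(1)$ (not small), while $sT^{1/3}$ is very large, so the Lipschitz constant of $\sigma(sT^{1/3}\cdot)$ is huge; the argument relies crucially on the exponential decay of $\sigma$ and $1-\sigma$ away from $0$, together with the safety buffer $\lambda_0\ge \kappa$, to keep all three pieces of the integral uniformly bounded.
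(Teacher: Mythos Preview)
Your argument is correct and reaches the same conclusion as the paper, but the route you take for the Lipschitz estimate is genuinely different. The paper bounds $w'(\lambda)$ directly: it differentiates under the integral sign, splits the $\xi$-integration into $(-\infty,\lambda_0-2\epsilon)$ and $(\lambda_0-2\epsilon,\lambda_0)$, and on the second piece uses a Taylor expansion to control $\partial_\lambda h$ by $s^2T^{2/3}\max|\sigma''|$, which is exponentially small since $\lambda_0$ is uniformly positive. You instead bound $\sup|w|$ on a circle and invoke Cauchy's estimate, splitting the $\xi$-integral into $(-\infty,0)$, $(0,\kappa/4)$, and $(\kappa/4,\lambda_0)$, and on the last piece using the integral form of the mean value together with the decay of $\sigma'$. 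Your approach avoids differentiating under the integral sign and works with one derivative of $\sigma$ rather than two; the paper's approach is slightly more direct in that it bounds $w'$ pointwise without passing through a complex contour. Both hinge on the same mechanism you correctly identify at the end: the buffer $\lambda_0\ge\kappa$ forces the argument of $\sigma'$ (or $\sigma''$) to have large real part on the dangerous piece of the integral, so exponential decay absorbs the large Lipschitz constant $sT^{1/3}$ even when $T\sim s^{3/2}$.
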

\begin{proof}
The analyticity of $w(\lambda)$ in a neighborhood of $\lambda_0$ follows directly from the fact that $\lambda_0(s,T)\geq \kappa >0$ for  $s$ sufficiently large and $M^{-1}\leq T\leq Ms^{3/2}$ (by Proposition \ref{prop:asendpoint}) and it suffices to choose $\epsilon>0$ sufficiently small, such that the disk centered at $\lambda_0$ with radius $\epsilon$ does not intersect the imaginary axis.
It is also easy to see from the definition \eqref{eq:defw1} that
$w(\lambda)$ is real-valued for $\lambda\in(\lambda_0-\epsilon,\lambda_0+\epsilon)$, and that $w(\lambda_0)\geq 1$.

To prove that $w(\lambda)=w(\lambda_0)+\mathcal O(\lambda-\lambda_0)$ uniformly in $T$ and for $|\lambda-\lambda_0|<\epsilon$ as $s\to\infty$, it is sufficient to show that $w'(\lambda)=\mathcal O(1)$
uniformly.
To see this, we write
\[w'(\lambda)=1+\frac{s^{-1/2}T^{1/3}}{2\pi}\int_{-\infty}^{\lambda_0}\partial_\lambda h(\xi;\lambda)\frac{d\xi}{\sqrt{\lambda_0-\xi}}
\quad \mbox{ with }\quad h(\xi;\lambda)=\frac{\sigma(sT^{1/3}\xi)-\sigma(sT^{1/3}\lambda)}{\xi-\lambda}.\]
Separating the contributions to the integral coming from a small neighbourhood of $\lambda_0$ and elsewhere, we obtain 
\[
w'(\lambda)=1+\frac{s^{-1/2}T^{1/3}}{2\pi}\int_{-\infty}^{\lambda_0-2\epsilon}\partial_\lambda h(\xi;\lambda)\frac{d\xi}{\sqrt{\lambda_0-\xi}}+\frac{s^{-1/2}T^{1/3}}{2\pi}\int_{\lambda_0-2\epsilon}^{\lambda_0}\partial_\lambda h(\xi;\lambda)\frac{d\xi}{\sqrt{\lambda_0-\xi}}.
\]
For $\xi<\lambda_0-2\epsilon$, we have
\[
\left|\partial_\lambda h(\xi;\lambda)\right|\leq\left|\frac{\sigma(sT^{1/3}\xi)-\sigma(sT^{1/3}\lambda)}{(\xi-\lambda)^2}\right|+sT^{1/3}\left|\frac{\sigma'(sT^{1/3}\lambda)}{\xi-\lambda}\right|=\mathcal O\left(\frac{1}{|\xi| +1}\right),
\]
uniformly in $\xi, \lambda, T$, by the boundedness of $\sigma$ and the exponential decay of $\sigma'$.
For $\lambda_0-2\epsilon\leq \xi\leq \lambda_0$ on the other hand, Taylor expanding $\sigma(sT^{1/3}\xi)$ around $\lambda$ yields
\[
\left|\partial_\lambda h(\xi;\lambda)\right|=\left|\frac{\sigma(sT^{1/3}\xi)-\sigma(sT^{1/3}\lambda)-
sT^{1/3}\sigma'(sT^{1/3}\lambda)(\xi-\lambda)}{(\xi-\lambda^2)}\right|\leq s^2T^{2/3} \frac{\max_{\lambda_0-2\epsilon\leq \xi\leq \lambda_0}|\sigma''(sT^{1/3}\xi)|}{2}
\]
and this decays exponentially fast in $s$ if $\epsilon$ is sufficiently small.
Substituting these two estimates in the above expression for $w'(\lambda)$ and using the fact that $T\leq Ms^{3/2}$, we obtain that $w'(\lambda)$ is uniformly bounded.

\end{proof}

\section{Asymptotic analysis for $\Psi$ as $s\to \infty$}\label{section:RH}
The aim of this section is to obtain asymptotics as $s\to +\infty$ for the RH solution $\Psi$, uniformly in $M^{-1} < T < Ms^{3/2}$ for any $M>0$, via the Deift-Zhou steepest descent method \cite{DZ}. 
To achieve this goal, we will apply two invertible transformations $\Psi \mapsto S \mapsto R$, in order to arrive at a RH problem for $R$ with small jump matrices and normalized such that  $\lim_{\lambda\to\infty}R(\lambda)=I$.
We will then be able to conclude that the RH solution is uniformly in $T$ and $\lambda$ close to $I$ as $s\to\infty$, and by inverting the transformations, we can obtain asymptotics for $\Psi$.
For the first transformation, we will use the $g$-function constructed in the previous section. For the second one, we will need to construct two parametrices, a local parametrix near $\lambda_0$ and a global paramerix elsewhere, which will turn out to be good approximations to $S$.

 \subsection{First transformation $\Psi\mapsto S$}\label{S}
Given $g$ defined as in \eqref{exprg} and \eqref{def:g}, satisfying the properties \eqref{jumpg} and \eqref{as:g}, we define the matrix function $S(\lambda)$ as
\begin{equation}
\label{def:S}
S(\lambda) := {\rm e}^{-\frac{s^{3/2}}{2}V(\lambda_0)\sigma_3}\begin{pmatrix}1&{i\left(g_1-\frac{1}{4}\right)s^{3/2}}\\0&1\end{pmatrix}s^{-\frac{1}{4} \sigma_3}\Psi(s\lambda-s){\rm e}^{s^{3/2}\left(g(\lambda)+\frac{V(\lambda_0)}2\right)\sigma_3}
\end{equation}
where the constant $\zeta_0$ in the RH problem for $\Psi$ is taken such that $s\lambda_0 - s = \zeta_0$.
We will show below that $S$ satisfies the following RH problem.
\subsubsection*{RH problem for $S$}\label{RHS}
\begin{itemize}
\item[(a)] $S : \mathbb C \backslash \widehat\Gamma \rightarrow \mathbb C^{2\times 2}$ is analytic, with
\begin{equation}\label{eq:defGammahat}
\widehat\Gamma := \lambda_0 + (\R \cup i\R),
\end{equation}
oriented in the same way as the contour $\Gamma$ for $\Psi$, i.e.\ the real line is oriented from left to right, the vertical half-lines in the upper and lower half plane are pointing to $\lambda_0$.

\item[(b)] $S(\lambda)$ has continuous boundary values as $\lambda\in\widehat\Gamma\backslash \{\lambda_0\}$ is approached from the left or right and they are related by
\begin{equation}\label{JumpS}
\begin{array}{ll}
S_+(\lambda) = S_-(\lambda) \begin{pmatrix} 1 & 0 \\ {\rm e}^{s^{3/2}(2g(\lambda)-V(\lambda)+V(\lambda_0))} & 1 \end{pmatrix} & \textrm{for } \lambda \in \lambda_0+i\R^{\pm}, \\
S_+(\lambda) = S_-(\lambda) \begin{pmatrix} 0 & 1 \\ -1 & 0 \end{pmatrix} & \textrm{for } \lambda \in \lambda_0 + \R^-, \\
S_+(\lambda) = S_-(\lambda) \begin{pmatrix} 1 & {\rm e}^{-s^{3/2}(2g(\lambda)-V(\lambda)+V(\lambda_0))} \\ 0 & 1 \end{pmatrix} & \textrm{for } \lambda \in \lambda_0 + \R^+.
\end{array}
\end{equation}
\item[(c)] As $\lambda \rightarrow \infty$, $S$ has the asymptotic behavior
\begin{equation}
\label{eq:Sasympinf}
S(\lambda) = \left( I +  \mathcal O(\lambda^{-1})\right) \lambda^{\frac{1}{4} \sigma_3} A^{-1} .
\end{equation}
\item[(d)] As $\lambda\to\lambda_0$, $S$ remains bounded.
\end{itemize}

\begin{proposition}
	The function $S$ defined as in equation \eqref{def:S} satisfies the above RH problem.
\end{proposition}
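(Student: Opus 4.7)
The plan is to verify the four conditions (a)--(d) of the RH problem in turn. Conditions (a) and (d) are essentially immediate: since $\zeta_0 = s\lambda_0 - s$ has been chosen so that the map $\zeta = s(\lambda-1)$ is a bijection $\mathbb{C}\setminus\widehat\Gamma\to\mathbb{C}\setminus\Gamma$, the pull-back $\Psi(s\lambda-s)$ is analytic on $\mathbb{C}\setminus\widehat\Gamma$; the function $g$ is analytic off $(-\infty,\lambda_0]\subset\widehat\Gamma$ by construction; and the remaining matrix factors in \eqref{def:S} are constant in $\lambda$, so $S$ is analytic on $\mathbb{C}\setminus\widehat\Gamma$. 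Boundedness at $\lambda_0$ follows from the analogous properties of $\Psi$ at $\zeta_0$ and of $g$ at $\lambda_0$ (the latter holds because $g'$ has only an integrable $(\lambda-\lambda_0)^{1/2}$ singularity).

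For the jumps (b), I will compute $S_-^{-1}S_+$ on each arc of $\widehat\Gamma$ by direct matrix multiplication, combining the jumps \eqref{jumpPsi} of $\Psi$ with the identity $1-\sigma(sT^{1/3}\lambda) = \mathrm{e}^{s^{3/2}V(\lambda)}$ coming from \eqref{def:h}. On the vertical rays $\lambda_0 + i\mathbb{R}^{\pm}$, where $g$ is analytic, conjugation of the lower-triangular $\Psi$-jump by the diagonal exponentials $\mathrm{e}^{\pm s^{3/2}(g+V(\lambda_0)/2)\sigma_3}$ converts the entry $(1-\sigma)^{-1}$ into $\mathrm{e}^{s^{3/2}(2g-V+V(\lambda_0))}$; the analogous calculation on $(\lambda_0,+\infty)$ produces the required upper-triangular entry. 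On $(-\infty,\lambda_0)$, I will use the jump $g_++g_-=V-V(\lambda_0)$: setting $b_\pm := s^{3/2}(g_\pm+V(\lambda_0)/2)$ one has $b_++b_-=s^{3/2}V$, and the factors $(1-\sigma)^{\pm 1}=\mathrm{e}^{\pm s^{3/2}V}$ in the off-diagonal entries of the $\Psi$-jump are then exactly neutralised by the exponential conjugation, leaving the constant jump $\left(\begin{smallmatrix}0 & 1\\ -1 & 0\end{smallmatrix}\right)$.

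For the asymptotic condition (c), the main technical step, I will substitute the expansion \eqref{eq:psiasympinf} of $\Psi$ and the asymptotic \eqref{as:g} of $g$ into \eqref{def:S}. After the change of variable, the identities $s^{-\sigma_3/4}(s(\lambda-1))^{\sigma_3/4} = (\lambda-1)^{\sigma_3/4}$ and $(s(\lambda-1))^{3/2} = s^{3/2}(\lambda-1)^{3/2}$ remove the $s^{\pm\sigma_3/4}$ factors and let the two exponentials recombine into $\mathrm{e}^{s^{3/2}E(\lambda)\sigma_3}$, where
\[
E(\lambda) := g(\lambda)+\tfrac{V(\lambda_0)}{2} - \tfrac{2}{3}(\lambda-1)^{3/2} = \bigl(g_1-\tfrac14\bigr)\lambda^{-1/2}+o(\lambda^{-1/2}).
\]
Applying the conjugation identity $A^{-1}\mathrm{e}^{s^{3/2}E\sigma_3} = \mathrm{e}^{-s^{3/2}E\sigma_2}A^{-1}$, a direct consequence of $A^{-1}\sigma_3 A=-\sigma_2$, and expanding $\mathrm{e}^{-s^{3/2}E\sigma_2}$ to first order, the conjugation by $\lambda^{\sigma_3/4}$ of the $\lambda^{-1/2}$-correction produces, to leading order, an upper-triangular matrix with off-diagonal entry $i\bigl(g_1-\tfrac14\bigr)s^{3/2}$. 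The triangular prefactor and the diagonal factor $\mathrm{e}^{-\frac{s^{3/2}V(\lambda_0)}{2}\sigma_3}$ in \eqref{def:S} are engineered precisely to compensate this correction, yielding the stated form $(I+\mathcal{O}(\lambda^{-1}))\lambda^{\sigma_3/4}A^{-1}$. The main obstacle is this algebraic matching: three independent sources of $s$- and $T$-dependent terms (the Airy-type exponential from $\Psi$, the $g$-asymptotic, and the conjugation through $A$) must be tracked simultaneously and verified to collapse exactly, at leading order, to the identity.
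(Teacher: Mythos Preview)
Your proposal is correct and follows essentially the same approach as the paper: verifying conditions (a)--(d) directly, using the jump relation \eqref{jumpg} for $g$ on $(-\infty,\lambda_0)$, the identity $1-\sigma(sT^{1/3}\lambda)=\mathrm{e}^{s^{3/2}V(\lambda)}$, and the role of the constant left prefactors in normalising the large-$\lambda$ behaviour. The paper's proof is terser on (c), simply calling it a ``straightforward calculation''; you spell out the mechanism via $A^{-1}\sigma_3 A=-\sigma_2$ and the cancellation of the $\lambda^{-1/2}$-term in $E(\lambda)$ against the upper-triangular prefactor, which is exactly what that calculation amounts to. One small caution: double-check the sign of the $(1,2)$-entry produced by the conjugation $(\lambda-1)^{\sigma_3/4}(-s^{3/2}E\sigma_2)(\lambda-1)^{-\sigma_3/4}$ against the prefactor $\begin{pmatrix}1&i(g_1-\tfrac14)s^{3/2}\\0&1\end{pmatrix}$, since compensation requires the prefactor to be the \emph{inverse} of the leading constant correction.
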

\proof 
By construction, $S$ is analytic everywhere except on $\widehat\Gamma$.
The jump relations \eqref{JumpS} for $S$ can be inferred from those of $\Psi$ given in \eqref{jumpPsi}. On each of the four half-lines in the jump contour $\widehat\Gamma$, we compute the jump matrices $S_-^{-1}S_+$ using \eqref{def:S}.
In this product, we note that the factors in \eqref{def:S} in front of $\Psi$ are independent of $\lambda$ and thus cancel out, and we can conclude that
\[S_-^{-1}(\lambda)S_+(\lambda)=
{\rm e}^{-s^{3/2}\left(g_-(\lambda)+\frac{V(\lambda_0)}2\right)\sigma_3}
\Psi_-^{-1}(s\lambda-s)\Psi_+(s\lambda-s)
{\rm e}^{s^{3/2}\left(g_+(\lambda)+\frac{V(\lambda_0)}2\right)\sigma_3}\]
for $\lambda\in\widehat\Gamma$.
On $\lambda_0 + {i}\R^\pm$ and on $\lambda_0 +\R^+$,
$g$ is analytic, and the required jump relation for $S$ follows from the jump relations \eqref{jumpPsi} for $\Psi$ and the definition of \eqref{def:h} of $V$. On $\lambda_0 + \R^-$, $g$ is not analytic and we need to use in addition the relation \eqref{jumpg}. The asymptotics for $S$ as $\lambda\to\infty$ follow from the asymptotics for $\Psi$ as $\zeta\to\infty$, see \eqref{eq:psiasympinf}, together with the asymptotics for $g$ given in \eqref{as:g} after a straightforward calculation (note that the factors in front of $\Psi$ in \eqref{def:S} are present precisely to make this calculation work). Finally, it is easy to see that $S$ is bounded near $\lambda_0$, since the same is true for $\Psi(s\lambda-s)$ and for $g$.
\qed\\

We can already make the important observation that, by Proposition \ref{prop:ineqspsi}, the jump matrices for $S$ tend exponentially fast to $I$ as $s\to \infty$, except near $\lambda_0$ and except on $(-\infty,\lambda_0]$, where the jump matrix is constant. This the reason why we have to introduce two different parametrices: a local one to approximate $S$ in the neighborhood of $\lambda_0$ and a global one to approximate $S$ elsewhere. 

\subsection{Global parametrix}\label{P}

If we ignore the small jumps of $S$ and the jumps in the vicinity of $\lambda_0$, we are led to the following RH problem. 

\subsubsection*{RH problem for $P^{\infty}$}\label{RHPInfty}

\begin{itemize}
\item[(a)] $P^{\infty} : \mathbb C \backslash (-\infty,\lambda_0] \rightarrow \mathbb C^{2\times 2}$ is analytic.

\item[(b)] $P^{\infty}$ has continuous boundary values on $(-\infty, \lambda_0)$ and they are related by
\begin{align}
	P^{\infty}_+(\lambda) &= P^{\infty}_-(\lambda) \begin{pmatrix} 0 & 1 \\ -1 & 0 \end{pmatrix}, & \lambda \in (-\infty, \lambda_0).
\end{align}
\item[(c)] As $\lambda \rightarrow \infty$, we have
\begin{equation}\label{asPInfty}
	P^\infty(\lambda) = (I + \mathcal O(\lambda^{-1}))\lambda^{\frac{1}4 \sigma_3} A^{-1}.
\end{equation}
\end{itemize}

Without imposing an additional condition near $\lambda_0$, the solution to this RH problem is not unique. One can also show that it is impossible to construct a solution which, like $S$, remains bounded near $\lambda_0$. This also shows that we cannot expect $P^\infty$ to be a good approximation of $S$ near $\lambda_0$.
The best we can do, is construct a solution $P^\infty(\lambda) = \mathcal O(\lambda - \lambda_0)^{-1/4}$ as $\lambda \rightarrow \lambda_0$, and by imposing this, the solution is moreover unique (but we do not need this). It is easy to verify that $P^{\infty}$ given by
\begin{equation}\label{def:Pinf}
	P^\infty(\lambda) = (\lambda-\lambda_0)^{\frac{1}{4} \sigma_3} A^{-1},
\end{equation}
where $(\lambda-\lambda_0)^{\pm\frac{1}{4}}$ is analytic except on $(-\infty,\lambda_0]$ and positive for $\lambda>\lambda_0$, solves the RH problem.

\subsection{Local Airy parametrix}

We will now construct the local parametrix $P$ in a disk around $\lambda_0$ in such a way that it has exactly the same jumps as $S$ has inside the disk, and in such a way that it is close to $P^\infty$ on the boundary of the disk.
Let us fix a disk $U$ around $\lambda_0$ of radius $\epsilon$, with $\epsilon>0$ small enough such that the result of Proposition \ref{prop:hatpsi} holds. In particular, for $\lambda \in U$ we have 
\begin{align}2g(\lambda)-V(\lambda)+V(\lambda_0)&=2\int_{\lambda_0}^{\lambda}(\eta-\lambda_0)^{1/2}w(\eta)d\eta\nonumber \\
&=
\frac{4}{3}w(\lambda_0)(\lambda-\lambda_0)^{3/2}+2\int_{\lambda_0}^{\lambda}(\eta-\lambda_0)^{1/2}(w(\eta)-w(\lambda_0))d\eta
\nonumber \\&=\frac{4}{3}w(\lambda_0)(\lambda-\lambda_0)^{3/2}+\mathcal O((\lambda-\lambda_0)^{5/2}),\label{eq:boundmu0}
\end{align}
uniformly in $\lambda, T$ as $s\to\infty$. Since $w(\lambda_0) \geq 1$, we can choose $\epsilon$ small enough such that, in addition, 
\begin{equation}\label{eq:mubounded}
	\left|\left(2g(\lambda)-V(\lambda)+V(\lambda_0)\right)^{2/3}\right|\geq c\epsilon,\quad\mbox{for $|\lambda-\lambda_0|=\epsilon$}
\end{equation}
and for some constant $c$ independent of $s,T$ and $\lambda$.\\

From the above expansion, it follows that we can define a conformal mapping $\mu : U \rightarrow \mathbb C$ by imposing
\begin{equation}\label{def:mu}
	\frac{2}3\mu^{3/2}(\lambda) = g(\lambda)-\frac{V(\lambda)}2+\frac{V(\lambda_0)}2
\end{equation}
and $\mu'(\lambda_0)>0$.
Now we can define the local parametrix $P(\lambda)$ for $\lambda \in U$ by
\begin{equation}\label{def:P}
P(\lambda) := \left(\frac{\lambda - \lambda_0}{s\mu(\lambda)}\right)^{\frac{\sigma_3}4} \Phi_k(s\mu(\lambda)){\rm e}^{s^{3/2}\left(g(\lambda)-\frac{V(\lambda)}{2} + \frac{V(\lambda_0)}2\right)\sigma_3},
\end{equation}
where $\Phi_k$, $k={\rm I}, {\rm II}, {\rm III}, {\rm IV}$ are the functions constructed in terms of the Airy function given in \eqref{eqA}--\eqref{eqD}, and where we set $k={\rm I}$ for $\Re\mu(\lambda)>0$, $\Im \mu(\lambda)>0$, $k={\rm II}$ for $\Re\mu(\lambda)<0$, $\Im \mu(\lambda)>0$, $k={\rm III}$ for $\Re\mu(\lambda)<0$, $\Im \mu(\lambda)<0$, and $k={\rm IV}$ for $\Re\mu(\lambda)>0$, $\Im \mu(\lambda)<0$.
\begin{proposition}\label{prop:P}\hfill
\begin{enumerate}\item $P$ satisfies the same jump relations as $S$ on $\widehat\Gamma\cap U$, given in \eqref{JumpS}.
\item For $\lambda \in \partial U$, we have uniformly in $\lambda$ and $T$ that
	\begin{equation}
		P(\lambda)P^{\infty}(\lambda)^{-1} = I+\mathcal O(s^{-3/2}) \quad \text{as} \quad s\rightarrow \infty.
	\end{equation}
	\end{enumerate}
\end{proposition}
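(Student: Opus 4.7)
My plan is to verify the two claims by directly exploiting the construction \eqref{def:P}. For the jumps, on each of the four pieces of $\widehat\Gamma\cap U$ I will compute $P_-^{-1}P_+$ from \eqref{def:P}. The prefactor $\bigl(\frac{\lambda-\lambda_0}{s\mu(\lambda)}\bigr)^{\sigma_3/4}$ is analytic on $U$ since $(\lambda-\lambda_0)/\mu(\lambda)$ is analytic and non-vanishing at $\lambda_0$ (it equals $1/\mu'(\lambda_0)>0$), so this factor contributes no jump. On the three pieces $\lambda_0+i\mathbb R^\pm$ and $\lambda_0+\mathbb R^+$ the exponential ${\rm e}^{s^{3/2}\phi(\lambda)\sigma_3}$, with $\phi(\lambda):=g(\lambda)-V(\lambda)/2+V(\lambda_0)/2$, is also analytic, so the jump of $P$ reduces to the conjugation of the constant jump of $\Phi_k$ by this exponential. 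Combining the jump relations \eqref{eq:jump1}--\eqref{eq:jump3} for $\Phi$ with the identification $\tfrac{2}{3}\mu^{3/2}=\phi$ from \eqref{def:mu}, these conjugations produce precisely the jumps of $S$ in \eqref{JumpS}. On $\lambda_0+\mathbb R^-$ both $g$ and $V$ have jumps, but the relation \eqref{jumpg} yields $\phi_++\phi_-=0$, and a short computation shows that the oppositely signed exponentials conjugate the constant jump $\begin{pmatrix}0 & 1\\-1 & 0\end{pmatrix}$ back to itself, again matching $S$.

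For the matching on $\partial U$, I substitute \eqref{asPhi} into \eqref{def:P} and use $\tfrac{2}{3}(s\mu(\lambda))^{3/2}=s^{3/2}\phi(\lambda)$ to cancel the two exponential factors. The remaining prefactors collapse via $\bigl(\frac{\lambda-\lambda_0}{s\mu}\bigr)^{\sigma_3/4}(s\mu)^{\sigma_3/4}=(\lambda-\lambda_0)^{\sigma_3/4}$ into $(\lambda-\lambda_0)^{\sigma_3/4}A^{-1}=P^\infty(\lambda)$, leaving
\[
P(\lambda)P^\infty(\lambda)^{-1} = I + (\lambda-\lambda_0)^{\sigma_3/4}A^{-1}\,E(s\mu(\lambda))\,A(\lambda-\lambda_0)^{-\sigma_3/4},
\]
where $E(\zeta)=\mathcal O(\zeta^{-3/2})$ is the sub-leading Airy correction. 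By \eqref{eq:mubounded}, $|s\mu(\lambda)|\geq c\epsilon s$ uniformly in $T$ on $\partial U$, so $E=\mathcal O(s^{-3/2})$ with the required uniformity; and since $|\lambda-\lambda_0|=\epsilon$ is fixed on $\partial U$, the outer conjugation by $(\lambda-\lambda_0)^{\pm\sigma_3/4}$ and by $A^{\pm 1}$ is bounded, preserving the bound.

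The main obstacle is to obtain $\mathcal O(s^{-3/2})$ rather than the weaker $\mathcal O(s^{-1})$ that a naive use of \eqref{asPhi} would yield. Placing the $\mathcal O(\zeta^{-3/2})$ correction on the outside as in \eqref{asPhi} and then conjugating by $\bigl(\frac{\lambda-\lambda_0}{s\mu}\bigr)^{\sigma_3/4}$, whose diagonal entries scale like $s^{\pm 1/4}$, amplifies the $(2,1)$-entry of the error by $s^{1/2}$. The resolution is to use the equivalent sandwiched form $\Phi_k(\zeta)=\zeta^{\sigma_3/4}A^{-1}\bigl(I+\mathcal O(\zeta^{-3/2})\bigr){\rm e}^{-\frac{2}{3}\zeta^{3/2}\sigma_3}$, which follows from a direct calculation using the standard asymptotic expansions of $\Ai$ and $\Ai'$. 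In this form the $s^{\pm 1/4}$ factors recombine harmlessly into the bounded $(\lambda-\lambda_0)^{\sigma_3/4}$, and the matching proceeds cleanly.
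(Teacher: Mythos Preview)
Your proof is correct and follows essentially the same route as the paper's. For part 1 you compute $P_-^{-1}P_+$ on each ray exactly as the paper does, and your treatment of $\lambda_0+\mathbb R^-$ via $\phi_++\phi_-=0$ is just an explicit version of the paper's appeal to \eqref{jumpg}. For part 2 the paper arrives directly at the sandwiched form $(\lambda-\lambda_0)^{\sigma_3/4}A^{-1}\bigl(I+\mathcal O(s^{-3/2})\bigr)A(\lambda-\lambda_0)^{-\sigma_3/4}$ without comment; your final paragraph makes explicit the (standard but genuine) point that one must use the Airy asymptotics in the form $\zeta^{\sigma_3/4}A^{-1}\bigl(I+\mathcal O(\zeta^{-3/2})\bigr){\rm e}^{-\frac{2}{3}\zeta^{3/2}\sigma_3}$ rather than \eqref{asPhi} literally, so that the $s^{\pm 1/4}$ factors in the prefactor recombine into the bounded $(\lambda-\lambda_0)^{\sigma_3/4}$ instead of conjugating the error. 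This is a welcome clarification rather than a different argument.
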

\begin{proof}
\begin{enumerate}
\item We can compute the jump matrices for $P$ by computing
$P_-^{-1}P_+$. A straightforward calculation using the relations (similar to \eqref{eq:jump1}--\eqref{eq:jump3}) 
\[
\Phi_{\rm I} = \Phi_{\rm II} \begin{pmatrix} 1 & 0 \\ 1 & 1 \end{pmatrix},\quad 
\Phi_{\rm II} = \Phi_{\rm III} \begin{pmatrix} 0 & 1 \\ -1 & 0 \end{pmatrix},\quad \Phi_{\rm III} = \Phi_{\rm IV} \begin{pmatrix} 1 & 0 \\ 1 & 1 \end{pmatrix},\quad 
\Phi_{\rm I} = \Phi_{\rm IV} \begin{pmatrix} 1 & 1 \\ 0 & 1 \end{pmatrix},\]
and the relation \eqref{jumpg} for $g$ shows that $P$ has exactly the same jump matrices as $S$ on $\widehat\Gamma\cap U$.
\item
We first observe that, by \eqref{eq:mubounded}, $|\mu(\lambda)|$ is bounded from below by a positive constant independent of $\lambda, s, T$ for $\lambda\in\partial U$, such that $s\mu(\lambda)$ is large and we can substitute the asymptotics for $\Phi_{\rm I}$, $\Phi_{\rm II}$, $\Phi_{\rm III}$, $\Phi_{\rm IV}$ (like the ones given for $\Phi$ in \eqref{asPhi}) in \eqref{def:P}. This gives
\[
	P(\lambda)P^{\infty}(\lambda)^{-1}= (\lambda - \lambda_0)^{\frac{\sigma_3}4}A^{-1}\left(I  + \mathcal O(s^{-3/2}) \right) A (\lambda - \lambda_0)^{-\frac{\sigma_3}4} 
\]
as $s\to\infty$, uniformly in $\lambda\in\partial U$ and $T$, and we obtain the result.
\end{enumerate}
\end{proof}

\subsection{Second transformation and small-norm RH problem}\label{R}

We now define
\begin{equation}\label{def:R}
R(\lambda) := \begin{cases}
S(\lambda)P(\lambda)^{-1}&\mbox{for $\lambda\in U$,}\\
S(\lambda)P^\infty(\lambda)^{-1}&\mbox{elsewhere,}
\end{cases}
\end{equation}
and show that $R$ solves the following RH problem.

\subsubsection*{RH problem for R}

\begin{itemize}
	\item[(a)] $R : \mathbb C \backslash \Gamma_R \longrightarrow \mathbb C^{2 \times 2}$ is analytic, with
	$$ \Gamma_R := \partial U \cup \left((\lambda_0+(i\R\cup\R^+))\backslash U\right).$$
	\item[(b)] $R$ has jump relations
					$$R_+(\lambda) = R_-(\lambda)J_R(\lambda), \quad \lambda \in \Gamma_R,$$
			where $J_R$ takes the form
	\begin{equation}\label{JR}
		J_R(\lambda) = \begin{cases}
						I + \mathcal O\left(\frac{{\rm e}^{-\eta s^{3/2}}}{|\lambda|^2 + 1} \right), &\lambda \in \Big( \lambda_0 \pm i(\epsilon,+\infty) \Big) \cup (\lambda_0 + \epsilon,+\infty), \\[1ex]
					
						I + \mathcal O(s^{-3/2}), &\lambda \in \partial U,
					\end{cases}
	\end{equation}
	as $s\to\infty$, uniformly in $\lambda, T$, for some
	$\eta>0$.
	\item[(c)] $R(\lambda) = I + \mathcal O(\lambda^{-1}) \quad \text{as} \quad \lambda \rightarrow \infty.$ 			
\end{itemize}
\begin{proof}Condition (c) is easy to verify from the asymptotics of $S$ and $P^\infty$.
Next, since $S$ and $P$ have exactly the same jump relations inside $U$ and since $S$ and $P^\infty$ have exactly the same jump relations on $(-\infty,\lambda_0-\epsilon)$, $R$ is analytic on those parts of the jump contour $\widehat\Gamma$.
	On the rays  $\Big( \lambda_0 \pm i(\epsilon,+\infty) \Big)$, the jump matrices for $R$ can be computed using the jump relations for $S$ as
\[J_R(\lambda)=P^\infty(\lambda)S_-^{-1}(\lambda)S_+(\lambda)P^\infty(\lambda)^{-1}=P^\infty(\lambda)\begin{pmatrix} 1 & 0 \\ {\rm e}^{s^{3/2}(2g(\lambda)-V(\lambda)+V(\lambda_0))} & 1 \end{pmatrix}P^\infty(\lambda)^{-1}.\]	
Noting that $P^\infty(\lambda)^{-1}$ and $P^\infty(\lambda)$ are uniformly $\mathcal O(\lambda^{\sigma_3/4})$, and using \eqref{ineq2}, we obtain the first estimate in \eqref{JR}.
A similar computation yields the result on  $(\lambda_0 + \epsilon,+\infty)$, using equation \eqref{ineq1} instead of \eqref{ineq2}.
	On $\partial U$, the jump matrix for $R$ is equal to $P(P^\infty)^{-1}$, and the required estimate follows from Proposition \ref{prop:P}	. 
\end{proof}

\begin{corollary}\label{cor:asR}
	As $s \rightarrow \infty$, $R$ and $R'$ have the asymptotics 
	\begin{equation}\label{as:R}
		R(\lambda) = I + \mathcal O\left(\frac{1}{s^{3/2}(|\lambda|+1)}\right), \quad \quad R'(\lambda) = \mathcal O\left(\frac{1}{s^{3/2}(|\lambda|^2+1)}\right),
	\end{equation}
	uniformly in $\lambda\in\mathbb C\setminus\Gamma_R$ and uniformly in $M^{-1}\leq T\leq Ms^{3/2}$ for any $M>0$.
\end{corollary}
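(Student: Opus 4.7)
The plan is to apply the standard small-norm theory for Riemann-Hilbert problems to the problem for $R$. Setting $\Delta := J_R - I$, the estimates in \eqref{JR} give, uniformly in $M^{-1}\leq T\leq Ms^{3/2}$,
\[
\|\Delta\|_{L^2(\Gamma_R)} + \|\Delta\|_{L^\infty(\Gamma_R)} = \mathcal O(s^{-3/2}),
\]
because $\partial U$ is a compact contour on which $\Delta = \mathcal O(s^{-3/2})$ uniformly, while on the unbounded rays $\Delta$ is exponentially small in $s^{3/2}$, with an $L^2$-integrable weight $(|\lambda|^2+1)^{-1}$. For $s$ sufficiently large, the operator $f\mapsto\mathcal C_-(f\Delta)$ on $L^2(\Gamma_R)$ has norm $\mathcal O(s^{-3/2})$, so the RH problem for $R$ is uniquely solvable and $\|R_- - I\|_{L^2(\Gamma_R)} = \mathcal O(s^{-3/2})$, uniformly in $T$.

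Next I would extract pointwise estimates from the standard integral representation
\[
R(\lambda) - I = \frac{1}{2\pi i}\int_{\Gamma_R}\frac{R_-(\xi)\Delta(\xi)}{\xi - \lambda}\,d\xi.
\]
The contribution from the unbounded parts of $\Gamma_R$ is exponentially small in $s^{3/2}$ for every $\lambda$ off those rays, and is therefore absorbed into the error. The dominant contribution comes from $\partial U$. Since $\partial U$ is bounded and $R_-\Delta = \mathcal O(s^{-3/2})$ uniformly there, for $|\lambda|$ large we use $|\xi-\lambda|\geq c(|\lambda|+1)$ to obtain the bound $\mathcal O(s^{-3/2}/|\lambda|)$, while for $\lambda$ in any fixed compact set away from $\partial U$ the integral is uniformly $\mathcal O(s^{-3/2})$. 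Combining these two regimes gives the first estimate $R(\lambda) = I + \mathcal O(s^{-3/2}/(|\lambda|+1))$.

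For $R'$, I would differentiate under the integral sign so that the Cauchy kernel becomes $(\xi-\lambda)^{-2}$. This gains an additional factor $(|\lambda|+1)^{-1}$ at infinity while remaining bounded for $\lambda$ away from $\partial U$, producing the bound $R'(\lambda) = \mathcal O(s^{-3/2}/(|\lambda|^2+1))$.

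There is no serious obstacle; the main task is careful bookkeeping of the uniformity in $T\in[M^{-1},Ms^{3/2}]$. This uniformity comes directly from the uniformity of the estimates \eqref{JR} together with the fact that the $L^2$-operator norm of the Cauchy projection on $\Gamma_R$ is independent of $T$. In particular, the exponential factor $\mathrm{e}^{-\eta s^{3/2}}$ on the unbounded pieces of $\Gamma_R$ easily dominates the polynomial weight $|\lambda|^{\sigma_3/4}$ inherited from conjugation by $P^\infty$, so no additional work is required to handle the growth of $P^\infty$ at infinity.
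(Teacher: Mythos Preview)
Your treatment of the estimate on $R$ is essentially the paper's: both invoke standard small-norm RH theory, and you spell out the integral-representation argument a bit more explicitly. The substantive difference is in the estimate for $R'$.

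You differentiate the integral representation over $\Gamma_R$ to obtain the kernel $(\xi-\lambda)^{-2}$. This is fine when $\lambda$ stays a fixed positive distance from $\Gamma_R$, but the statement requires uniformity for \emph{all} $\lambda\in\mathbb C\setminus\Gamma_R$, including points arbitrarily close to the contour. There your $(\xi-\lambda)^{-2}$ kernel is not uniformly integrable against $R_-\Delta$, and the bound you write down is not justified. Your phrases ``for every $\lambda$ off those rays'' and ``for $\lambda$ away from $\partial U$'' tacitly exclude exactly the region where the argument needs more care.

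The paper handles $R'$ differently: it applies Cauchy's integral formula on a small circle $\gamma$ of fixed radius $\rho$ around $\lambda$ and feeds in the already-established estimate for $R$, yielding $R'(\lambda)=\mathcal O\bigl(\rho^{-1}s^{-3/2}(|\lambda|+1)^{-1}\bigr)$. When $\lambda$ is within distance $\rho$ of $\Gamma_R$, one first deforms the jump contour slightly---analytically continuing $R$ across it, which is possible because the jump matrices extend analytically off $\Gamma_R$---so that $\gamma$ fits inside the domain of analyticity of the continued function $\widetilde R$, which obeys the same asymptotics. Your approach can be repaired by the same contour-deformation device, but that step is essential and is missing from your writeup.
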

\begin{proof}
The estimate for $R$ follows from the RH problem for $R$ and standard small-norm arguments in the theory of RH problems, see e.g.\ \cite{Deift, DKMVZ}.
For the derivative of $R$, by Cauchy's formula, we can write for any $\lambda\in\mathbb C\setminus\Gamma_R$,
$$
R'(\lambda) = \int_{\gamma} \frac{R(\xi)}{(\xi - \lambda)^2} \frac{d \xi}{2 \pi i},
$$
where $\gamma$ is a small counterclockwise oriented circle of radius $\rho$ around $\lambda$ whose interior lies in the domain of analyticity for $R$. 
Substituting the asymptotics for $R$, we obtain
\[R'(\lambda) = \mathcal O\left(\frac{1}{\rho s^{3/2}(|\lambda|^2+1)}\right).\]
For $\lambda$ not too close to $\Gamma_R$, say ${\rm dist}(\lambda,\Gamma_R)\geq \eta$ for some small $\eta>0$, this yields the estimate since we can take $\rho=\eta$. For ${\rm dist}(\lambda,\Gamma_R)<\eta$, we need to deform the jump contour $\Gamma_R$. As is common in RH analysis, we can do this by analytically continuing $R$ (to $\widetilde R$) across the contour, and in this way we  can obtain a deformed contour $\widetilde\Gamma_R$ and a circle $\gamma$ of radius $\eta$ whose interior lies in $\mathbb C\setminus\widetilde\Gamma_R$, such that
\[R'(\lambda) = \int_{\gamma} \frac{\widetilde R(\xi)}{(\xi - \lambda)^2} \frac{d \xi}{2 \pi i},\qquad \widetilde R(\lambda) = I + \mathcal O\left(\frac{1}{s^{3/2}(|\lambda|+1)}\right)\]
as $s\to\infty$, uniformly in $\lambda$ and $T$.
We now obtain the same estimate in a similar way as before.
\end{proof}	

Since we know the asymptotics for $R$, we can now obtain asymptotics for $\Psi$ by inverting the transformations $S\mapsto R$ and $\Psi\mapsto S$.

\section{Asymptotics for $Q(s,T)$ as $s\to\infty$}

\subsection{Logarithmic derivative with respect to $T$}

The main result of this subsection is Proposition \ref{prop:logders2}, giving explicit asymptotics for $\partial_T\log Q(s,T)$ as $s\to\infty$, uniformly in $M^{-1} \leq T \leq Ms^{3/2}$, up to terms of order smaller than or equal to $sT^{-2/3}$. As a first step, we prove the following. 
 
 \begin{proposition}\label{prop:logder}
As $s \to \infty$, uniformly in $M^{-1}\leq T\leq Ms^{3/2}$, we have
\begin{align}
&\partial_T\log Q(s,T)\nonumber\\
&\quad =
-\frac{s^{5/2}}{6\pi T^{2/3}}\int_{-\infty}^{\lambda_0}\lambda\sigma(sT^{1/3}\lambda)
\psi(\lambda) d\lambda 
+\frac{s}{12\pi T^{2/3}}\int_{-\infty}^{\lambda_0-\epsilon}\lambda\frac{\cos\left(
s^{3/2}\int_{\lambda_0}^\lambda\psi(\xi)d\xi\right)}{\lambda-\lambda_0}\sigma(sT^{1/3}\lambda)
d\lambda
\nonumber\\ 
&\qquad -\frac{s^2T^{-2/3}}{3}
\int_{\lambda_0-\epsilon}^{\lambda_0+\epsilon}
\lambda \sigma(sT^{1/3}\lambda)
\left(K^{\Ai}(s\mu(\lambda),s\mu(\lambda))-
\frac{1}{\pi}s^{1/2}|\mu(\lambda)|^{1/2}{\bf 1}_{(-\infty,\lambda_0]}(\lambda)\right)
\mu'(\lambda) d\lambda \nonumber \\ 
&\qquad 
+\mathcal O(T^{-2/3})
.
\label{diffidtotal}
\end{align}
\end{proposition}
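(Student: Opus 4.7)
The plan is to substitute the steepest-descent transformations $\Psi\mapsto S\mapsto R$ of Section~\ref{section:RH} into the identity \eqref{diffidentityeq}. After the change of variables $\zeta=s\lambda-s$, \eqref{diffidentityeq} becomes
\[
\partial_T\log Q=-\frac{s}{6\pi iT^{2/3}}\int_{\R}\lambda\,\sigma'(sT^{1/3}\lambda)\bigl(\widehat\Psi^{-1}\partial_\lambda\widehat\Psi\bigr)_{2,1}\,d\lambda,
\]
with the constant left factor of \eqref{def:S} dropping out of $\widehat\Psi^{-1}\partial_\lambda\widehat\Psi$. I would then split the integral into three regions, $\lambda>\lambda_0+\epsilon$, $|\lambda-\lambda_0|<\epsilon$, and $\lambda<\lambda_0-\epsilon$, applying Corollary~\ref{cor:asR} to control the $R$-corrections.

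The region $\lambda>\lambda_0+\epsilon$ is handled via the super-exponentially small factor $e^{-2s^{3/2}(g+V(\lambda_0)/2)}$ from \eqref{ineq1} combined with the exponential smallness of $\sigma'(sT^{1/3}\lambda)$. On $\lambda<\lambda_0-\epsilon$, using $S=RP^\infty$, the key ingredient is $\bigl((P^\infty)^{-1}(P^\infty)'\bigr)_{2,1}=i/(4(\lambda-\lambda_0))$ (from \eqref{def:Pinf} and $A\sigma_3 A^{-1}=\sigma_2$). Conjugating by $e^{-s^{3/2}(g_\pm+V(\lambda_0)/2)\sigma_3}$ and multiplying by the triangular factor of \eqref{def:hatPsi}, one uses $g_+'=V'/2+i\psi/2$ (from \eqref{def:psi}) to rewrite $e^{-2s^{3/2}(g_\pm+V(\lambda_0)/2)}=(1-\sigma)^{-1}\exp\bigl(\mp is^{3/2}\!\int_{\lambda_0}^\lambda\psi\bigr)$, whose sum produces the cosine. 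The remaining terms coming from the triangular factor combine via the cancellation $s^{3/2}V'(1-\sigma)^{-1}+\partial_\lambda(1-\sigma)^{-1}=0$ (a consequence of $V'=-s^{-1/2}T^{1/3}\sigma$ and $\sigma'=\sigma(1-\sigma)$) to leave exactly $is^{3/2}\psi/(1-\sigma)$. Together with $\sigma'/(1-\sigma)=\sigma$, this recovers the leading and oscillatory terms of the proposition, restricted to $(-\infty,\lambda_0-\epsilon)$.

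On $|\lambda-\lambda_0|<\epsilon$ I would use $S=RP$ with the local Airy parametrix \eqref{def:P}. The central algebraic identity is
\[
\bigl(\Phi_{\rm I}^{-1}\Phi_{\rm I}'\bigr)_{2,1}(u)=2\pi i\,K^{\Ai}(u,u),
\]
an immediate consequence of \eqref{eqA}, $\det\Phi_{\rm I}=1$, and $\Ai''(u)=u\,\Ai(u)$. Writing $P=D(\lambda)^{\sigma_3}\Phi_k(s\mu(\lambda))\,e^{\frac{2}{3}s^{3/2}\mu^{3/2}\sigma_3}$, the crucial simplification in computing $(P^{-1}P')_{2,1}$ is that the residual exponential combines with $e^{-2s^{3/2}(g+V(\lambda_0)/2)}$ to yield $(1-\sigma)^{-1}$, by \eqref{def:mu}. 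The parallel computation for $\lambda<\lambda_0$ (using $\Phi_{\rm II}$ together with the triangular factor) yields the same Airy kernel expression, producing the edge contribution $-\frac{s^2T^{-2/3}}{3}\int_{\lambda_0-\epsilon}^{\lambda_0+\epsilon}\lambda\sigma\,K^{\Ai}(s\mu,s\mu)\mu'\,d\lambda+\mathcal O(T^{-2/3})$. The leading bulk integral is then extended from $(-\infty,\lambda_0-\epsilon)$ to $(-\infty,\lambda_0)$ via the identity $\psi(\lambda)=2|\mu(\lambda)|^{1/2}\mu'(\lambda)$ for $\lambda<\lambda_0$ (obtained by differentiating \eqref{def:mu} on the $+$-boundary, where $\mu_+^{3/2}=-i|\mu|^{3/2}$); the added piece is precisely the bulk subtraction $\frac{1}{\pi}s^{1/2}|\mu|^{1/2}\mathbf{1}_{(-\infty,\lambda_0]}$ in the edge integrand.

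The hardest part will be the edge analysis for $\lambda<\lambda_0$: consistently tracking branches of $\mu_\pm^{3/2}$, the switch $\Phi_{\rm I}\to\Phi_{\rm II}$, the triangular factor from \eqref{def:hatPsi}, and the resulting oscillatory cancellations, while uniformly controlling all $R$-corrections (via Corollary~\ref{cor:asR}) and the smooth $D^{\sigma_3}$-corrections in $P^{-1}P'$ by $\mathcal O(T^{-2/3})$ throughout the range $M^{-1}\leq T\leq Ms^{3/2}$.
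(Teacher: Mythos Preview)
Your proposal is correct and follows essentially the same route as the paper: change variables $\zeta=s\lambda-s$, pass from $\widehat\Psi$ to $\widehat S$ (the constant left factor in \eqref{def:S} drops out), split the real line into $(-\infty,\lambda_0-\epsilon)\cup(\lambda_0-\epsilon,\lambda_0+\epsilon)\cup(\lambda_0+\epsilon,\infty)$, and evaluate each piece using $S=RP^\infty$ or $S=RP$ together with Corollary~\ref{cor:asR}. The algebraic identities you single out---$\sigma'=\sigma(1-\sigma)$ giving $\sigma'e^{-s^{3/2}(2g_++V(\lambda_0))}=\sigma\,e^{-s^{3/2}(2g_+-V+V(\lambda_0))}$, the cosine from $2g_\pm-V+V(\lambda_0)=\pm i\int_{\lambda_0}^\lambda\psi$, the formula $\bigl(\Phi_{\rm I}^{-1}\Phi_{\rm I}'\bigr)_{2,1}=2\pi i\,K^{\Ai}$, and the extension of the bulk integral via $\psi=2|\mu|^{1/2}\mu'$ on $(\lambda_0-\epsilon,\lambda_0)$---are exactly the ones the paper uses, and your identification of the edge bookkeeping as the delicate part is accurate.
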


\begin{proof}
We recall from Theorem \ref{diffidentity} that
\[\partial_T\log Q(s,T)=-\frac{1}{6\pi iT^{2/3}}\int_{\mathbb R} (\zeta+s) \sigma'(T^{1/3}(\zeta+s))\left(\widehat\Psi^{-1}\widehat\Psi_\zeta\right)_{21}(\zeta)d\zeta,\]
where $\widehat\Psi$ is the boundary value of the analytic continuation of $\Psi$ from the sector $0<\arg(\zeta-\zeta_0)<\pi/2$ to the upper half plane, given by \eqref{def:hatPsi}.
The explicit invertible transformations $\Psi\mapsto S\mapsto R$ and the (uniform in $\lambda$ and in $T$) asymptotics \eqref{as:R} for $R$, Corollary \ref{cor:asR}, allow us to compute asymptotics for the right hand side of the above equation.

Using \eqref{def:S} and the change of variables $\lambda=\frac{\zeta}{s}+1$, we express $\partial_T\log Q(s,T)$ in terms of $\widehat S(\lambda)$, defined as 
\begin{equation}\label{def:Shat}\widehat S(\lambda)=\begin{cases}S_+(\lambda)&\mbox{for $\lambda>\lambda_0$,}\\
S_+(\lambda)\begin{pmatrix}1&0\\  {\rm e}^{s^{3/2}(2g_+(\lambda)-V(\lambda)+V(\lambda_0))}&1\end{pmatrix}&\mbox{for $\lambda<\lambda_0$.}
\end{cases}\end{equation}
In other words, $\widehat S$ is the boundary value of the analytic continuation of $S$ from the sector $0<\arg(\lambda-\lambda_0)<\pi/2$ to the whole upper half plane. More precisely, we have
\begin{equation}\label{eq:ShatQ1}
	\partial_T\log Q(s,T)=-\frac{s}{6\pi iT^{2/3}}\int_{\mathbb R}\lambda\sigma'(sT^{1/3}\lambda)\left(\widehat S^{-1}\widehat S'\right)_{21}(\lambda){\rm e}^{-s^{3/2}(2g_+(\lambda)+V(\lambda_0))} d\lambda.
\end{equation}
By the definition \eqref{def:h} of $V$ and the definition \eqref{eq:Fredholmid1} of $\sigma$ we easily obtain that $$\sigma'(sT^{1/3}\lambda) = \sigma(sT^{1/3}\lambda){\rm e}^{s^{3/2}V(\lambda)},$$ so that we can write \eqref{eq:ShatQ1} above as
\[\partial_T\log Q(s,T)=-\frac{s}{6\pi iT^{2/3}}\int_{\mathbb R}\lambda\sigma(sT^{1/3}\lambda)\left(\widehat S^{-1}\widehat S'\right)_{21}(\lambda){\rm e}^{-s^{3/2}(2g_+(\lambda)-V(\lambda)+V(\lambda_0))} d\lambda.\]
Next, we split this integral in $3$ parts,
\[\partial_T\log Q(s,T)=I_1+I_2+I_3,\]
where
\begin{equation}\label{def:Ij}I_j=-
\frac{s}{6\pi iT^{2/3}}\int_{A_j} \lambda\sigma(sT^{1/3}\lambda)
\left(\widehat S^{-1}\widehat S'\right)_{21}(\lambda){\rm e}^{-s^{3/2}(2g_+(\lambda)-V(\lambda)+V(\lambda_0))} d\lambda,
\end{equation}
with \[A_1=(-\infty,\lambda_0-\epsilon],\quad A_2=(\lambda_0-\epsilon,\lambda_0+\epsilon),\quad
A_3=[\lambda_0+\epsilon,+\infty).\]

Now, we will compute asymptotics for $I_1, I_2, I_3$, which need to be uniform in $M^{-1}\leq T\leq Ms^{3/2}$, for any $M>0$, as $s\to +\infty$.
Whenever we write that error terms are uniform in $T$ below, we mean that they are uniform with respect to $T$ in this domain. When uniformity in $\lambda$ is also required, we will specify this.

\paragraph{Asymptotics for $I_3$.}
By \eqref{def:R}, we can express
$S$ (and hence $\widehat{S}$) in terms of $P^\infty$ and $R$, for $\lambda \in (\lambda_0 + \epsilon, \infty)$. We have
\begin{align}
\left(\widehat S^{-1}\widehat S'\right)_{21}(\lambda)&=\left((P^{\infty})^{-1}(P^{\infty})'\right)_{21}(\lambda)+\left((P^{\infty})^{-1}R^{-1}R'P^{\infty}\right)_{21}(\lambda)\nonumber\\
&=\frac{i}{4(\lambda-\lambda_0)}+\mathcal O(s^{-3/2})=\mathcal O(1),\label{asS1}
\end{align}
uniformly in $T$ and in $\lambda>\lambda_0+\epsilon$ as $s\to \infty$, where we used  the definition \eqref{def:Pinf} of $P^{\infty}$ and the asymptotics of $R(\lambda)$ (Corollary \ref{cor:asR}) for the second equality. 
Substituting this in \eqref{def:Ij} and using \eqref{ineq1}, we obtain
the bound
\begin{equation}\label{I3}
	I_3=\mathcal O\left(sT^{-2/3}\int_{\lambda_0+\epsilon}^{+\infty} |\lambda| {\rm e}^{-\frac{4}{3}s^{3/2}(\lambda-\lambda_0)^{3/2}}d\lambda\right)
=\mathcal O(s^{-1/2}T^{-2/3}),
\end{equation} 
uniformly in $T$ as $s\to \infty$. This bound largely overestimates the contribution of $I_3$, but it will be sufficient for our needs.

\paragraph{Asymptotics for $I_1$.}
We again express $\widehat S$ in terms of $P^\infty$ and $R$, but now for $\lambda \in (-\infty,\lambda_0 - \epsilon)$. Using \eqref{def:Shat}, \eqref{def:Pinf}, and \eqref{def:psi}, we have
\begin{align}
&\left(\widehat S^{-1}\widehat S'\right)_{21}(\lambda)=is^{3/2}\psi(\lambda){\rm e}^{s^{3/2}(2 g_+(\lambda)-V(\lambda)+V(\lambda_0))}\nonumber\\
&\quad +\left(\begin{pmatrix}1&0\\ {\rm e}^{s^{3/2}(2g_+-V+V(\lambda_0))}&1\end{pmatrix}(P^{\infty})^{-1}R^{-1}R'P^{\infty}\begin{pmatrix}1&0\\- {\rm e}^{s^{3/2}(2g_+-V+V(\lambda_0))}&1\end{pmatrix}\right)_{21}(\lambda)\nonumber\\
&\quad+\left(\begin{pmatrix}1&0\\ {\rm e}^{s^{3/2}(2g_+-V+V(\lambda_0))}&1\end{pmatrix}(P^{\infty})^{-1}(P^{\infty})'\begin{pmatrix}1&0\\- {\rm e}^{s^{3/2}(2g_+-V+V(\lambda_0))}&1\end{pmatrix}\right)_{21}(\lambda)\nonumber\\
&=i
\left(s^{3/2}\psi(\lambda)-\frac{\cos\left(
s^{3/2}\int_{\lambda_0}^\lambda\psi(\xi)d\xi\right)}{2(\lambda-\lambda_0)}
\right)
{\rm e}^{s^{3/2}(2g_+(\lambda)-V(\lambda)+V(\lambda_0))}
 +\mathcal O\left(s^{-3/2}\right),\label{asS2}
\end{align}
where this last error term is uniform in $T$ as $s\to\infty$ and also in $\lambda<\lambda_0-\epsilon$. In the above computation, we used the fact that $2g_+-V+V(\lambda_0)$ is purely imaginary, which follows from \eqref{def:g} and \eqref{equationgVp}.
Substituting this in \eqref{def:Ij}, we obtain
\begin{multline}\label{I1beforelemma}
I_1=-\frac{s^{5/2}}{6\pi T^{2/3}}\int_{-\infty}^{\lambda_0-\epsilon}\sigma(sT^{1/3}\lambda)
\psi(\lambda)\lambda d\lambda 
\\+\frac{s}{12\pi T^{2/3}}\int_{-\infty}^{\lambda_0-\epsilon}
\frac{\cos\left(
s^{3/2}\int_{\lambda_0}^\lambda\psi(\xi)d\xi\right)}{\lambda-\lambda_0}\sigma(sT^{1/3}\lambda)\lambda
d\lambda
+\mathcal O\left(s^{-1/2}T^{-2/3}\right),
\end{multline}
uniformly in $T$ as $s\to\infty$.

\paragraph{Asymptotics for $I_2$.}
For $\lambda\in(\lambda_0-\epsilon,\lambda_0+\epsilon)$, we express $\widehat S$ in terms of the local parametrix $P$ and $R$. If we write $\widehat P$ for the analytic continuation of $P$ from the sector $0<\arg(\lambda-\lambda_0)<\pi/2$, 
i.e.\ 
\begin{equation}\label{def:Phat}\widehat P(\lambda)=\begin{cases}P_+(\lambda)&\mbox{for $\lambda>\lambda_0$,}\\
P_+(\lambda)\begin{pmatrix}1&0\\ {\rm e}^{s^{3/2}(2g_+(\lambda)-V(\lambda)+V(\lambda_0))}&1\end{pmatrix}&\mbox{for $\lambda<\lambda_0$,}
\end{cases}\end{equation}
we have by \eqref{def:Shat} and \eqref{def:R} that
\[
\left(\widehat S^{-1}\widehat S'\right)_{21}(\lambda)=\left(\widehat P^{-1}\widehat P'\right)_{21}(\lambda)+\left(\widehat P^{-1}R^{-1}R'\widehat P\right)_{21}(\lambda).
\]
From the construction of the local parametrix, in particular formula \eqref{def:P}, we can verify that
 $\widehat P(\lambda)=\mathcal O(s^{1/4})$ uniformly in $\lambda$ and in $T$ as $s\to\infty$, which yields (using also the explicit expression of the Airy parametrix, see \eqref{eqA})
\begin{eqnarray*}
\left(\widehat S^{-1}\widehat S'\right)_{21}(\lambda)=2\pi s\mu'(\lambda) {\rm e}^{s^{3/2}(2g_+(\lambda)-V(\lambda)+V(\lambda_0))}\left(\begin{pmatrix}-i\Ai'&\Ai\end{pmatrix}\begin{pmatrix}\Ai'\\i\Ai''\end{pmatrix}\right)_{21}(s\mu(\lambda))
+\mathcal O(s^{-1})
,\end{eqnarray*}
with error term again uniform in $\lambda, T$ as $s\to\infty$. In other words,
\begin{equation}
\begin{array}{lll}
\left(\widehat{S}^{-1}\widehat{S}'\right)_{21}(\lambda) &=& 2\pi i s\mu'(\lambda) {\rm e}^{s^{3/2}(2g_+(\lambda)-V(\lambda)+V(\lambda_0))}K^{\Ai}(s\mu(\lambda), s\mu(\lambda))
+\mathcal O(s^{-1}),\label{asS3}
\end{array}
\end{equation}
with the same uniformity for the error term, where $K^{\rm Ai}$ is the Airy kernel given by \eqref{Airykernel}.

Consequently, 
\begin{equation}\label{I2}
\begin{array}{lll}
I_2 &=& \ds -\frac{s^2T^{-2/3}}{3}\int_{\lambda_0-\epsilon}^{\lambda_0+\epsilon}
\lambda \sigma(sT^{1/3}\lambda)
\left(K^{\Ai}(s\mu(\lambda),s\mu(\lambda))\mu'(\lambda)
\right) d\lambda 
+ \mathcal O(T^{-2/3}),
\end{array}
\end{equation}
uniformly in $T$ as $s\to\infty$.

\medskip

Summing up the contributions \eqref{I1beforelemma} of $A_1$ and \eqref{I2} of $A_2$ together with the small contribution \eqref{I3} of $A_3$,
we obtain
\begin{align}
&\partial_T\log Q(s,T)\nonumber\\
&\quad =
-\frac{s^{5/2}}{6\pi T^{2/3}}\int_{-\infty}^{\lambda_0}\lambda \sigma(sT^{1/3}\lambda)
\psi(\lambda) d\lambda 
+\frac{s}{12\pi T^{2/3}}\int_{-\infty}^{\lambda_0-\epsilon}\lambda\frac{\cos\left(
s^{3/2}\int_{\lambda_0}^\lambda\psi(\xi)d\xi\right)}{\lambda-\lambda_0}\sigma(sT^{1/3}\lambda)
d\lambda
\nonumber\\ 
&\qquad -\frac{s^2T^{-2/3}}{3}
\int_{\lambda_0-\epsilon}^{\lambda_0+\epsilon}
\lambda \sigma(sT^{1/3}\lambda)
\left(K^{\Ai}(s\mu(\lambda),s\mu(\lambda))\mu'(\lambda)-\frac{1}{2\pi}s^{1/2}\psi(\lambda){\bf 1}_{(-\infty,\lambda_0]}(\lambda)\right) d\lambda \nonumber \\ 
&\qquad 
+\mathcal O(T^{-2/3})
,\label{diffidtotal0}
\end{align}
uniformly in $T$ as $s\to\infty$. Moreover, 
by \eqref{def:mu} and \eqref{def:psi}, we have $\psi(\lambda)=- 2 i\mu(\lambda)^{1/2}\mu'(\lambda)$.
Substituting this in the above formula, we obtain equation \eqref{diffidtotal}.
\end{proof}

We now need to compute asymptotics for each of the three terms appearing at the right hand side of \eqref{diffidtotal}. Before doing that, we prove a technical lemma, which expresses the fact that $\sigma(sT^{1/3}\lambda)$ approximates a step function in a weak sense as $sT^{1/3}\to\infty$.

\begin{lemma}\label{lemma:stepapprox}
Suppose that $F:\mathbb R\to\mathbb R$ is such that \[\int_{\mathbb R}\frac{|F(\xi)|}{\xi^2+1}d\xi\leq c_2<\infty.\]
\begin{enumerate}
\item Suppose that $F$ is Lipschitz continuous at $0$, i.e.\ there exist $c_1,\delta>0$ such that
$|F(\xi)-F(0)|\leq c_1|\xi|$ for $|\xi|<\delta$.
Then, there exist $C, r_0>0$, depending only on $c_1, c_2$, and $F(0)$, such that for any $r\geq r_0$, we have
\[\left|\int_{-\infty}^{+\infty}F(\xi)\sigma(r\xi)d\xi -\int_{0}^{+\infty}F(\xi)d\xi\right|\leq Cr^{-2}.\]
\item Suppose that there exist $c_1,\delta>0$ such that
$|F(\xi)-F(0)|\leq c_1\left|\xi \log|\xi|\right|$ for $|\xi|<\delta$.
Then, there exist $C, r_0>0$, depending only on $c_1, c_2$, and $F(0)$, such that for any $r\geq r_0$, we have
\[\left|\int_{-\infty}^{+\infty}F(\xi)\sigma(r\xi)d\xi -\int_{0}^{+\infty}F(\xi)d\xi\right|\leq C|\log r|\  r^{-2}.\]
\end{enumerate}
\end{lemma}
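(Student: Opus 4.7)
The plan is to exploit the identity $\sigma(x)+\sigma(-x)=1$, which shows that $\sigma(r\xi)-\mathbf{1}_{(0,+\infty)}(\xi)$ is an odd function of $\xi$ decaying exponentially at infinity. Since this odd function has zero integral over $\mathbb{R}$, one can subtract the constant $F(0)$ freely and write
\begin{equation*}
\int_{-\infty}^{+\infty}F(\xi)\sigma(r\xi)d\xi-\int_{0}^{+\infty}F(\xi)d\xi
=\int_{-\infty}^{+\infty}\bigl(F(\xi)-F(0)\bigr)\bigl(\sigma(r\xi)-\mathbf{1}_{(0,+\infty)}(\xi)\bigr)d\xi.
\end{equation*}
This reduction is the only place where the hypothesis on $F$ near $0$ enters, and is essential: without it the $F(0)$-contribution would only give a bound of order $r^{-1}$ (the $L^1$-norm of $\sigma(r\xi)-\mathbf{1}_{(0,+\infty)}(\xi)$).

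Next, I would split the resulting integral into the regions $\{|\xi|<\delta\}$ and $\{|\xi|\geq\delta\}$. On the inner region, the elementary bound
\begin{equation*}
\bigl|\sigma(r\xi)-\mathbf{1}_{(0,+\infty)}(\xi)\bigr|\leq e^{-r|\xi|},\qquad \xi\neq 0,
\end{equation*}
combined with the Lipschitz hypothesis $|F(\xi)-F(0)|\leq c_1|\xi|$ gives a contribution bounded by $c_1\int_{\mathbb{R}}|\xi|e^{-r|\xi|}d\xi=2c_1 r^{-2}$, which yields the $r^{-2}$ rate in part 1. For part 2, the same computation with $|F(\xi)-F(0)|\leq c_1|\xi\log|\xi||$ and the change of variable $u=r|\xi|$ produces $\int_0^{r\delta}u(|\log u|+\log r)e^{-u}du=\mathcal{O}(\log r)$, hence a bound of order $|\log r|r^{-2}$.

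For the outer region $\{|\xi|\geq\delta\}$, I would use the factorization $e^{-r|\xi|}\leq e^{-r\delta/2}e^{-r|\xi|/2}$ together with the trivial bound $|F(\xi)|\leq(\xi^{2}+1)\cdot\tfrac{|F(\xi)|}{\xi^{2}+1}$ to obtain
\begin{equation*}
\int_{|\xi|\geq\delta}|F(\xi)|\,e^{-r|\xi|}d\xi
\leq e^{-r\delta/2}\cdot\sup_{|\xi|\geq\delta}\bigl(\xi^{2}+1\bigr)e^{-r|\xi|/2}\cdot c_{2}.
\end{equation*}
For $r$ large enough (depending on $\delta$), the supremum is attained at $|\xi|=\delta$, and the whole outer contribution is exponentially small in $r$, so it is absorbed into the $r^{-2}$ (resp.\ $|\log r|r^{-2}$) error. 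The $|F(0)|$-part on $\{|\xi|\geq\delta\}$ is likewise $\mathcal{O}(e^{-r\delta}/r)$. The only subtle point is that the constants $C,r_{0}$ announced in the statement will in fact depend on $\delta$ as well, but this is harmless for the applications in Section~5, where $\delta$ is fixed by $F$. There is no genuine obstacle here; the main thing to get right is the oddness cancellation at the start, without which no integrable rate in $r$ is achievable.
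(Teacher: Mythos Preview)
Your proof is correct and essentially identical to the paper's: both use the oddness of $\hat\sigma(x)=\sigma(x)-\mathbf{1}_{(0,+\infty)}(x)$ to kill the $F(0)$-contribution, then split at $|\xi|=\delta$, bounding the inner piece via the Lipschitz (resp.\ $|\xi\log|\xi||$) hypothesis and $|\hat\sigma(r\xi)|\leq e^{-r|\xi|}$, and the outer piece via the integrability assumption on $F/(\xi^2+1)$ combined with the exponential smallness of $\hat\sigma$. Your observation that $C,r_0$ must in fact depend on $\delta$ as well is correct and applies equally to the paper's formulation.
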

\begin{proof}
We give the proof of the first assertion, the second can be proved similarly.
Writing \[I:=\int_{-\infty}^{+\infty}F(\xi)\sigma(r\xi)d\xi -\int_{0}^{+\infty}F(\xi)d\xi,\]
we have
\[I=B_1+B_2+B_3,\]
with 
\begin{align*}
&B_1=F(0)\int_{-\infty}^{+\infty}\hat\sigma(r\xi)d\xi,\\
&B_2=\int_{-\delta}^{\delta}\left(F(\xi)-F(0)\right)\hat\sigma(r\xi)d\xi,\\
&B_3=\int_{\mathbb R\setminus[-\delta,\delta]}\left(F(\xi)-F(0)\right)\hat\sigma(r\xi)d\xi,
\end{align*}
where
$\hat\sigma(x)=\sigma(x)-{\bf 1}_{(0,+\infty)}(x)$.
Since $\hat\sigma(x)$ is an antisymmetric function of $x$, we have $B_1=0$.
Next, we can estimate $|B_2|$ as follows using the Lipschitz continuity and \eqref{ineq:sigma},
\[|B_2|\leq c_1\int_{-\delta}^{\delta}|\xi|\ |\hat\sigma(r\xi)|d\xi=c_1r^{-2}\int_{-\delta r}^{\delta r}|\lambda|\ |\hat\sigma(\lambda)|d\lambda\leq \frac{c_1}{r^2}\int_{-\infty}^{+\infty}|\lambda|\ e^{-|\lambda|}d\lambda.\]

Finally, for $B_3$ we have again using \eqref{ineq:sigma} for large enough $r$, 
\[|B_3|\leq \hat\sigma(\pm r\delta)(r^2\delta^2+1)\ \int_{-\infty}^{+\infty}\frac{|F(\xi)-F(0)|}{\xi^2+1}d\xi\leq (c_2+\pi |F(0)|)e^{-\delta r}(\delta^2r^2+1).\]
This proves the required estimate.
\end{proof}

We can now proceed with the analysis of the three terms on the right hand side of \eqref{diffidtotal}.

\begin{lemma}\label{lemma:oscillint}
As $s\to\infty$, uniformly in $M^{-1} \leq T \leq Ms^{3/2} $
\[\frac{s}{12\pi T^{2/3}}\int_{-\infty}^{\lambda_0-\epsilon}\lambda\frac{\cos\left(
s^{3/2}\int_{\lambda_0}^\lambda\psi(\xi)d\xi\right)}{\lambda-\lambda_0}\sigma(sT^{1/3}\lambda) d\lambda
=\mathcal O(s^{-1/2}T^{-2/3}).\]
\end{lemma}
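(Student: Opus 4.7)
\medskip
\noindent\textbf{Proof plan.} The key observation is that the phase $\Phi(\lambda) := \int_{\lambda_0}^\lambda \psi(\xi)\,d\xi$ is non-stationary on the integration domain: by the bound $\psi(\lambda) \geq 2\sqrt{\lambda_0 - \lambda}$ recorded just after \eqref{exprpsi2}, we have $\Phi'(\lambda) = \psi(\lambda) \geq 2\sqrt{\epsilon}$ uniformly in $s, T$ for $\lambda \leq \lambda_0 - \epsilon$. This opens the door to a standard non-stationary phase (integration by parts) argument: writing
\[
\cos\bigl(s^{3/2}\Phi(\lambda)\bigr) = \frac{1}{s^{3/2}\psi(\lambda)}\,\frac{d}{d\lambda}\sin\bigl(s^{3/2}\Phi(\lambda)\bigr),
\]
and setting $G(\lambda) := \dfrac{\lambda}{(\lambda-\lambda_0)\psi(\lambda)}$, the integral in question becomes
\[
J := \frac{1}{s^{3/2}}\int_{-\infty}^{\lambda_0-\epsilon} G(\lambda)\,\sigma(sT^{1/3}\lambda)\,\frac{d}{d\lambda}\sin\bigl(s^{3/2}\Phi(\lambda)\bigr)\,d\lambda.
\]

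\medskip
\noindent Integrating by parts, the boundary contribution at $-\infty$ vanishes because $\sigma(sT^{1/3}\lambda)$ decays exponentially as $\lambda \to -\infty$ (using \eqref{ineq:sigma}), while the boundary contribution at $\lambda_0-\epsilon$ is bounded by $C s^{-3/2}$ since $G(\lambda_0-\epsilon)$ is uniformly bounded (by Proposition \ref{prop:asendpoint}, $\lambda_0 \in (\kappa,1)$, and $\psi(\lambda_0-\epsilon) \geq 2\sqrt{\epsilon}$). The remaining integral is
\[
-\frac{1}{s^{3/2}}\int_{-\infty}^{\lambda_0-\epsilon}\!\Bigl[G'(\lambda)\sigma(sT^{1/3}\lambda) + G(\lambda)\,sT^{1/3}\sigma'(sT^{1/3}\lambda)\Bigr]\sin\bigl(s^{3/2}\Phi(\lambda)\bigr)\,d\lambda,
\]
whose absolute value is bounded by $s^{-3/2}\bigl(\|G'\sigma(sT^{1/3}\cdot)\|_{L^1} + \|G\cdot sT^{1/3}\sigma'(sT^{1/3}\cdot)\|_{L^1}\bigr)$.

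\medskip
\noindent For these two $L^1$ norms we need uniform-in-$(s,T)$ bounds: $G$ is uniformly bounded on $(-\infty,\lambda_0-\epsilon]$ (since $\psi(\lambda) \sim 2\sqrt{|\lambda|}$ as $\lambda \to -\infty$, making $G(\lambda) \sim \tfrac{1}{2}|\lambda|^{-1/2} \to 0$), and a similar analysis using \eqref{exprpsi2} shows that $G'$ has at most polynomial growth at $-\infty$; thus $\int |G'|\sigma(sT^{1/3}\lambda)\,d\lambda$ is uniformly bounded thanks to the exponential decay of $\sigma$ in $\lambda$ at $-\infty$. For the second $L^1$ norm, the change of variables $u = sT^{1/3}\lambda$ converts it to $\int |G(u/(sT^{1/3}))|\sigma'(u)\,du$, and since $\sigma'$ decays exponentially while $|G|$ grows at most like $|u|/(sT^{1/3})$ for $|u|$ large, this integral is also uniformly bounded. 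Combining everything, $|J| = \mathcal O(s^{-3/2})$, and multiplying by the prefactor $s/(12\pi T^{2/3})$ yields the claimed $\mathcal O(s^{-1/2}T^{-2/3})$.

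\medskip
\noindent The main (but modest) obstacle is verifying the uniformity of the $G$ and $G'$ bounds across the full range $M^{-1} \leq T \leq Ms^{3/2}$: one must check that the auxiliary coefficient $s^{-1/2}T^{1/3}$ appearing in the representation \eqref{exprpsi2} of $\psi$ stays bounded throughout this range (indeed it is at most $M^{1/3}$), so that $\psi(\lambda) = 2\sqrt{\lambda_0 - \lambda} + \mathcal O(1)$ with controlled derivatives on $(-\infty,\lambda_0-\epsilon]$. Once this is established, the integration-by-parts estimate goes through uniformly.
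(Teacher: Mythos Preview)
Your approach is correct and rests on the same core idea as the paper's proof: exploit the lower bound $\psi(\lambda)\geq 2\sqrt{\lambda_0-\lambda}\geq 2\sqrt{\epsilon}$ to integrate by parts against the oscillatory factor $\cos\bigl(s^{3/2}\Phi\bigr)$, gaining a factor $s^{-3/2}$.

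The organizational difference is that the paper first splits $\sigma(sT^{1/3}\lambda)=\mathbf 1_{(0,\lambda_0-\epsilon)}(\lambda)+\bigl(\sigma(sT^{1/3}\lambda)-\mathbf 1_{(0,\lambda_0-\epsilon)}(\lambda)\bigr)$, bounds the second piece directly using \eqref{ineq:sigma} (obtaining $\mathcal O(s^{-1}T^{-4/3})$), and then integrates by parts only on the compact interval $[0,\lambda_0-\epsilon]$. You instead integrate by parts on the whole half-line and absorb the tail behaviour into the term containing $sT^{1/3}\sigma'(sT^{1/3}\lambda)$. These are equivalent reorganizations: your extra term $\int G\cdot sT^{1/3}\sigma'\,d\lambda$ is precisely what captures the contribution that the paper isolates via the $\sigma-\mathbf 1$ tail. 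The paper's split is marginally cleaner because it avoids having to control $G'$ on the unbounded interval and near $\lambda=0$ (where $\psi'$ can be mildly singular), but your uniform bounds on $G$ and $G'$ are justifiable exactly as you indicate, using $s^{-1/2}T^{1/3}\leq M^{1/3}$ and the structure of \eqref{exprpsi2}. One small correction: in your final paragraph you say ``$|G|$ grows at most like $|u|/(sT^{1/3})$''; in fact $|G|$ is uniformly \emph{bounded} (you noted $G(\lambda)\to 0$ at $-\infty$), which makes that estimate immediate.
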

\begin{proof}
We start by rewriting
\begin{multline*}
\frac{s}{12\pi T^{2/3} }\int_{-\infty}^{\lambda_0-\epsilon}\lambda \frac{\cos\left(
s^{3/2}\int_{\lambda_0}^\lambda\psi(\xi)d\xi\right)}{\lambda-\lambda_0}\sigma(sT^{1/3}\lambda) d\lambda=
\frac{s}{12\pi T^{2/3} }\int_{0}^{\lambda_0-\epsilon}\lambda\frac{\cos\left(
s^{3/2}\int_{\lambda_0}^\lambda\psi(\xi)d\xi\right)}{\lambda-\lambda_0} d\lambda
\\
+\frac{s}{12\pi  T^{2/3}}\int_{-\infty}^{\lambda_0-\epsilon}
\lambda\cos\left(
s^{3/2}\int_{\lambda_0}^\lambda\psi(\xi)d\xi\right)
\frac{(\sigma(sT^{1/3}\lambda)-{\bf 1}_{(0,\lambda_0-\epsilon)}(\lambda)) }{\lambda-\lambda_0}d\lambda.
\end{multline*}
The first term at the right hand side can be written as
\[\frac{1}{12\pi s^{1/2}T^{2/3} }\int_{0}^{\lambda_0-\epsilon}\lambda\frac{s^{3/2}\psi(\lambda)
\cos\left(
s^{3/2}\int_{\lambda_0}^\lambda\psi(\xi)d\xi\right)}{(\lambda-\lambda_0)\psi(\lambda)} d\lambda,
\]
and bounded uniformly in $T$ by $\mathcal O(s^{-1/2}T^{-2/3})$ as $s\to\infty$ after integration by parts and by recalling that $\psi(\lambda)\geq 2\sqrt{\lambda_0-\lambda}$.

Next, we use \eqref{ineq:sigma} to bound the second term in absolute value by 
\[\frac{s}{12\pi\epsilon T^{2/3}}\int_{-\infty}^{\lambda_0-\epsilon}
|\lambda| {\rm e}^{-sT^{1/3}|\lambda|}d\lambda
=
\frac{1}{12\pi\epsilon  s T^{4/3}}\int_{-\infty}^{+\infty}
{\rm e}^{-|u|} |u| d u=\mathcal O(s^{-1}T^{-4/3})
\]
uniformly in $T$ as $s\to\infty$. These two estimates yield the result.
\end{proof}

\begin{lemma}\label{lemma:Airyterm}
As $s\to\infty$, we have uniformly in $T$,
\begin{multline}\label{eq:boundAiry}-\frac{s^2T^{-2/3}}{3}
\int_{\lambda_0-\epsilon}^{\lambda_0+\epsilon}
\sigma(sT^{1/3}\lambda)
\left(K^{\Ai}(s\mu(\lambda),s\mu(\lambda))-
\frac{1}{\pi}s^{1/2}|\mu(\lambda)|^{1/2}{\bf 1}_{(-\infty,\lambda_0]}(\lambda)\right) \lambda
\mu'(\lambda) d\lambda\\=\mathcal O(T^{-2/3}).\end{multline}
\end{lemma}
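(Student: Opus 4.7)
The plan is, after discarding the exponentially small contribution of $\sigma(sT^{1/3}\lambda)-1$ (valid since $\lambda\geq\lambda_0-\epsilon$ stays bounded below by a positive constant uniformly in $T$, while $sT^{1/3}\to\infty$), to perform the substitution $u=s\mu(\lambda)$. Using $\mu'(\lambda)\,d\lambda=du/s$, together with $\lambda=\mu^{-1}(u/s)$, $s^{1/2}|\mu(\lambda)|^{1/2}=|u|^{1/2}$, and $\{\lambda<\lambda_0\}=\{u<0\}$, the left-hand side of \eqref{eq:boundAiry} becomes, up to exponentially small error,
\[
-\frac{sT^{-2/3}}{3}\int_L^{L'} F(u/s)\,R(u)\,du,\qquad F(\xi):=\mu^{-1}(\xi),\ \ R(u):=K^{\Ai}(u,u)-\tfrac{|u|^{1/2}}{\pi}\mathbf 1_{u<0},
\]
with $L=s\mu(\lambda_0-\epsilon)\to-\infty$ and $L'=s\mu(\lambda_0+\epsilon)\to+\infty$. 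So it is enough to show that this inner integral is $\mathcal O(s^{-1})$ uniformly in $T$.

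I would prove this via integration by parts with the primitive $\Phi(u):=\int_{-\infty}^u R(v)\,dv$. The classical Plancherel--Rotach asymptotics of $\Ai$ give
\[
K^{\Ai}(u,u)=\tfrac{|u|^{1/2}}{\pi}-\tfrac{\cos(\frac{4}{3}|u|^{3/2})}{4\pi |u|}+\mathcal O(|u|^{-2})\qquad(u\to-\infty),
\]
plus exponential decay as $u\to+\infty$. Thus $R$ is a conditionally integrable $\mathcal O(|u|^{-1})$ oscillating function on the left, and a further integration by parts in the phase $\tfrac{4}{3}|u|^{3/2}$ shows $\Phi(u)=\mathcal O(|u|^{-3/2})$ as $u\to-\infty$, while $\Phi(u)-G_\infty$ is exponentially small as $u\to+\infty$, where $G_\infty:=\int_{-\infty}^\infty R(v)\,dv$.

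The crux is the sum rule $G_\infty=0$. I would derive it by computing $\int_{-A}^A K^{\Ai}(u,u)\,du$ in two ways using $K^{\Ai}(u,u)=\Ai'(u)^2-u\Ai(u)^2$ and $\Ai''(u)=u\Ai(u)$. First, integration by parts with $d(-K^{\Ai}(u,u))=\Ai(u)^2du$ gives
\[
\int_{-A}^A K^{\Ai}(u,u)\,du=A\bigl[K^{\Ai}(A,A)+K^{\Ai}(-A,-A)\bigr]+\int_{-A}^A u\Ai(u)^2\,du,
\]
while $\int\Ai'^2\,du=[\Ai\Ai']_{-A}^A-\int u\Ai^2\,du$ yields
\[
\int_{-A}^A K^{\Ai}(u,u)\,du=[\Ai\Ai']_{-A}^A-2\int_{-A}^A u\Ai^2\,du.
\]
Equating these and using $K^{\Ai}(-A,-A)=\tfrac{A^{1/2}}{\pi}-\tfrac{\cos(\frac{4}{3}A^{3/2})}{4\pi A}+\mathcal O(A^{-2})$ together with $\Ai(-A)\Ai'(-A)=-\tfrac{\cos(\frac{4}{3}A^{3/2})}{2\pi}+\mathcal O(A^{-3/2})$, the oscillating $\cos$ contributions cancel and produce the asymptotic $\int_{-A}^A u\Ai(u)^2\,du=-\tfrac{A^{3/2}}{3\pi}+\tfrac{\cos(\frac{4}{3}A^{3/2})}{4\pi}+o(1)$; feeding this back into $\int_{-A}^A R=\int_{-A}^A K^{\Ai}-\tfrac{2A^{3/2}}{3\pi}$ and sending $A\to\infty$ yields $G_\infty=0$.

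Granting $G_\infty=0$, $\Phi$ is a bounded, $L^1(\mathbb R)$ function vanishing at $\pm\infty$. Then
\[
\int_L^{L'}F(u/s)R(u)\,du=\bigl[F(u/s)\Phi(u)\bigr]_L^{L'}-\frac{1}{s}\int_L^{L'}F'(u/s)\Phi(u)\,du,
\]
with boundary contribution $\mathcal O(s^{-3/2})$ (using $|\Phi(L)|=\mathcal O(s^{-3/2})$ and $|\Phi(L')|$ exponentially small) and integral term controlled by $\|F'\|_\infty \|\Phi\|_{L^1}/s=\mathcal O(s^{-1})$ uniformly in $T$, since $F'=1/(\mu'\circ\mu^{-1})$ is uniformly bounded by Proposition~\ref{prop:hatpsi} and Proposition~\ref{prop:asendpoint}. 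This delivers the required $\mathcal O(s^{-1})$ bound and hence the lemma. The main obstacle is the sum rule $G_\infty=0$, which---while conceptually equivalent to particle-number conservation for the Airy point process---requires the above double-integration-by-parts consistency to be turned into a rigorous identity uniform in the parameters.
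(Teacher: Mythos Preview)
Your argument is essentially the same as the paper's. Both replace $\sigma(sT^{1/3}\lambda)$ by $1$ at the cost of an exponentially small error, then integrate by parts against an antiderivative of $f(u):=K^{\Ai}(u,u)-\pi^{-1}|u|^{1/2}\mathbf 1_{u<0}$, and use that this antiderivative decays at both $\pm\infty$. The paper works in the variable $\lambda$ and uses the primitive $F(r)=\int_r^{+\infty}f(u)\,du$, citing the bound $F(r)=\mathcal O(|r|^{-3/2}\log|r|)$ as $r\to-\infty$ from \cite{CharlierClaeys}; you substitute $u=s\mu(\lambda)$ first and use $\Phi(u)=\int_{-\infty}^u f(v)\,dv$. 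Since $\Phi=-F$ once $G_\infty=0$, the two integration-by-parts are the same computation in different coordinates. What you add is a self-contained derivation of the sum rule $G_\infty=0$ via the two integrations by parts in $\int_{-A}^A K^{\Ai}(u,u)\,du$, which the paper does not carry out explicitly.

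One imprecision to fix: from $R(u)=-\tfrac{\cos(\frac{4}{3}|u|^{3/2})}{4\pi|u|}+\mathcal O(|u|^{-2})$ alone you cannot conclude $\Phi(u)=\mathcal O(|u|^{-3/2})$, since a generic $\mathcal O(|u|^{-2})$ remainder only integrates to $\mathcal O(|u|^{-1})$, and then $\Phi\notin L^1$ would break your $\|F'\|_\infty\|\Phi\|_{L^1}/s$ estimate. In fact the next terms in the Airy diagonal expansion are of order $|u|^{-5/2}$ (the $\zeta^{-2}$ contributions, $\zeta=\tfrac23|u|^{3/2}$), so the remainder is $\mathcal O(|u|^{-5/2})$ and your claimed $\Phi(u)=\mathcal O(|u|^{-3/2})$ does hold; just sharpen the stated remainder accordingly. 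Alternatively, you can avoid the $L^1$ issue entirely by mimicking the paper's endgame: bound $\int_L^{L'}|\Phi(u)|\,du$ directly using $|\Phi(u)|\leq C\min\{1,|u|^{-5/4}\}$, which already gives $\mathcal O(1)$ uniformly.
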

\begin{proof}
To prove the estimate, we need some preliminary considerations on the Airy point process.
It follows from the general theory of determinantal point processes, see e.g.\ \cite{Soshnikov}, that 
\[\rho(r):=\int_{r}^{+\infty}K^{\Ai}(u,u)du\] is equal to the average number of particles bigger than $r$ in the Airy point process. It was shown in 
\cite[page 5]{CharlierClaeys} that this average has asymptotics
\[\rho(r)=\frac{2}{3\pi}|r|^{3/2}+\mathcal O(|r|^{-3/2}\log |r|)\qquad \mbox{as $r\to -\infty$}.\]
In other words, if we
write
\[F(r)=\rho(r)-\frac{2}{3\pi}|r|^{3/2}{\bf 1}_{(-\infty,0]}(r), \qquad f(r)=-F'(r)=K^{\Ai}(r,r)-
\frac{1}{\pi}|r|^{1/2}{\bf 1}_{(-\infty,0]}(r),
\]
we have that 
\[F(r)=\mathcal O(|r|^{-3/2}\log |r|)\qquad \mbox{as $r\to -\infty$}.\]
Moreover, it follows easily from the asymptotics of the Airy function at $+\infty$ that $F(r)$ is exponentially small as $r\to +\infty$.
It follows that we can bound $F$ uniformly as follows: there exists a constant $C>0$ such that
\begin{equation}|F(r)|\leq C\min\{|r|^{-5/4}, 1\},\qquad r\in\mathbb R.\label{ineqF}\end{equation}

(the exponent $5/4$ has been chosen for the sake of concreteness; the equation above is still valid replacing $5/4$ with any positive constant less than $3/2$). Now, we proceed with estimating the left hand side of \eqref{eq:boundAiry}.
For small enough $\epsilon>0$, we know from Proposition \ref{prop:asendpoint} that $\lambda_0-\epsilon$ is bounded below by a positive constant independent of $s$ and $T$, hence $\sigma(sT^{1/3}\lambda)-1$ is $\mathcal O({\rm e}^{-csT^{1/3}})$ for some $c>0$ as $s\to\infty$, uniformly in $T$ and in $\lambda\in[\lambda_0-\epsilon, \lambda_0+\epsilon]$.
It follows that we can estimate the left hand side of \eqref{eq:boundAiry} as
\[-\frac{s^2T^{-2/3}}{3}
\int_{\lambda_0-\epsilon}^{\lambda_0+\epsilon}
\sigma(sT^{1/3}\lambda)f(s\mu(\lambda))\lambda\mu'(\lambda)d\lambda=
-\frac{s^2T^{-2/3}}{3}
\int_{\lambda_0-\epsilon}^{\lambda_0+\epsilon}
f(s\mu(\lambda))\lambda\mu'(\lambda)d\lambda + \mathcal O({\rm e}^{-\frac{c}{2}sT^{1/3}}).\]
Furthermore, integrating by parts, we obtain
\begin{multline*}
-\frac{s^2T^{-2/3}}{3}\int_{\lambda_0-\epsilon}^{\lambda_0+\epsilon}
f(s\mu(\lambda))\lambda\mu'(\lambda)d\lambda=
\frac{sT^{-2/3}}{3}(\lambda_0+\epsilon) F(s\mu(\lambda_0+\epsilon))\\-
\frac{sT^{-2/3}}{3}(\lambda_0-\epsilon) F(s\mu(\lambda_0-\epsilon))
-\frac{sT^{-2/3}}{3}\int_{\lambda_0-\epsilon}^{\lambda_0+\epsilon}
F(s\mu(\lambda))d\lambda.
\end{multline*}
The integrated terms above are $\mathcal O(s^{-1/4}T^{-2/3})$ by \eqref{ineqF}. For the remaining term, we again use \eqref{ineqF} and obtain
\begin{align*}
\left|-\frac{sT^{-2/3}}{3}\int_{\lambda_0-\epsilon}^{\lambda_0+\epsilon}
F(s\mu(\lambda))d\lambda\right|
&\leq \frac{CsT^{-2/3}}{3}\int_{\lambda_0-\epsilon}^{\lambda_0+\epsilon} \min\{|s\mu(\lambda)|^{-5/4}, 1\}d\lambda\\
&\leq \frac{CsT^{-2/3}}{3}\int_{\lambda_0-\epsilon}^{\lambda_0+\epsilon} \min\{|cs(\lambda-\lambda_0)|^{-5/4}, 1\}d\lambda=\mathcal O(T^{-2/3}),
\end{align*}
uniformly in $T$ as $s\to\infty$,
where we used in the last line that there is a constant $c>0$ such $\mu(\lambda)\geq c|\lambda-\lambda_0|$ for $\lambda\in[\lambda_0-\epsilon, \lambda_0+\epsilon]$ (this can be seen from \eqref{eq:boundmu0} and \eqref{def:mu}). 
%
%
%
\end{proof}

\begin{lemma}\label{lemma:mainterm}
As $s\to\infty$, we have uniformly in $M^{-1}\leq T\leq Ms^{3/2}$ that
\begin{multline}\label{eq:intdensity}
-\frac{s^{5/2}}{6\pi T^{2/3} }\int_{-\infty}^{\lambda_0} \lambda\sigma(sT^{1/3})\psi(\lambda) d\lambda=-\frac{4T}{45\pi^6}\left(\sqrt{1+\pi^2sT^{-2/3}}-1\right)^5
\\{-\frac{T}{9 \pi^6}}\left(\sqrt{1+\pi^2sT^{-2/3}}-1\right)^4
+{\frac{\sqrt{1+\pi^2sT^{-2/3}}}{18 T}}
+\mathcal O(T^{-1}\log s).
\end{multline}
\end{lemma}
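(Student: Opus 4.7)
The plan is to use the representation \eqref{exprpsi2} of $\psi$ to split the integral into a square-root part and a logarithmic double-integral part, and evaluate each via step-function approximations (Lemma \ref{lemma:stepapprox}) combined with the asymptotics of $\lambda_0$ (Proposition \ref{prop:asendpoint}). Concretely, I would write the left-hand side as $I_a+I_b$, where
\[
I_a = -\frac{s^{5/2}}{3\pi T^{2/3}}\int_{-\infty}^{\lambda_0}\lambda\sqrt{\lambda_0-\lambda}\,\sigma(sT^{1/3}\lambda)\,d\lambda,
\]
\[
I_b = -\frac{s^2}{6\pi^2 T^{1/3}}\int_{-\infty}^{\lambda_0}\lambda\sigma(sT^{1/3}\lambda)\,\Lambda(\lambda)\,d\lambda,
\]
with $\Lambda(\lambda):=\int_{-\infty}^{\lambda_0}\log\frac{\sqrt{\lambda_0-\lambda}+\sqrt{\lambda_0-\xi}}{|\sqrt{\lambda_0-\lambda}-\sqrt{\lambda_0-\xi}|}\,d\sigma(sT^{1/3}\xi)$.

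For $I_a$, I would apply Lemma \ref{lemma:stepapprox} to $F(\lambda)=\lambda\sqrt{\lambda_0-\lambda}$, pushing one order beyond the bare approximation via Taylor expansion at $\lambda=0$: since $F(0)=0$, $F'(0)=\sqrt{\lambda_0}$, and $\hat\sigma:=\sigma-{\bf 1}_{(0,\infty)}$ satisfies $\int u\hat\sigma(u)du=-\pi^2/6$ while $\int u^{2k}\hat\sigma(u)du=0$ by parity, one obtains $\int_{-\infty}^{\lambda_0}F(\lambda)\sigma(sT^{1/3}\lambda)\,d\lambda = \frac{4}{15}\lambda_0^{5/2} - \frac{\pi^2\sqrt{\lambda_0}}{6s^2T^{2/3}} + \mathcal O(\cdots)$, hence $I_a = -\frac{4s^{5/2}\lambda_0^{5/2}}{45\pi T^{2/3}} + \frac{\pi s^{1/2}\sqrt{\lambda_0}}{18T^{4/3}} + \mathcal O(\cdots)$. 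For $I_b$, the exponential concentration of $d\sigma(sT^{1/3}\xi)$ near $\xi=0$ justifies replacing $\Lambda(\lambda)$ by $L(\lambda,0)=\log\frac{\sqrt{\lambda_0-\lambda}+\sqrt{\lambda_0}}{|\sqrt{\lambda_0-\lambda}-\sqrt{\lambda_0}|}$; since $\lambda L(\lambda,0)$ has logarithmic behavior near $\lambda=0$, Lemma \ref{lemma:stepapprox}(2) produces the error $\mathcal O((sT^{1/3})^{-2}\log(sT^{1/3}))$, while the explicit integral $\int_0^{\lambda_0}\lambda L(\lambda,0)d\lambda$ is evaluated via $v=\sqrt{(\lambda_0-\lambda)/\lambda_0}$, integration by parts, and the identity $\int_0^1 v(1-v^2)\log\frac{1+v}{1-v}\,dv=\frac{1}{3}$ to give $\frac{2\lambda_0^2}{3}$, so $I_b=-\frac{s^2\lambda_0^2}{9\pi^2T^{1/3}}+\mathcal O(T^{-1}\log s)$.

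To conclude, I would substitute $\lambda_0=\lambda_0^*+E$ with $\lambda_0^*=\frac{T^{2/3}\alpha^2}{\pi^2 s}$, $\alpha=\sqrt{1+\pi^2sT^{-2/3}}-1$, and $E=\mathcal O(s^{-5/2}T^{-1/3})$ from Proposition \ref{prop:asendpoint}, together with $\sqrt{\lambda_0^*}=\frac{T^{1/3}\alpha}{\pi\sqrt{s}}$. The leading contributions produce $-\frac{4T\alpha^5}{45\pi^6}$, $-\frac{T\alpha^4}{9\pi^6}$, and $\frac{\alpha}{18T}$. The additional $\frac{1}{18T}$ needed to combine with the last term into $\frac{1+\alpha}{18T}=\frac{\sqrt{1+\pi^2sT^{-2/3}}}{18T}$ arises from the explicit first-order correction $E\approx\frac{\pi^4}{12sT^{4/3}\alpha^2(1+\alpha)}$ (obtained by refining the endpoint equation \eqref{eq:endpointequation} with the same $\int u\hat\sigma(u)du=-\pi^2/6$ identity) acting in the main terms of both $I_a$ and $I_b$, together with a non-negligible contribution from $\Lambda(\lambda)-L(\lambda,0)$ in the central region $|\lambda|\lesssim(sT^{1/3})^{-1}$, where the simple replacement fails and one must instead use $\Lambda(\lambda)\approx\log(4\lambda_0 sT^{1/3})-\int\log|\eta-sT^{1/3}\lambda|\sigma'(\eta)\,d\eta$.

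The main obstacle is ensuring that every error term is uniformly $\mathcal O(T^{-1}\log s)$ across the full regime $M^{-1}\leq T\leq Ms^{3/2}$. The delicate point is the analysis near $\lambda=0$ in $I_b$: both the step-function approximation and the replacement $\Lambda\mapsto L(\cdot,0)$ degrade on the scale $(sT^{1/3})^{-1}$, and the central- and outer-region contributions must be matched precisely enough to produce the required $\frac{1}{18T}$ without leaving any spurious $\mathcal O(T^{-1})$ residue.
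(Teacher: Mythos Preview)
Your overall strategy is sound, and the decomposition via \eqref{exprpsi2} into a square-root piece $I_a$ and a logarithmic piece $I_b$ does lead to the correct leading terms
\[
-\frac{4s^{5/2}\lambda_0^{5/2}}{45\pi T^{2/3}}
\;+\;\frac{\pi s^{1/2}\sqrt{\lambda_0}}{18 T^{4/3}}
\;-\;\frac{s^{2}\lambda_0^{2}}{9\pi^{2}T^{1/3}},
\]
after which substituting the asymptotics of Proposition~\ref{prop:asendpoint} yields
$-\frac{4T\alpha^{5}}{45\pi^{6}}-\frac{T\alpha^{4}}{9\pi^{6}}+\frac{\alpha}{18T}$ with $\alpha=\sqrt{1+\pi^{2}sT^{-2/3}}-1$. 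The explicit integral $\int_0^{\lambda_0}\lambda L(\lambda,0)\,d\lambda=\tfrac{2}{3}\lambda_0^{2}$ is correct.

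There is, however, a genuine misconception in your final step. You write that ``the additional $\tfrac{1}{18T}$ needed to combine with the last term into $\tfrac{1+\alpha}{18T}$'' must be extracted from a refined endpoint expansion and from the central-region discrepancy $\Lambda(\lambda)-L(\lambda,0)$. This is unnecessary: the difference between $\tfrac{\alpha}{18T}$ and $\tfrac{\sqrt{1+\pi^{2}sT^{-2/3}}}{18T}$ is exactly $\tfrac{1}{18T}=\mathcal O(T^{-1})$, which is already absorbed in the stated error $\mathcal O(T^{-1}\log s)$. Consequently the ``main obstacle'' you identify is not an obstacle at all, and the refined endpoint formula, together with the delicate central-region matching you propose, can be dropped entirely.

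For comparison, the paper takes a different decomposition. Instead of splitting $\psi$ via \eqref{exprpsi2}, it writes
\[
\sigma\psi=\mathbf 1_{(0,\infty)}\psi+\hat\sigma\,\psi(0)+\hat\sigma\,(\psi-\psi(0)),
\]
handles the first term with the Cauchy principal-value form \eqref{exprpsi} and Fubini (reducing to explicit one-variable integrals plus Lemma~\ref{lemma:stepapprox}), computes the second term directly using $\int u\hat\sigma(u)\,du=-\pi^{2}/6$ and the estimate~\eqref{eq:psiest1} for $\psi(0)$, and bounds the third term by~\eqref{eq:psiest2}. The advantage of this route is that the log-singularity at $\lambda=0$ is isolated once and for all in the single quantity $\psi(0)$, so no matching of inner and outer regions is needed; Proposition~\ref{prop:estimatespsi0} was set up precisely to make this work. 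Your route via \eqref{exprpsi2} is perfectly viable, but to make it rigorous you still owe a clean uniform bound on $\int_{-\infty}^{\lambda_0}\lambda\sigma(sT^{1/3}\lambda)\bigl(\Lambda(\lambda)-L(\lambda,0)\bigr)d\lambda$ of size $\mathcal O(s^{-2}T^{-2/3}\log s)$; the simplest way to obtain it is again to observe that $\Lambda(\lambda)-L(\lambda,0)$ is controlled by exactly the $\mathcal O_{\rm sym}$ term in~\eqref{eq:psiest2}, which brings you back to the paper's estimates.
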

\begin{proof}
We write
\begin{align*}-\frac{s^{5/2}}{6\pi T^{2/3} }\int_{-\infty}^{\lambda_0}\lambda\sigma(sT^{1/3})\psi(\lambda) d\lambda&=-\frac{s^{5/2}}{6\pi T^{2/3} }\int_0^{\lambda_0} \lambda \psi(\lambda) d\lambda\\
&-\frac{s^{5/2}}{6\pi T^{2/3} }\int_{-\infty}^{\lambda_0}\lambda\left(\sigma(sT^{1/3}\lambda)-{\bf 1}_{(0,+\infty)}(\lambda)\right)\psi(0) d\lambda\\
&-\frac{s^{5/2}}{6\pi T^{2/3} }\int_{-\infty}^{\lambda_0}\lambda\left(\sigma(sT^{1/3}\lambda)-{\bf 1}_{(0,+\infty)}(\lambda)\right)(\psi(\lambda)-\psi(0)) d\lambda.
\end{align*}
The last line above can be estimated using \eqref{eq:psiest2} and \eqref{ineq:sigma}: it is bounded in absolute value (uniformly in $T$ as $s\to\infty$) by
\[\frac{s^{5/2}}{6\pi T^{2/3}}
\int_{-\infty}^{\lambda_0}
|\lambda| {\rm e}^{-sT^{1/3}|\lambda|}\left|\mathcal O(\lambda)+\mathcal O(s^{-1/2}T^{1/3}\log(2+sT^{1/3}|\lambda|))\right| d\lambda=\mathcal O\left(s^{-1/2}T^{-5/3}\right)
+\mathcal O(T^{-1}).
\]
Next, we use \eqref{eq:psiest1} to conclude that
\begin{align*}
&-\frac{s^{5/2}}{6\pi T^{2/3} }\int_{-\infty}^{\lambda_0}\lambda\left(\sigma(sT^{1/3}\lambda)-{\bf 1}_{(0,+\infty)}(\lambda)\right)\psi(0) d\lambda\\
&\qquad =
\left(-\frac{s^{5/2}\sqrt{\lambda_0}}{3\pi T^{2/3} }-\frac{s^2\log(sT^{1/3})}{6\pi^2 T^{1/3}}+\mathcal O(s^2T^{-1/3})\right)\int_{-\infty}^{\lambda_0}\lambda\left(\sigma(sT^{1/3}\lambda)-{\bf 1}_{(0,+\infty)}(\lambda)\right)d\lambda\\
&\qquad =
\left(-\frac{s^{1/2}\sqrt{\lambda_0}}{3\pi T^{4/3} }-\frac{\log(sT^{1/3})}{6\pi^2 T}+\mathcal O(T^{-1})\right)\int_{-\infty}^{sT^{1/3}\lambda_0}u\left(\sigma(u)-{\bf 1}_{(0,+\infty)}(u)\right) du\\
&\qquad ={\frac{\pi^2}6}\left(\frac{s^{1/2}\sqrt{\lambda_0}}{3\pi T^{4/3} }+\frac{\log(sT^{1/3})}{6\pi^2 T}\right) +\mathcal O(T^{-1})
.
\end{align*}

\medskip

It remains to compute
$-\ds\frac{s^{5/2}}{6\pi T^{2/3} }\int_0^{\lambda_0}\lambda \psi(\lambda) d\lambda$.
By \eqref{exprpsi}, it is straightforward to show that
\[-\frac{s^{5/2}}{6\pi T^{2/3} }\int_0^{\lambda_0}\lambda\psi(\lambda) d\lambda=-\frac{4s^{5/2}}{45\pi T^{2/3}}\lambda_0^{5/2}-\frac{s^2}{6\pi^2 T^{1/3}} \int_{-\infty}^{\lambda_0}\frac{\sigma(sT^{1/3}\xi)}{\sqrt{\lambda_0-\xi}}\left(P.V.\int_0^{\lambda_0}\lambda\frac{\sqrt{\lambda_0-\lambda}}{\xi-\lambda} d\lambda\right)d\xi.\]
The Cauchy principal value integral appearing in this expression can be computed explicitly as
\[P.V.\int_0^{\lambda_0}\lambda\frac{\sqrt{\lambda_0-\lambda}}{\xi-\lambda} d\lambda=-\frac{2}{3}\lambda_0^{3/2}+2\sqrt{\lambda_0}\xi
-2\sqrt{\lambda_0-\xi}\xi
\log\left(\sqrt{\lambda_0-\xi}+\sqrt{\lambda_0}\right)
+\sqrt{\lambda_0-\xi}\xi\log \xi.\]
Substituting this, we obtain
\begin{multline*}-\frac{s^{5/2}}{6\pi T^{2/3} }\int_0^{\lambda_0}\lambda\psi(\lambda) d\lambda=-\frac{4s^{5/2}}{45\pi T^{2/3}}\lambda_0^{5/2}+\frac{s^2}{3\pi^2 T^{1/3}}\int_{-\infty}^{\lambda_0}\left(\frac{\lambda_0^{3/2}}{3}-\sqrt{\lambda_0}\xi\right)\frac{\sigma(sT^{1/3}\xi)}{\sqrt{\lambda_0-\xi}}
 d\xi\\
+\frac{s^2}{3\pi^2 T^{1/3}}
 \int_{-\infty}^{\lambda_0}\sigma(sT^{1/3}\xi)\xi \log\left(\frac{\sqrt{\lambda_0-\xi}+\sqrt{\lambda_0}}{\sqrt{\xi}}\right)
d\xi
.\end{multline*}
We can now use Lemma \ref{lemma:stepapprox} (part 1 for the first integral, part 2 for the second) to conclude that
\begin{multline*}-\frac{s^{5/2}}{6\pi T^{2/3} }\int_0^{\lambda_0}
\lambda\psi(\lambda) d\lambda=-\frac{4s^{5/2}}{45\pi T^{2/3}}\lambda_0^{5/2}+\frac{s^2}{3\pi^2 T^{1/3}}\int_{0}^{\lambda_0}
\frac{\frac{\lambda_0^{3/2}}{3}-\sqrt{\lambda_0}\xi}{\sqrt{\lambda_0-\xi}}d\xi\\
+\frac{s^2}{3\pi^2 T^{1/3}}
 \int_{0}^{\lambda_0}\xi \log\left(\frac{\sqrt{\lambda_0-\xi}+\sqrt{\lambda_0}}{\sqrt{\xi}}\right)
d\xi +\mathcal O(T^{-1}\log s)\end{multline*}
uniformly in $T$ as $s\to\infty$.

Computing the remaining integrals explicitly, we find after a straightforward calculation,
\[
{-\ds\frac{s^{5/2}}{6\pi T^{2/3} }\int_0^{\lambda_0}\lambda \psi(\lambda) d\lambda}=-\frac{4s^{5/2}}{45\pi T^{2/3}}\lambda_0^{5/2}
-\frac{s^2\lambda_0^2}{9\pi^2 T^{1/3}}
+\mathcal O(T^{-1}\log s)
\]
uniformly in $T$ as $s\to \infty$.
Now we use Proposition \ref{prop:asendpoint}, which yields the result after using the fact that $T\leq Ms^{3/2}$.
\end{proof}

Combining Proposition \ref{prop:asendpoint}, Proposition \ref{prop:logder}, Lemma \ref{lemma:oscillint}, Lemma \ref{lemma:Airyterm} and Lemma \ref{lemma:mainterm},
we obtain the main result of this subsection.
\begin{proposition}\label{prop:logder2}
We have
\begin{multline}
\partial_T\log Q(s,T)=
-\frac{4T}{45\pi^6}\left(\sqrt{1+\pi^2sT^{-2/3}}-1\right)^5
{-\frac{T}{9 \pi^6}}\left(\sqrt{1+\pi^2sT^{-2/3}}-1\right)^4\\
+{ \frac{\sqrt{1+\pi^2sT^{-2/3}}}{18 T}}
+\mathcal O(T^{-2/3})+ \mathcal O(T^{-1}\log s)\label{diffidtotal2}
\end{multline}
as $s\to\infty$, uniformly in $T$.
\end{proposition}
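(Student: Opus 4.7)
The proof is a matter of assembly rather than new work. The heavy lifting has already been done in Proposition \ref{prop:logder}, Lemma \ref{lemma:oscillint}, Lemma \ref{lemma:Airyterm} and Lemma \ref{lemma:mainterm}, and the plan is simply to combine them and verify that the error terms line up.

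The plan is to begin from the decomposition \eqref{diffidtotal} of $\partial_T \log Q(s,T)$ as a sum of three explicit integrals (over $(-\infty, \lambda_0)$, over $(-\infty, \lambda_0 - \epsilon)$, and over $(\lambda_0-\epsilon,\lambda_0+\epsilon)$) plus an $\mathcal O(T^{-2/3})$ remainder, all valid uniformly in $M^{-1} \leq T \leq M s^{3/2}$ as $s\to\infty$. The second (oscillatory) integral is handled by Lemma \ref{lemma:oscillint}, which shows it is $\mathcal O(s^{-1/2} T^{-2/3})$; this is absorbed into the $\mathcal O(T^{-2/3})$ error because $s^{-1/2} \leq 1$. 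The third integral, involving the difference $K^{\Ai}(s\mu(\lambda), s\mu(\lambda)) - \frac{1}{\pi} s^{1/2} |\mu(\lambda)|^{1/2} \mathbf{1}_{(-\infty,\lambda_0]}(\lambda)$, is controlled by Lemma \ref{lemma:Airyterm} and contributes only $\mathcal O(T^{-2/3})$.

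All that remains is to substitute the explicit asymptotic expression from Lemma \ref{lemma:mainterm} for the first integral
\[
-\frac{s^{5/2}}{6\pi T^{2/3}}\int_{-\infty}^{\lambda_0} \lambda \sigma(sT^{1/3}\lambda) \psi(\lambda)\, d\lambda,
\]
which produces exactly the three explicit terms
\[
-\frac{4T}{45\pi^6}\left(\sqrt{1+\pi^2 sT^{-2/3}}-1\right)^5 -\frac{T}{9\pi^6}\left(\sqrt{1+\pi^2 sT^{-2/3}}-1\right)^4 + \frac{\sqrt{1+\pi^2 sT^{-2/3}}}{18 T}
\]
appearing in the statement, up to an $\mathcal O(T^{-1}\log s)$ remainder. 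Summing these contributions gives the claimed expansion \eqref{diffidtotal2}.

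The only real points to check are bookkeeping: first, that the main-term formula supplied by Lemma \ref{lemma:mainterm} is written in terms of $\lambda_0$ only via the leading asymptotics $\lambda_0 = \frac{T^{2/3}}{\pi^2 s}(\sqrt{1+\pi^2 sT^{-2/3}}-1)^2 + \mathcal O(s^{-5/2}T^{-1/3})$ from Proposition \ref{prop:asendpoint}, and that the sub-leading correction to $\lambda_0$ when inserted into the expansion produces an error that is dominated by the $\mathcal O(T^{-1}\log s)$ and $\mathcal O(T^{-2/3})$ terms already present; second, that all three error terms $\mathcal O(T^{-2/3})$, $\mathcal O(s^{-1/2}T^{-2/3})$, and $\mathcal O(T^{-1}\log s)$ are indeed uniform in the range $M^{-1}\leq T \leq M s^{3/2}$, which is immediate from the corresponding uniformity assertions in the cited lemmas. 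There is no real obstacle beyond this accounting.
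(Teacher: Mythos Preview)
Your proposal is correct and follows essentially the same approach as the paper, which simply states that the result is obtained by combining Proposition \ref{prop:asendpoint}, Proposition \ref{prop:logder}, Lemma \ref{lemma:oscillint}, Lemma \ref{lemma:Airyterm} and Lemma \ref{lemma:mainterm}. Your write-up is in fact more detailed than the paper's one-line proof, and the error-term bookkeeping you outline is exactly what is needed.
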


\subsection{Logarithmic derivative with respect to $s$}

We need similar asymptotics for the logarithmic $s$-derivative of $Q(s,T)$, but we will only need those for fixed $T>0$, which simplifies the computations considerably. The analogue of Proposition \ref{prop:logder} is the following result.
\begin{proposition}\label{prop:logders}
As $s\to\infty$, we have the following asymptotics for any $T>0$:
\begin{align}
&\partial_s\log Q(s,T)\nonumber\\
&\quad =
-\frac{s^{3/2}T^{1/3}}{2\pi }\int_{-\infty}^{\lambda_0}\sigma(sT^{1/3}\lambda)
\psi(\lambda)d\lambda 
+\frac{T^{1/3}}{4\pi}\int_{-\infty}^{\lambda_0-\epsilon}\frac{\cos\left(
s^{3/2}\int_{\lambda_0}^\lambda\psi(\xi)d\xi\right)}{\lambda-\lambda_0}\sigma(sT^{1/3}\lambda)d\lambda
\nonumber\\ 
&\qquad -sT^{1/3}
\int_{\lambda_0-\epsilon}^{\lambda_0+\epsilon}
\sigma(sT^{1/3}\lambda)
\left(K^{\Ai}(s\mu(\lambda),s\mu(\lambda))-
\frac{1}{\pi}s^{1/2}|\mu(\lambda)|^{1/2}{\bf 1}_{(-\infty,\lambda_0]}(\lambda)\right) 
\mu'(\lambda) d\lambda \nonumber \\ 
&\qquad +\mathcal O(s^{-1}).
\label{diffidtotals}
\end{align}
\end{proposition}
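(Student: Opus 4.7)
The plan is to repeat the argument of Proposition \ref{prop:logder} nearly verbatim, exploiting two simplifications: the identity (\ref{diffidentityeq_s}) lacks the weight $(\zeta+s)$ present in (\ref{diffidentityeq}), and since $T>0$ is fixed we do not need uniformity in $T$.

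Starting from (\ref{diffidentityeq_s}), I apply the change of variables $\lambda=\zeta/s+1$ (so the contour parameter becomes $\zeta_0=s\lambda_0-s$) and the transformation $\Psi\mapsto S$ of (\ref{def:S}). The factor of $s$ from $d\zeta=s\,d\lambda$ cancels the $s^{-1}$ produced by the chain rule in converting $\partial_\zeta\widehat\Psi$ to $\partial_\lambda\widehat S$. Combined with the identity $\sigma'(sT^{1/3}\lambda)=\sigma(sT^{1/3}\lambda)\,{\rm e}^{s^{3/2}V(\lambda)}$ (which follows from $\sigma'=\sigma(1-\sigma)$ and (\ref{def:h})), this yields
\begin{equation*}
\partial_s\log Q(s,T)=-\frac{T^{1/3}}{2\pi i}\int_{\mathbb R}\sigma(sT^{1/3}\lambda)\,(\widehat S^{-1}\widehat S')_{21}(\lambda)\,{\rm e}^{-s^{3/2}(2g_+-V+V(\lambda_0))}\,d\lambda,
\end{equation*}
with $\widehat S$ as in (\ref{def:Shat}). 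This is the exact analogue of the integrand derived in the proof of Proposition \ref{prop:logder}, with prefactor $-\frac{T^{1/3}}{2\pi i}$ replacing $-\frac{s}{6\pi iT^{2/3}}$ and no integrand factor of $\lambda$.

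I then split the integral into $A_1=(-\infty,\lambda_0-\epsilon]$, $A_2=(\lambda_0-\epsilon,\lambda_0+\epsilon)$, $A_3=[\lambda_0+\epsilon,+\infty)$. The local expansions (\ref{asS1})--(\ref{asS3}) for $(\widehat S^{-1}\widehat S')_{21}$ depend only on $S$, not on the parameter being differentiated, and therefore apply here without change. On $A_3$, combining (\ref{asS1}) with the decay estimate (\ref{ineq1}) yields a contribution of order $\mathcal O(s^{-3/2})$. On $A_1$, (\ref{asS2}) together with the fact that $2g_+-V+V(\lambda_0)$ is purely imaginary on $(-\infty,\lambda_0)$ produces
$$-\frac{s^{3/2}T^{1/3}}{2\pi}\int_{-\infty}^{\lambda_0-\epsilon}\sigma(sT^{1/3}\lambda)\psi(\lambda)\,d\lambda\,+\,\frac{T^{1/3}}{4\pi}\int_{-\infty}^{\lambda_0-\epsilon}\frac{\cos\bigl(s^{3/2}\int_{\lambda_0}^\lambda\psi(\xi)\,d\xi\bigr)}{\lambda-\lambda_0}\sigma(sT^{1/3}\lambda)\,d\lambda$$
modulo $\mathcal O(s^{-3/2})$; on $A_2$, (\ref{asS3}) gives $-sT^{1/3}\int_{\lambda_0-\epsilon}^{\lambda_0+\epsilon}\sigma(sT^{1/3}\lambda)K^{\Ai}(s\mu(\lambda),s\mu(\lambda))\mu'(\lambda)\,d\lambda$ modulo $\mathcal O(s^{-1})$.

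Finally, to extend the $\psi$-integral to $(-\infty,\lambda_0]$ and to generate the bracketed combination appearing in (\ref{diffidtotals}), I invoke the identity $\psi(\lambda)=-2i\mu(\lambda)^{1/2}\mu'(\lambda)$ coming from (\ref{def:mu}) and (\ref{def:psi}); on $(-\infty,\lambda_0)$ this becomes $\tfrac{s^{1/2}}{2\pi}\psi(\lambda)=\tfrac{s^{1/2}}{\pi}|\mu(\lambda)|^{1/2}\mu'(\lambda)$. Adding and subtracting $\tfrac{s^{1/2}}{\pi}|\mu|^{1/2}\mathbf{1}_{(-\infty,\lambda_0]}\mu'$ inside the $A_2$-integrand transfers exactly the slice $(\lambda_0-\epsilon,\lambda_0)$ back into the $\psi$-integral while producing the bracketed expression of (\ref{diffidtotals}). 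The argument is essentially mechanical; the one point requiring care is the correct bookkeeping of prefactors, and no genuinely new obstacle arises compared with Proposition \ref{prop:logder}.
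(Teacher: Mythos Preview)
Your proposal is correct and follows essentially the same route as the paper: express $\partial_s\log Q$ in terms of $\widehat S$, split into the three regions $A_1,A_2,A_3$, insert the expansions (\ref{asS1})--(\ref{asS3}), and then use $\psi(\lambda)=-2i\mu(\lambda)^{1/2}\mu'(\lambda)$ to recombine the $A_2$ piece with the $\psi$-integral. The only cosmetic difference is that the paper records the $A_3$ contribution as $\mathcal O(s^{-1/2})$ rather than your $\mathcal O(s^{-3/2})$; both are valid overestimates of an exponentially small term and are absorbed by the final $\mathcal O(s^{-1})$.
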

\begin{proof}
We can write \eqref{diffidentityeq_s} in terms of $\widehat S$ as follows,
\[\partial_s\log Q(s,T)=-\frac{T^{1/3}}{2\pi i}\int_{\mathbb R}\sigma(sT^{1/3}\lambda)\left(\widehat S^{-1}\widehat S'\right)_{21}(\lambda){\rm e}^{-s^{3/2}(2g_+(\lambda)-V(\lambda){+V(\lambda_0)})}d\lambda.\]
As before, we split this integral in $3$ parts,
\[\partial_T\log Q(s,T)=\widehat I_1+\widehat I_2+\widehat I_3,\]
where
\begin{equation}\label{def:hatIj}\widehat I_j=-
\frac{T^{1/3}}{2\pi i}\int_{A_j}\sigma(sT^{1/3}\lambda)
\left(\widehat S^{-1}\widehat S'\right)_{21}(\lambda){\rm e}^{-s^{3/2}(2g_+(\lambda)-V(\lambda){+V(\lambda_0)})}d\lambda.
\end{equation}

In a similar way as for $I_3$, we obtain using \eqref{asS1} {and \eqref{ineq1}} that $\widehat I_3
=\mathcal O(s^{-1/2}),$ as $s\to\infty$. 
For $\widehat I_1$, similarly as for $I_1$, substituting \eqref{asS2} in \eqref{def:hatIj}, we obtain
\begin{multline}\label{hatI1beforelemma}
\widehat I_1=-\frac{s^{3/2}T^{1/3}}{2\pi}\int_{-\infty}^{\lambda_0-\epsilon}\sigma(sT^{1/3}\lambda)
\psi(\lambda)d\lambda 
+\frac{T^{1/3}}{4\pi }\int_{-\infty}^{\lambda_0-\epsilon}
\frac{\cos\left(
s^{3/2}\int_{\lambda_0}^\lambda\psi(\xi)d\xi\right)}{\lambda-\lambda_0}\sigma(sT^{1/3}\lambda)d\lambda
\\
+\mathcal O\left(s^{-3/2}\right),
\end{multline}
as $s\to\infty$.
Finally, for $\widehat I_2$, we obtain using \eqref{asS3} that
\begin{equation}\label{hatI2}
\widehat I_2=-sT^{1/3}\int_{\lambda_0-\epsilon}^{\lambda_0+\epsilon}\sigma(sT^{1/3}\lambda)K^{\Ai}(s\mu(\lambda),s\mu(\lambda))\mu'(\lambda) d\lambda  
+\mathcal O(s^{-1}),
\end{equation}
as $s\to\infty$.

Summing up the above contributions and using the fact that $\psi(\lambda)=-2i\mu(\lambda)^{1/2}\mu'(\lambda)$, we obtain the result.
\end{proof}

The following is the counterpart of Lemma \ref{lemma:oscillint}. We omit the proof as it is similar to the one given in the previous subsection.
\begin{lemma}\label{lemma:oscillints}
As $s\to\infty$, we have for any fixed $T>0$,
\[\frac{T^{1/3}}{4\pi}\int_{-\infty}^{\lambda_0-\epsilon}\frac{\cos\left(
s^{3/2}\int_{\lambda_0}^\lambda\psi(\xi)d\xi\right)}{\lambda-\lambda_0}\sigma(sT^{1/3}\lambda) d\lambda
=\mathcal O(s^{-3/2}).\]
\end{lemma}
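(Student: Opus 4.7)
The proof parallels the one given for Lemma~\ref{lemma:oscillint} in spirit, but with one essential difference. The prefactor $T^{1/3}/(4\pi)$ here is much smaller than $s/(12\pi T^{2/3})$ in Lemma~\ref{lemma:oscillint}, so the splitting-and-crude-bound approach used there (splitting at $\lambda = 0$ and bounding the $\lambda < 0$ part by the exponential decay of $\sigma$) only yields $\mathcal O(s^{-1})$ for fixed $T$, which is too weak. Instead, I would integrate by parts on the full integral at once, exploiting the oscillation of $\cos(s^{3/2}\Psi(\lambda))$, with $\Psi(\lambda) := \int_{\lambda_0}^\lambda \psi(\xi)d\xi$, on the whole of $(-\infty, \lambda_0-\epsilon)$.

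Using $\Psi'(\lambda) = \psi(\lambda)$ to write $s^{3/2}\psi(\lambda)\cos(s^{3/2}\Psi(\lambda)) = \frac{d}{d\lambda}\sin(s^{3/2}\Psi(\lambda))$, integration by parts yields
\[
I = \frac{T^{1/3}}{4\pi s^{3/2}}\left[\frac{\sigma(sT^{1/3}\lambda)\sin(s^{3/2}\Psi(\lambda))}{(\lambda-\lambda_0)\psi(\lambda)}\right]_{-\infty}^{\lambda_0-\epsilon} - \frac{T^{1/3}}{4\pi s^{3/2}}\int_{-\infty}^{\lambda_0-\epsilon}\sin(s^{3/2}\Psi(\lambda))\,\partial_\lambda\!\left(\frac{\sigma(sT^{1/3}\lambda)}{(\lambda-\lambda_0)\psi(\lambda)}\right)d\lambda.
\]
The boundary term at $\lambda_0 - \epsilon$ is bounded, since $|(\lambda-\lambda_0)\psi(\lambda)| \geq 2\epsilon^{3/2}$ there (using $\psi(\lambda)\geq 2\sqrt{\lambda_0-\lambda}$), and $|\sigma|, |\sin| \leq 1$; the boundary term at $-\infty$ vanishes since $\sigma(sT^{1/3}\lambda)$ decays exponentially by \eqref{ineq:sigma}.

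It remains to show that the integral is uniformly $\mathcal O(1)$ in $s$. Expanding by the product rule, the integrand splits into two pieces: (a) $sT^{1/3}\sigma'(sT^{1/3}\lambda)/[(\lambda-\lambda_0)\psi(\lambda)]$, whose absolute integral is bounded by $(2\epsilon^{3/2})^{-1}\int_\mathbb{R}\sigma'(u)du = (2\epsilon^{3/2})^{-1}$ after the change of variables $u=sT^{1/3}\lambda$; and (b) $\sigma(sT^{1/3}\lambda)[\psi(\lambda) + (\lambda-\lambda_0)\psi'(\lambda)]/[(\lambda-\lambda_0)\psi(\lambda)]^2$, where the factorization $\psi(\lambda) = 2\sqrt{\lambda_0-\lambda}\,w(\lambda)$ gives $\psi + (\lambda-\lambda_0)\psi' = 3\sqrt{\lambda_0-\lambda}\,w - 2(\lambda_0-\lambda)^{3/2}w'$ and $[(\lambda-\lambda_0)\psi]^2 = 4(\lambda_0-\lambda)^3 w^2$, so this piece is pointwise dominated (up to constants) by $|\sigma(sT^{1/3}\lambda)|\big[(\lambda_0-\lambda)^{-5/2}w^{-1} + |w'|(\lambda_0-\lambda)^{-3/2}w^{-2}\big]$. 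On $(0,\lambda_0-\epsilon)$ the factor $|\sigma|\leq 1$, while $\lambda_0-\lambda$ is bounded below, so the contribution there is $\mathcal O(1)$; on $(-\infty,0)$ the bound $\sigma(sT^{1/3}\lambda)\leq e^{-sT^{1/3}|\lambda|}$ overwhelms any polynomial growth in $\lambda_0-\lambda$.

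The main obstacle, and the only real verification required, is to confirm that $w$ and $w'$ are bounded on the whole half-line $\lambda\leq\lambda_0-\epsilon$ uniformly in $s$ for fixed $T$ (Proposition~\ref{prop:hatpsi} asserts this only near $\lambda_0$). This follows by differentiating under the integral in the definition \eqref{eq:defw1} of $w$: for $\lambda$ bounded away from $\lambda_0$, the integrand $[\sigma(sT^{1/3}\xi)-\sigma(sT^{1/3}\lambda)]/[\sqrt{\lambda_0-\xi}(\xi-\lambda)]$ and its $\lambda$-derivative are pointwise controlled (using boundedness and exponential decay of $\sigma$ and $\sigma'$ outside a neighborhood of the origin) by an $L^1$ function of $\xi$ independent of $\lambda$ and $s$, yielding the required uniform bounds. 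Combining (a) and (b), the post-IBP integral is $\mathcal O(1)$, and together with the boundary terms this gives $I = \mathcal O(s^{-3/2})$.
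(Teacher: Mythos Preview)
Your observation is sharp: the direct analogue of the proof of Lemma~\ref{lemma:oscillint} does \emph{not} give $\mathcal O(s^{-3/2})$ here. In that proof the extra factor $\lambda$ in the integrand is what makes the ``tail'' piece
\[
\frac{T^{1/3}}{4\pi}\int_{-\infty}^{\lambda_0-\epsilon}\cos(\cdots)\,\frac{\sigma(sT^{1/3}\lambda)-{\bf 1}_{(0,\lambda_0-\epsilon)}(\lambda)}{\lambda-\lambda_0}\,d\lambda
\]
of order $(sT^{1/3})^{-2}$; without it the crude exponential bound only yields $(sT^{1/3})^{-1}$, i.e.\ $\mathcal O(s^{-1})$ for fixed $T$. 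Since the downstream Proposition~\ref{prop:logders2} carries an error $\mathcal O(s^{-1}\log s)$, the weaker bound would in fact suffice, but you are right that the stated $\mathcal O(s^{-3/2})$ requires more than a verbatim repetition.

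Your full-interval integration by parts is the natural fix, and the boundary terms and piece~(a) are handled correctly. The gap is in piece~(b): the assertion that $w'$ is bounded on $(-\infty,\lambda_0-\epsilon]$ \emph{uniformly in $s$} is not justified by your sketch, and is in fact false near $\lambda=0$. Differentiating \eqref{eq:defw1} gives
\[
w'(\lambda)=\frac{s^{-1/2}T^{1/3}}{2\pi}\int_{-\infty}^{\lambda_0}\frac{\sigma(sT^{1/3}\xi)-\sigma(sT^{1/3}\lambda)-sT^{1/3}\sigma'(sT^{1/3}\lambda)(\xi-\lambda)}{(\xi-\lambda)^2}\,\frac{d\xi}{\sqrt{\lambda_0-\xi}},
\]
and for $|sT^{1/3}\lambda|=\mathcal O(1)$ the integrand is of size $(sT^{1/3})^2$ on a $\xi$-interval of length $(sT^{1/3})^{-1}$ around $\lambda$, producing a contribution of order $sT^{1/3}$ to the $\xi$-integral and hence $|w'(\lambda)|$ of order $s^{1/2}$. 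Your dominated-convergence sentence (``controlled by an $L^1$ function of $\xi$ independent of $\lambda$ and $s$'') does not cover this: neither $h$ nor $\partial_\lambda h$ admits such a majorant, since already $|h|\le 1/(\sqrt{\lambda_0-\xi}\,|\xi-\lambda|)$ fails to be integrable in $\xi$.

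The argument is salvageable: the set where $|w'|\sim s^{1/2}$ has $\lambda$-measure $\mathcal O((sT^{1/3})^{-1})$, so its contribution to the $\lambda$-integral in piece~(b) is $\mathcal O(s^{1/2}\cdot s^{-1})=\mathcal O(s^{-1/2})$, still bounded; away from this window $\sigma',\sigma''$ decay exponentially in $sT^{1/3}|\lambda|$ and the previous estimates go through. You should make this localisation explicit rather than asserting uniform boundedness of $w'$.
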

Instead of Lemma \ref{lemma:Airyterm}, we have the following.
\begin{lemma}\label{lemma:Airyterms}
As $s\to\infty$, we have for any fixed $T>0$,
\begin{equation}\label{eq:boundAirys}-sT^{1/3}
\int_{\lambda_0-\epsilon}^{\lambda_0+\epsilon}
\sigma(sT^{1/3}\lambda)
\left(K^{\Ai}(s\mu(\lambda),s\mu(\lambda))-
\frac{1}{\pi}s^{1/2}|\mu(\lambda)|^{1/2}{\bf 1}_{(-\infty,\lambda_0]}(\lambda)\right) 
\mu'(\lambda) d\lambda=\mathcal O(s^{-5/4}).\end{equation}
\end{lemma}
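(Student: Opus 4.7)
The plan is to mimic the proof of Lemma \ref{lemma:Airyterm}, but to exploit the fact that the integrand no longer carries a factor of $\lambda$. As before, introduce the one-point function $\rho(r)=\int_r^{+\infty}K^{\Ai}(u,u)du$ of the Airy point process and set
\[
F(r)=\rho(r)-\frac{2}{3\pi}|r|^{3/2}{\bf 1}_{(-\infty,0]}(r),\qquad f(r)=-F'(r)=K^{\Ai}(r,r)-\frac{1}{\pi}|r|^{1/2}{\bf 1}_{(-\infty,0]}(r),
\]
so that the bound $|F(r)|\leq C\min\{|r|^{-5/4},1\}$ from the proof of Lemma \ref{lemma:Airyterm} applies. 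Since $\mu$ is a conformal map with $\mu(\lambda_0)=0$ and $\mu'(\lambda_0)>0$, the factor ${\bf 1}_{(-\infty,\lambda_0]}(\lambda)$ equals ${\bf 1}_{(-\infty,0]}(s\mu(\lambda))$ and $s^{1/2}|\mu(\lambda)|^{1/2}=|s\mu(\lambda)|^{1/2}$, so the bracketed expression in the integrand is exactly $f(s\mu(\lambda))$.

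Next, because $T>0$ is fixed, Proposition \ref{prop:asendpoint} ensures that $\lambda_0-\epsilon$ is bounded below by a positive constant. Consequently $\sigma(sT^{1/3}\lambda)=1+\mathcal O({\rm e}^{-c\,sT^{1/3}})$ uniformly on $[\lambda_0-\epsilon,\lambda_0+\epsilon]$ for some $c>0$. Replacing $\sigma$ by $1$ in \eqref{eq:boundAirys} produces only an exponentially small error in $s$. The remaining task is therefore to estimate
\[
-sT^{1/3}\int_{\lambda_0-\epsilon}^{\lambda_0+\epsilon} f(s\mu(\lambda))\,\mu'(\lambda)\,d\lambda.
\]

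The key simplification relative to Lemma \ref{lemma:Airyterm} is that, without an extra $\lambda$ factor, the integrand is a perfect derivative: $s\mu'(\lambda)f(s\mu(\lambda))=-\frac{d}{d\lambda}F(s\mu(\lambda))$. Hence
\[
-sT^{1/3}\int_{\lambda_0-\epsilon}^{\lambda_0+\epsilon} f(s\mu(\lambda))\,\mu'(\lambda)\,d\lambda
=T^{1/3}\bigl[F(s\mu(\lambda_0+\epsilon))-F(s\mu(\lambda_0-\epsilon))\bigr],
\]
and there is no residual integral to control. Since $\mu(\lambda_0\pm\epsilon)$ is bounded away from $0$ (again by conformality at $\lambda_0$), the arguments $s\mu(\lambda_0\pm\epsilon)$ grow linearly in $s$, and the estimate $|F(r)|\leq C|r|^{-5/4}$ for large $|r|$ yields
\[
T^{1/3}\bigl|F(s\mu(\lambda_0+\epsilon))-F(s\mu(\lambda_0-\epsilon))\bigr|=\mathcal O(s^{-5/4}),
\]
for any fixed $T>0$, which is the claimed bound. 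There is essentially no hard step: integration by parts is clean because of the missing $\lambda$ factor, so the only point of care is to check that the argument $s\mu(\lambda_0\pm\epsilon)$ is genuinely of order $s$, which follows from Proposition \ref{prop:hatpsi} together with the identity \eqref{def:mu}.
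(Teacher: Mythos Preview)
Your proof is correct and follows essentially the same route as the paper's own argument: you replace $\sigma$ by $1$ up to an exponentially small error, recognize that $s\mu'(\lambda)f(s\mu(\lambda))$ is a perfect $\lambda$-derivative (equivalently, the paper performs the substitution $u=s\mu(\lambda)$), and then read off the two boundary values $T^{1/3}F(s\mu(\lambda_0\pm\epsilon))=\mathcal O(s^{-5/4})$.
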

\begin{proof}
With the same notations as in the proof of Lemma \ref{lemma:Airyterm}, we need to bound
\[-sT^{1/3}
\int_{\lambda_0-\epsilon}^{\lambda_0+\epsilon}
\sigma(sT^{1/3}\lambda)f(s\mu(\lambda)) 
\mu'(\lambda) d\lambda,\] which is equal to 
$-sT^{1/3}
\int_{\lambda_0-\epsilon}^{\lambda_0+\epsilon}
f(s\mu(\lambda)) 
\mu'(\lambda) d\lambda$
up to exponentially small error terms because $\sigma(sT^{1/3}\lambda)$ is exponentially close $1$.
But 
\begin{multline*}-sT^{1/3}
\int_{\lambda_0-\epsilon}^{\lambda_0+\epsilon}
f(s\mu(\lambda)) 
\mu'(\lambda) d\lambda=-T^{1/3}
\int_{s\mu(\lambda_0-\epsilon)}^{s\mu(\lambda_0+\epsilon)}
f(u)du\\=T^{1/3}F(s\mu(\lambda_0+\epsilon)) - T^{1/3}F(s\mu(\lambda_0-\epsilon))
=\mathcal O(s^{-5/4})\end{multline*}
as $s\to\infty$, by \eqref{ineqF}.
\end{proof}

Next, we derive the analogue of Lemma \ref{lemma:mainterm}.
\begin{lemma}\label{lemma:mainterms}
As $s\to\infty$, we have for any fixed $T>0$,
\begin{multline*}
-\frac{s^{3/2}T^{1/3}}{2\pi }\int_{-\infty}^{\lambda_0}\sigma(sT^{1/3})\psi(\lambda) d\lambda=-\frac{2T^{4/3}}{3\pi^4 }\left(\sqrt{1+\pi^2sT^{-2/3}}-1\right)^3\\
{-\frac{T^{4/3}}{\pi^4}\left(\sqrt{1+\pi^2sT^{-2/3}}-1\right)^2 }
{-\frac{\pi}{12 s^{1/2}T^{1/3}}}
+\mathcal O(s^{-1}\log s).
\end{multline*}
\end{lemma}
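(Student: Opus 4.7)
The proof follows the same strategy as that of Lemma \ref{lemma:mainterm}, with the integrand now lacking the extra factor of $\lambda$ and carrying the new prefactor $-\frac{s^{3/2}T^{1/3}}{2\pi}$. I would decompose
\begin{equation*}
\int_{-\infty}^{\lambda_0}\sigma(sT^{1/3}\lambda)\psi(\lambda)\,d\lambda = \int_0^{\lambda_0}\psi(\lambda)\,d\lambda + \int_{-\infty}^{\lambda_0}\bigl(\sigma(sT^{1/3}\lambda)-{\bf 1}_{(0,+\infty)}(\lambda)\bigr)\psi(\lambda)\,d\lambda
\end{equation*}
and handle the two pieces separately.

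For the second integral, I write $\psi(\lambda) = \psi(0) + (\psi(\lambda)-\psi(0))$. The $\psi(0)$-contribution is exponentially small, since $\sigma - {\bf 1}_{(0,+\infty)}$ is antisymmetric and its integral over $(-\infty,\lambda_0)$ differs from $0$ by an exponentially small tail. Into the remainder I insert Proposition \ref{prop:estimatespsi0}; expanding $2\sqrt{\lambda_0-\lambda}-2\sqrt{\lambda_0}= -\lambda/\sqrt{\lambda_0} + \mathcal O(\lambda^2)$ near $\lambda = 0$ and using the identity
\begin{equation*}
\int_{-\infty}^{+\infty} u\bigl(\sigma(u)-{\bf 1}_{(0,+\infty)}(u)\bigr)\,du = -2\int_0^{+\infty}\frac{u\,du}{e^u+1} = -\frac{\pi^2}{6},
\end{equation*}
the dominant contribution after multiplication by the overall prefactor is $-\frac{\pi}{12\sqrt{\lambda_0}\,s^{1/2}T^{1/3}}$, which equals $-\frac{\pi}{12\,s^{1/2}T^{1/3}}+\mathcal O(s^{-1})$ after inserting $1/\sqrt{\lambda_0} = 1 + \mathcal O(T^{1/3}s^{-1/2})$ from Proposition \ref{prop:asendpoint}. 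The remaining pieces of \eqref{eq:psiest2} each produce only $\mathcal O(s^{-1})$: the quadratic Taylor remainder and the symmetric logarithmic error both yield antisymmetric integrands (hence exponentially small integrals), while the linear $\mathcal O(s^{-1/2}T^{1/3}\lambda)$ error gives an $\mathcal O(s^{-1})$ contribution via the same antisymmetry computation that produced the main term.

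For the first integral, I would substitute \eqref{exprpsi}, swap the order of integration, and evaluate the inner principal value integral via the change of variables $v = \sqrt{\lambda_0-\lambda}$:
\begin{equation*}
P.V.\int_0^{\lambda_0}\frac{\sqrt{\lambda_0-\lambda}}{\xi-\lambda}\,d\lambda = 2\sqrt{\lambda_0} + \sqrt{\lambda_0-\xi}\,\ln\!\left|\frac{\sqrt{\lambda_0}-\sqrt{\lambda_0-\xi}}{\sqrt{\lambda_0}+\sqrt{\lambda_0-\xi}}\right|.
\end{equation*}
The first term pairs with $\frac{s^{-1/2}T^{1/3}}{\pi}\int_{-\infty}^{\lambda_0}\sigma(sT^{1/3}\xi)/\sqrt{\lambda_0-\xi}\,d\xi = 1-\lambda_0$ (the endpoint equation \eqref{eq:endpointequation}) to give $2\sqrt{\lambda_0}(1-\lambda_0)$. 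For the logarithmic term I use the factorization $\ln\bigl|(\sqrt{\lambda_0}-\sqrt{\lambda_0-\xi})/(\sqrt{\lambda_0}+\sqrt{\lambda_0-\xi})\bigr| = \ln|\xi|-2\ln(\sqrt{\lambda_0}+\sqrt{\lambda_0-\xi})$ to isolate the singularity at $\xi=0$, and then apply Lemma \ref{lemma:stepapprox} (parts 1 and 2) to reduce both resulting integrals to explicit ones on $(0,\lambda_0)$, with total error $\mathcal O(s^{-2}T^{2/3}\log s)$. Direct evaluation then yields
\begin{equation*}
\int_0^{\lambda_0}\psi(\lambda)\,d\lambda = 2\sqrt{\lambda_0} - \tfrac{2}{3}\lambda_0^{3/2} - \frac{2\,s^{-1/2}T^{1/3}\lambda_0}{\pi} + \mathcal O(s^{-2}T^{2/3}\log s).
\end{equation*}

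Multiplying by $-\frac{s^{3/2}T^{1/3}}{2\pi}$ and substituting the $\lambda_0$-asymptotics from Proposition \ref{prop:asendpoint}, the three main terms become $-\frac{sT^{2/3}(\tau-1)}{\pi^2}$, $\frac{T^{4/3}(\tau-1)^3}{3\pi^4}$, and $\frac{T^{4/3}(\tau-1)^2}{\pi^4}$, where $\tau := \sqrt{1+\pi^2 sT^{-2/3}}$. The main obstacle, and the key step at this stage, is recognising the algebraic identity
\begin{equation*}
-\frac{sT^{2/3}(\tau-1)}{\pi^2} = -\frac{T^{4/3}(\tau-1)^2(\tau+1)}{\pi^4} = -\frac{T^{4/3}(\tau-1)^3}{\pi^4} - \frac{2T^{4/3}(\tau-1)^2}{\pi^4},
\end{equation*}
which follows from $\pi^2 sT^{-2/3} = (\tau-1)(\tau+1)$. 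Combining everything then produces exactly the leading terms $-\frac{2T^{4/3}(\tau-1)^3}{3\pi^4} - \frac{T^{4/3}(\tau-1)^2}{\pi^4}$ announced in the statement, which together with the $-\frac{\pi}{12\,s^{1/2}T^{1/3}}$ contribution from the second piece and the cumulative $\mathcal O(s^{-1}\log s)$ error yields the lemma.
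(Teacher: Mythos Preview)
Your proof is correct and takes essentially the same approach as the paper: the same decomposition into the $[0,\lambda_0]$ piece and the $(\sigma-{\bf 1})$ correction, the same use of Proposition \ref{prop:estimatespsi0} for the latter, and the same evaluation of the principal-value integral followed by Lemma \ref{lemma:stepapprox}. Your one variation---invoking the endpoint equation \eqref{eq:endpointequation} rather than Lemma \ref{lemma:stepapprox} for the $2\sqrt{\lambda_0}$ contribution---produces a different-looking intermediate expression for $\int_0^{\lambda_0}\psi$, but the final algebraic identity you identify closes to the same result.
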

\begin{proof}
The proof follows the same lines as that of Lemma \ref{lemma:mainterm}: we start by observing that
\begin{multline*}-\frac{s^{3/2}T^{1/3}}{2\pi }\int_{-\infty}^{\lambda_0}\sigma(sT^{1/3})\psi(\lambda) d\lambda=
-\frac{s^{3/2}T^{1/3}\psi(0)}{2\pi}\int_{-\infty}^{\lambda_0}\left(\sigma(sT^{1/3}\lambda)-{\bf 1}_{(0,+\infty)}(\lambda)\right) d\lambda\\
-\frac{s^{3/2}T^{1/3}}{2\pi}\int_{-\infty}^{\lambda_0}\left(\sigma(sT^{1/3}\lambda)-{\bf 1}_{(0,+\infty)}(\lambda)\right)\Big(\psi(\lambda)-\psi(0)\Big) d\lambda
-\frac{s^{3/2}T^{1/3}}{2\pi }\int_0^{\lambda_0}\psi(\lambda)d\lambda
.
\end{multline*}
The first term at the right hand side {is exponentially small, as it can be verified remarking that $\int_{-\infty}^\infty (\sigma(u)-{\bf 1}_{(0,+\infty)}(u)) du = 0$ and using \eqref{ineq:sigma}}. The second summand can be estimated by \eqref{ineq:sigma} and \eqref{eq:psiest2} as
\begin{align*}
&-\frac{s^{3/2}T^{1/3}}{2\pi}\int_{-\infty}^{\lambda_0}\left(\sigma(sT^{1/3}\lambda)-{\bf 1}_{(0,+\infty)}(\lambda)\right)(\psi(\lambda)-\psi(0)) d\lambda\\
&=
\frac{s^{3/2}T^{1/3}}{2\pi}\int_{-\infty}^{\lambda_0}\left(\sigma(sT^{1/3}\lambda)-{\bf 1}_{(0,+\infty)}(\lambda)\right)\\
&\qquad \times \ \left(
2\sqrt{\lambda_0-\lambda}-2\sqrt{\lambda_0}+\mathcal O(s^{-1/2}T^{1/3}\lambda)+\mathcal O_{\rm sym}\left(s^{-1/2}T^{1/3}\log\left(2+ sT^{1/3}|\lambda|)\right)\right)\right) d\lambda \\
&={-\frac{\pi}{12 s^{1/2}T^{1/3}}}+\mathcal O(s^{-1})
\end{align*}
as $s\to\infty$ for fixed $T$.
For this, we used the antisymmetry of $\sigma(sT^{1/3}\lambda)-{\bf 1}_{(0,+\infty)}(\lambda)$ and the symmetry of the last error term, and the fact that 
 $\lambda_0=1+\mathcal O(s^{-1/2}\log s)$ as $s\to\infty$ for fixed $T$.
It remains to compute
$\ds-\frac{s^{3/2}T^{1/3}}{2\pi }\int_0^{\lambda_0}\psi(\lambda) d\lambda$.
By \eqref{exprpsi}, we have
\[-\frac{s^{3/2}T^{1/3}}{2\pi}\int_0^{\lambda_0}\psi(\lambda) d\lambda=-\frac{2s^{3/2}T^{1/3}}{3\pi }\lambda_0^{3/2}-\frac{sT^{2/3}}{2\pi^2} \int_{-\infty}^{\lambda_0}\frac{\sigma(sT^{1/3}\xi)}{\sqrt{\lambda_0-\xi}}\left(P.V.\int_0^{\lambda_0}\frac{\sqrt{\lambda_0-\lambda}}{\xi-\lambda} d\lambda\right)d\xi.\]
The Cauchy principal value integral appearing in this expression can be computed explicitly as
\[P.V.\int_0^{\lambda_0}\frac{\sqrt{\lambda_0-\lambda}}{\xi-\lambda}d\lambda=2\sqrt{\lambda_0}
-2\sqrt{\lambda_0-\xi}
\log\left(\sqrt{\lambda_0-\xi}+\sqrt{\lambda_0}\right)
+\sqrt{\lambda_0-\xi}\log \xi.\]
Substituting this, we obtain
\begin{multline*}-\frac{s^{3/2}T^{1/3}}{2\pi}\int_0^{\lambda_0}\psi(\lambda) d\lambda=-\frac{2s^{3/2}T^{1/3}}{3\pi }\lambda_0^{3/2}-\frac{sT^{2/3}\sqrt{\lambda_0}}{\pi^2}\int_{-\infty}^{\lambda_0}
\frac{\sigma(sT^{1/3}\xi)}{\sqrt{\lambda_0-\xi}}
d\xi\\
+\frac{sT^{2/3}}{\pi^2}
 \int_{-\infty}^{\lambda_0}\sigma(sT^{1/3}\xi)
 \log\left(\frac{\sqrt{\lambda_0-\xi}+\sqrt{\lambda_0}}{\sqrt{\xi}}\right)
d\xi.
\end{multline*}
Using Lemma \ref{lemma:stepapprox} as before and then computing the remaining integrals explicitly, we get
\[-\frac{s^{3/2}T^{1/3}}{2\pi}\int_0^{\lambda_0}\psi(\lambda) d\lambda=-\frac{2s^{3/2}T^{1/3}}{3\pi }\lambda_0^{3/2}
-\frac{sT^{2/3}\lambda_0}{\pi^2}
+\mathcal O(s^{-1}\log s)\]
as $s\to \infty$.
Now we use Proposition \ref{prop:asendpoint}, which easily yields the result.
\end{proof}

Summing up the different contributions, we arrive at the main result of this subsection.
\begin{proposition}\label{prop:logders2}
As $s\to\infty$, we have the following asymptotics for any fixed $T>0$:
\begin{multline}
\partial_s\log Q(s,T)=
-\frac{2T^{4/3}}{3\pi^4 }\left(\sqrt{1+\pi^2sT^{-2/3}}-1\right)^3
{-\frac{T^{4/3}}{\pi^4 }\left(\sqrt{1+\pi^2sT^{-2/3}}-1\right)^2}\\
{-\frac{s^{-1/2}T^{-1/3}\pi}{12}}
+\mathcal O(s^{-1}\log s).\label{diffidtotal2b}
\end{multline}
\end{proposition}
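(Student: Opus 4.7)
The plan is to simply assemble the three asymptotic estimates established in Propositions/Lemmas \ref{prop:logders}, \ref{lemma:oscillints}, \ref{lemma:Airyterms}, and \ref{lemma:mainterms}. Proposition \ref{prop:logders} decomposes $\partial_s \log Q(s,T)$ as a sum of three integrals (over $(-\infty,\lambda_0-\epsilon]$, $(\lambda_0-\epsilon,\lambda_0+\epsilon)$, and $[\lambda_0+\epsilon,\infty)$, with the third one already absorbed into the $\mathcal O(s^{-1})$ error via \eqref{asS1} and \eqref{ineq1}), plus an error $\mathcal O(s^{-1})$. The three pieces are respectively (i) a principal integral involving $\psi$, (ii) an oscillatory integral, and (iii) an integral involving the Airy kernel.

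First I would invoke Lemma \ref{lemma:oscillints} to replace the oscillatory integral by $\mathcal O(s^{-3/2})$, which is absorbed in the $\mathcal O(s^{-1})$ error. Next I would invoke Lemma \ref{lemma:Airyterms} to replace the Airy-kernel integral by $\mathcal O(s^{-5/4})$, again absorbed. The only surviving contribution is the $\psi$-integral, and Lemma \ref{lemma:mainterms} evaluates it explicitly as
\[
-\frac{2T^{4/3}}{3\pi^4}\left(\sqrt{1+\pi^2 s T^{-2/3}}-1\right)^3-\frac{T^{4/3}}{\pi^4}\left(\sqrt{1+\pi^2 s T^{-2/3}}-1\right)^2-\frac{\pi}{12\,s^{1/2}T^{1/3}}+\mathcal O(s^{-1}\log s).
\]

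Combining these three estimates (adding them termwise and noting that $\mathcal O(s^{-1})$, $\mathcal O(s^{-5/4})$ and $\mathcal O(s^{-3/2})$ are all dominated by $\mathcal O(s^{-1}\log s)$ for large $s$) yields \eqref{diffidtotal2b}, which proves Proposition \ref{prop:logders2}.

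Since every ingredient has already been proven, there is no real obstacle; the only thing to double-check is the bookkeeping of error terms — in particular, that the $\mathcal O(s^{-1}\log s)$ coming from Lemma \ref{lemma:mainterms} dominates all the others uniformly, and that one is indeed allowed to replace the upper limit $\lambda_0-\epsilon$ by $\lambda_0$ in the main $\psi$-integral. The latter is harmless because, by the estimate $\psi(\lambda)\geq 2\sqrt{\lambda_0-\lambda}$ established below \eqref{exprpsi2}, the missing piece $-\frac{s^{3/2}T^{1/3}}{2\pi}\int_{\lambda_0-\epsilon}^{\lambda_0}\sigma(sT^{1/3}\lambda)\psi(\lambda)d\lambda$ is of order $\mathcal O(s^{3/2})\cdot \mathcal O(\epsilon^{3/2})$ which, up to the analogous contribution contained in Lemma \ref{lemma:mainterms}, cancels out when reinstating the integration on the whole $(-\infty,\lambda_0]$ as done in the statement of Proposition \ref{prop:logders}.
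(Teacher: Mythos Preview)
Your proposal is correct and follows exactly the paper's approach: the paper's proof of Proposition~\ref{prop:logders2} is the one-line ``Summing up the different contributions, we arrive at the main result of this subsection,'' i.e.\ add the estimates of Lemmas~\ref{lemma:oscillints}, \ref{lemma:Airyterms}, and \ref{lemma:mainterms} into the decomposition of Proposition~\ref{prop:logders}. Your last paragraph's worry about replacing the upper limit $\lambda_0-\epsilon$ by $\lambda_0$ is unnecessary (and the $\mathcal O(s^{3/2}\epsilon^{3/2})$ remark is misleading as stated): that rearrangement is already carried out inside the proof of Proposition~\ref{prop:logders}, whose statement \eqref{diffidtotals} displays the $\psi$-integral over the full interval $(-\infty,\lambda_0]$.
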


\subsection{Integration of the differential identity}

In order to derive asymptotics for $\log Q(s,T)$ as $s\to\infty$ which are uniform in $M^{-1}\leq T\leq Ms^{3/2}$, we integrate the asymptotics for the logarithmic derivatives obtained in the previous section. 
To do this, we note that 
\begin{equation}\label{integral}
\log Q(s_0,T_0)=\log Q(M',M^{-1})
+\int_{M^{-1}}^{T_0}\partial_T\log Q(s_0,T)dT+
\int_{M'}^{s_0}\partial_s\log Q(s,M^{-1})ds.
\end{equation}
The first term is independent of $s$ and $T$, and 
if we take $M'$ large enough, $s$ is large such that we can substitute the asymptotics for $\partial_T\log Q(s_0,T)$ (since $M^{-1}\leq T_0\leq Ms_0^{3/2}$ implies that $M^{-1}\leq T\leq Ms_0^{3/2}$ for any $T\in[M^{-1},T_0]$), and the asymptotics for $\partial_s\log Q(s,M^{-1})$ (since $M^{-1}\leq Ms^{3/2}$ for any $s\in[M',s_0]$ with $M'$ large enough).

Recall $\phi(y)$ defined in \eqref{def:phi}. It is straightforward to verify that
\begin{equation}
-\partial_T\left(T^2\phi(sT^{-2/3})\right)=
-\frac{4T}{45\pi^6}\left(\sqrt{1+\pi^2sT^{-2/3}}-1\right)^5
{-\frac{T}{9\pi^6 }\left(\sqrt{1+\pi^2sT^{-2/3}}-1\right)^4}
\end{equation}
and that
\begin{equation}
-\partial_s\left(T^2\phi(sT^{-2/3})\right)=
-\frac{2T^{4/3}}{3\pi^4 }\left(\sqrt{1+\pi^2sT^{-2/3}}-1\right)^3\\
{-\frac{T^{4/3}}{\pi^4 }\left(\sqrt{1+\pi^2sT^{-2/3}}-1\right)^2.}
\end{equation}
By Proposition \ref{prop:logder2} and Proposition \ref{prop:logders2}, we have
\begin{align*}
&\partial_T\log Q(s,T)=-\partial_T T^2\phi(sT^{-2/3})+ { \frac{\sqrt{1+\pi^2sT^{-2/3}}}{18 T}}
+\mathcal O(T^{-1}\log s)+\mathcal O(T^{-2/3}),\\
&\partial_s\log Q(s,T)=-\partial_s T^2\phi(sT^{-2/3}){-\frac{s^{-1/2}T^{-1/3}\pi}{12}}+\mathcal O(s^{-1}\log s),
\end{align*}
as $s\to\infty$ uniformly in $T$.
Using this, we obtain that
\begin{multline}
\int_{M^{-1}}^{T_0}\partial_T\log Q(s_0,T)dT=-T^2_0\phi(s_0T_0^{-3/2})+M^{-2}\phi(s_0M^{3/2})\\
{-\frac{1}{6}}\sqrt{1+\frac{\pi^2s_0}{T_0^{2/3}}}+{\frac{1}{6}}\sqrt{1+\frac{\pi^2s_0}{M^{-2/3}}}
+\mathcal O(\log s_0 \log T_0)+\mathcal O(T_0^{1/3}),
\end{multline}
and that
\begin{equation}
\int_{M'}^{s_0}\partial_s\log Q(s,M^{-1})ds=-M^{-2}\phi(s_0M^{3/2})+M^{-2}\phi(M'M^{2/3})
{-\frac{\pi\sqrt{s_0}}{6 M^{-1/3}}} +\mathcal O(\log^2 s_0).
\end{equation}
Substituting this into \eqref{integral}, we get
\begin{equation}
\log Q(s_0,T_0)=-T_0^2\phi(s_0T_0^{-3/2}){-\frac{1}6}\sqrt{1+\frac{\pi^2s_0}{T_0^{2/3}}} +\mathcal O(\log^2 s_0)+\mathcal O(T_0^{1/3})
\end{equation}
as $s_0\to\infty$, uniformly in $M^{-1}\leq T_0\leq Ms_0^{3/2}$. This completes the proof of Theorem \ref{theorem:main}.

\paragraph{Acknowledgements.}\hfill{}\\
M.C. was supported by the European Union Horizon 2020 research and innovation program under the Marie Sk\l odowska-Curie RISE 2017 grant agreement no. 778010 IPaDEGAN. 
T.C. was supported by the Fonds de la Recherche Scientifique-FNRS under EOS project O013018F. The authors are also grateful to D. Betea, T. Bothner and J. Bouttier for fruitful discussions.

\end{document}